\title{Exploring Facets of Language Generation in the Limit}
\author{Moses Charikar\thanks{Stanford University. Email: \texttt{moses@cs.stanford.edu.}}
\and
Chirag Pabbaraju\thanks{Stanford University. Email: \texttt{cpabbara@cs.stanford.edu.}}}
\date{\today}
\begin{document}

\maketitle
\begin{abstract}

    The recent work of %
    \citet{kleinberg2024language} provides a concrete model for language generation in the limit: given a sequence of examples from an unknown target language, the goal is to generate new examples from the target language such that no incorrect examples are generated beyond some point.
    In sharp contrast to strong negative results for the closely related problem of language identification, 
    they establish positive results for language generation in the limit for all countable collections of languages. Follow-up work by %
    \citet{raman2024generation} 
    studies bounds on the number of distinct inputs required by an algorithm before correct language generation is achieved --- namely, whether this is a constant for all languages in the collection (uniform generation) or a language-dependent constant (non-uniform generation).

    We show that every countable language collection has a generator which has the stronger property of non-uniform generation in the limit. However, while the generation algorithm of \citep{kleinberg2024language} can be implemented using membership queries, we show that any algorithm cannot non-uniformly generate 
    even for collections of just two languages,
    using only membership queries.

    We also formalize the tension between validity and breadth in the generation algorithm of \citep{kleinberg2024language} by introducing a definition of \textit{exhaustive} generation, and show a strong negative result for exhaustive generation. Our result shows that a tradeoff between validity and breadth is inherent for generation in the limit. We also provide a precise characterization of the language collections for which exhaustive generation is possible. Finally, inspired by algorithms that can choose to obtain feedback, we consider a model of uniform generation with feedback, completely characterizing language collections for which such uniform generation with feedback is possible in terms of an abstract complexity measure of the collection.
\end{abstract}
\newpage
\section{Introduction}
\label{sec:intro}

Consider the following algorithmic problem: given as input an infinite stream of strings from an unknown target language (one of a known collection of languages), learn to generate new and previously unseen strings also belonging to this target language in the limit. This problem was recently formalized in the work of \citet{kleinberg2024language} with a view to synthesize the core problem at the heart of large language models. The same problem, but with the goal of \textit{identifying} the target language in the collection instead of simply generating from it, has been extensively studied in classical work on language identification in the limit \citep{gold1967language, angluin1979finding, angluin1980inductive}. In fact, we have a precise characterization \citep{angluin1979finding, angluin1980inductive} of the collections of languages for which this problem is tractable. As it turns out, for essentially any interesting collection of formal languages (any collection at all that is not finite, e.g., regular, context-free, context-sensitive), the language identification problem is intractable, not just in the amount of computational power that an algorithm might require, but in a strict computability sense---any algorithm can be fed inputs consistent with a target language from the collection in such a way that the algorithm makes an incorrect guess of the target language infinitely often.

Given such strong negative results for language identification, 
\citet{kleinberg2024language} showed a remarkable positive result for language \textit{generation}. 
They showed that language generation is
tractable for \textit{every} countable collection of languages! Their work gives a simple and elegant algorithm, which, given any stream of input strings from any target language $K$ in a countable collection $\mcC=\{L_1,L_2,\ldots\}$, generates a sequence of previously unseen strings such that beyond a finite time step $t$, the generated strings all belong to the unknown language $K$.
Furthermore, they show that this algorithm can be implemented with only membership query oracle access to the collection $\mcC$. Namely, the algorithm simply needs to be able to ask queries of the form ``does $w$ belong to $L_i$?'' for any string $w$ in the universe and language $L_i \in \mcC$. %

Given that this is possible, it is natural to ask for more. %
How quickly can we hope to achieve this guarantee of generation in the limit?
In fact, the work of \citet{kleinberg2024language} also shows an algorithm that achieves a stronger guarantee for generation in the limit from \textit{finite} collections of languages: as soon as the algorithm sees $t^*$ many distinct strings,
where $t^*$ depends neither on the target language nor its enumeration order, but is only a function of the collection $\mcC$, the algorithm correctly generates from the target language. This algorithm is quite different from their all-purpose algorithm which achieves generation in the limit more generally for all countable collections. Inspired by this, the very recent work of \citet{raman2024generation} seeks to obtain a precise characterization of the collections of languages (beyond just finite collections) for which such guarantees can be obtained. Adopting their terminology, the collections for which $t^*$ may depend only on the target language, but not on its enumeration order, are said to be \textit{non-uniformly generatable}, whereas the collections for which $t^*$ is a function of the language collection, and depends neither on the target language nor its enumeration order, are said to be \textit{uniformly generatable}. \citet{raman2024generation} show that the collections that are uniformly generatable are exactly those that have a bounded complexity measure, termed as the \textit{closure dimension}. They leave the characterization of collections that are non-uniformly generatable largely open.

A different line of inquiry, also motivated from the work of \citet{kleinberg2024language}, stems from a tradeoff between \textit{validity} and \textit{breadth} which their (general-purpose) generation algorithm has to incur. Concretely, the algorithm starts off generating invalid strings, then steadily refines its hallucinations, before eventually settling to generate from an increasingly small subset of the language. Thus, en route to becoming a valid generator, it appears that the algorithm has to sacrifice on generating the entire bulk of the target language. This phenomenon also notoriously shows up in the form of \textit{mode collapse} while training generative adversarial networks \citep{arjovsky2017towards, arjovsky2017wasserstein}. As an open direction in their work, \citet{kleinberg2024language} asked if such a tradeoff is provably necessary for achieving generation in the limit. 

\subsection{Overview of Results}
\label{sec:overview}

As our first result, we show that it is possible to \textit{non-uniformly} generate in the limit from every countable language collection $\mcC=\{L_1,L_2,\ldots\}$. We give a simple algorithm (\Cref{sec:non-uniform-generation-ub}), which has the property that it generates a valid, unseen string from the target language being enumerated as input, as soon as it sees a fixed constant number of distinct strings in the enumeration, where this constant depends \textit{only} on the target language and the collection, but \textit{not} the enumeration order. We note that \citet{kleinberg2024language}'s algorithm for generation in the limit does not satisfy this latter property---namely, for adversarial enumerations of the target language, their algorithm could take arbitrarily long before starting to validly generate. Thus, in addition to being a different algorithm than theirs for generation in the limit, our algorithm satisfies a stronger property.

\begin{theorem}[Non-uniform Generation for Countable Collections]
    \label{thm:non-uniform-generation-ub-overview}
    There exists an algorithm that non-uniformly generates in the limit from every countable collection of languages.
\end{theorem}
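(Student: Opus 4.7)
The plan is to construct an explicit generation algorithm $A$ and show it satisfies the non-uniform guarantee. At each step $t$, with observed set $S_t$ of distinct strings, $A$ commits to a guess language $L^{\ast}_t \in \mcC$ and outputs an unseen string from $L^{\ast}_t \setminus S_t$, using only the enumeration of $\mcC$ and membership queries.

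First I would define $L^{\ast}_t$ via a priority scheme restricted to the finite sub-collection $\mcC_t = \{L_1, \ldots, L_t\}$ consisting of the first $t$ languages. Among the ``candidates'' in $\mcC_t$ that are consistent with $S_t$ (i.e., $L_i \supseteq S_t$) and still have unseen elements, $L^{\ast}_t$ is chosen to be inclusion-minimal (no other candidate is a strict subset), with ties broken by smallest index. The algorithm outputs, say, the lexicographically smallest element of $L^{\ast}_t \setminus S_t$.

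Next I would prove correctness by fixing a target $K = L_{i^{\ast}}$ and exhibiting a finite threshold $t^{\ast}(K)$ depending only on $K$ and $\mcC$ (not on the enumeration order of $K$'s strings). The main observation driving the proof is: once $t \geq i^{\ast}$, the target $K$ itself lies in $\mcC_t$ and is a candidate. This immediately rules out the case $L^{\ast}_t \supsetneq K$, because $K$ would inclusion-dominate $L^{\ast}_t$ among candidates. The remaining possibilities are $L^{\ast}_t \subseteq K$ (which gives a valid output since $L^{\ast}_t \setminus S_t \subseteq K$) and $L^{\ast}_t$ incomparable to $K$ (the obstacle to overcome).

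The hard part will be bounding the incomparable case. For each $L_j$ incomparable to $K$, I would analyze the adversary's best strategy: such an $L_j$ can remain a candidate only while $S_t \subseteq K \cap L_j$. If $K \cap L_j$ is finite, the adversary can fool the algorithm for at most $|K \cap L_j|$ steps; past this, $L_j$ is no longer consistent. The delicate case is $K \cap L_j$ infinite, where the adversary can keep $L_j$ alive arbitrarily long by enumerating $S_t \subseteq K \cap L_j$. Here one must argue that the priority scheme eventually eliminates $L_j$---for instance, by showing that some language in $\mcC_t$ strictly contained in $L_j$ (and still consistent) inclusion-dominates it, or by refining the algorithm to incorporate a ``witness'' style argument using membership queries against $L_j \setminus K$. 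The crux of the proof is verifying that a single finite threshold $t^{\ast}(K)$---determined purely by $K$ and the structure of $\mcC$---simultaneously handles every such dangerous $L_j$, uniformly across all adversarial enumerations of $K$.
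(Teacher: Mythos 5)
Your approach of committing to a single inclusion-minimal candidate $L^{\ast}_t$ has a gap that your own analysis identifies but never closes. If some $L_j$ with $j < i^{\ast}$ is incomparable to $K$ and $K \cap L_j$ is infinite, the adversary can enumerate $K$ so that $S_t \subseteq K \cap L_j$ for arbitrarily many steps, making $|S_t|$ arbitrarily large while $L_j$ remains consistent; since the non-uniform guarantee quantifies over $|S_t|$ and must hold for every enumeration order, no finite threshold $t^{\ast}(K)$ can exist for your scheme. If $L_j$ is also inclusion-minimal (which can easily happen), your smallest-index tiebreak selects $L_j$, and the string output from $L_j \setminus S_t$ need not lie in $K$. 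Concretely, take $L_1 = \{-1, 2, 4, 6, \ldots\}$ and $K = L_2 = \{1,2,3,\ldots\}$, with the adversary enumerating $2, 4, 6, \ldots$: both are inclusion-minimal candidates at every step, your rule picks $L_1$, and $-1 \in L_1 \setminus S_t$ is never in $K$. Neither sketched fix closes this: there need not be any $L_i \in \mcC_t$ strictly contained in $L_j$ (here there is none), and a ``witness'' query against $L_j \setminus K$ presupposes knowledge of $K$, which the algorithm does not have.

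The paper sidesteps the problem by never committing to a single language. Its algorithm maintains a running intersection $I_t$, scanning $L_1,\dots,L_t$ and replacing $I_t$ by $I_t \cap L_i$ for each consistent $L_i$, but \emph{only when the resulting intersection stays infinite}. The crucial lemma (formalized via the non-uniform complexity $m_\mcC(L_{i^{\ast}})$) is that once $|S_t| > m_\mcC(L_{i^{\ast}})$, the running intersection with $L_{i^{\ast}}$ must itself be infinite: otherwise it would be a finite intersection of languages from $\{L_1,\dots,L_{i^{\ast}}\}$, all containing $L_{i^{\ast}}$, of size at least $|S_t|$, contradicting the definition of $m_\mcC(L_{i^{\ast}})$. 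Hence $L_{i^{\ast}}$ is always absorbed into $I_t$, forcing $I_t \subseteq L_{i^{\ast}}$, and the output is valid. To salvage your proposal, this is the missing ingredient: replace ``pick one inclusion-minimal candidate'' by ``greedily intersect consistent languages, declining any that would make the intersection finite,'' and set the threshold from the largest finite intersection of a subcollection of $\{L_1,\dots,L_{i^{\ast}}\}$ that contains $L_{i^{\ast}}$.
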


\Cref{thm:non-uniform-generation-ub-overview} also answers an open question raised by \citet{raman2024generation}, who asked if every countable collection can be non-uniformly generated in the limit, in the affirmative. The language-dependent constant for the number of distinct strings that our non-uniform generation algorithm must see is formalized in terms of a \textit{non-uniform complexity} measure of the language (see \Cref{def:m(L)} and \Cref{thm:non-uniform-ub}).

However, despite not being a non-uniform generation algorithm, the algorithm of \citet{kleinberg2024language} has the attractive property that it can be implemented using only \textit{membership queries}. Namely, their generation algorithm can be implemented given access to an oracle that can answer queries of the form ``$w \in L_i$?'' for any string $w$ and language $L_i \in \mcC$ of the algorithm's choosing. Given this, one may ask if our non-uniform generation algorithm from above can also be implemented with sole access to such a membership query oracle. Our next result provides a strong negative answer to this question.

\begin{theorem}[Non-uniform Generation Membership Query Lower Bound]
    Any algorithm that non-uniformly generates from all countable collections cannot be solely implemented with membership queries.
\end{theorem}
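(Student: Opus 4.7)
The plan is to show that for any algorithm $A$ accessing its collection only through membership queries, one can adversarially construct a specific two-language collection $\mcC = \{L_1, L_2\}$ together with an enumeration of $L_1$ on which $A$ outputs either a seen or invalid string at \emph{every} step. This forces $t^*(L_1)=\infty$ for $A$ on $\mcC$, which contradicts non-uniform generation and proves the theorem (in fact in the stronger two-language form promised in the abstract).

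The construction I will use is online. Take the universe to be $\mathbb{N}$ and fix $L_2 = \mathbb{N}$, so the target-$L_2$ case is trivially satisfied by outputting any unseen string. The real work is building $L_1$ against $A$. The adversary maintains a finite forbidden set $F \subseteq \mathbb{N}$ of strings committed to lie outside $L_1$, and at each step $t$ performs: (i) pick a fresh $w_t \in \mathbb{N} \setminus (F \cup \{w_1,\ldots,w_{t-1}\})$, commit $w_t \in L_1$, and append $w_t$ to the enumeration; (ii) pass $w_t$ to $A$ and answer all its membership queries, replying ``yes'' to every query against $L_2$ and, for each query ``$w \in L_1$?'', ``yes'' if $w \in \{w_1,\ldots,w_t\}$ and otherwise ``no'' while adding $w$ to $F$; (iii) observe $A$'s output $g_t$ and, if $g_t \notin \{w_1,\ldots,w_t\}$, add $g_t$ to $F$.

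The key invariant I will verify is that after step $t$, the strings committed to lie in $L_1$ are exactly the seen set $S_t := \{w_1,\ldots,w_t\}$, so defining $L_1 := \{w_1, w_2, \ldots\}$ makes every query answer honest for the fixed collection $\{L_1, L_2\}$, and the sequence $w_1 w_2 \cdots$ really is a valid enumeration of $L_1$. From the invariant, at each step $t$ the output $g_t$ is either in $S_t$ (not unseen) or has been placed into $F$ (and thus not in $L_1$), so $g_t \notin L_1 \setminus S_t$. Hence $A$ fails to generate a valid unseen string at every step, giving the desired contradiction.

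The main obstacle I foresee is convincing oneself that this online, $A$-dependent construction genuinely produces a fixed collection and a valid infinite enumeration of $L_1$. This will reduce to two elementary finiteness facts: $F$ grows by only finitely many strings per step (queries plus one output), so a fresh $w_t \in \mathbb{N} \setminus (F \cup S_{t-1})$ always exists; and the chosen answers are consistent with the limiting $L_1 = \{w_1, w_2, \ldots\}$ by the invariant above. The conceptual content is that membership queries reveal only finitely many characteristic-function bits of $L_1$ at any finite time, so the adversary can indefinitely postpone committing any ``hidden'' members of $L_1$ and can expel any guess of an unseen member on the fly; this is precisely the capability that our non-uniform generator in \Cref{thm:non-uniform-generation-ub-overview} must use beyond membership queries to succeed on all countable collections.
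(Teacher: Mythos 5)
Your construction handles the case where $\mcA$ halts at every step cleanly: the ``expel the guess into $F$'' move does force a failure at every round, the consistency of the oracle answers with the limiting $L_1=\{w_1,w_2,\ldots\}$ is verifiable, and the simplification of taking $L_2=\Sigma^*$ removes a lot of bookkeeping relative to the paper's symmetric two-language construction. But there is a genuine gap: nothing in your argument rules out $\mcA$ looping forever at some step $t$, and you explicitly assume it does not (``$F$ grows by only finitely many strings per step''). This matters because the required contradiction is of the form ``there exists a valid collection $\mcC$ and enumeration on which $\mcA$ misbehaves,'' and in the infinite-loop case your answers may fail to be consistent with \emph{any} valid collection: if $\mcA$ queries a co-finite set of strings against $L_1$ during the loop, every one of them is pushed into $F$, the only $L_1$ consistent with the transcript is the finite set $S_t$, and so the collection you would need to exhibit simply does not exist. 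You cannot then invoke the non-uniform guarantee to say $\mcA$ must have halted.

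The paper's proof avoids exactly this trap by placing each newly queried string into \emph{one} of the two languages (alternating), so that both $L_0$ and $L_1$ keep growing during a hypothetical infinite loop; the loop transcript is then consistent with a legitimate collection, and one can declare $L_0$ the target to conclude $\mcA$ must in fact halt. Your one-sided construction, where ``no'' answers only ever shrink $L_1$ and nothing absorbs the rejected strings, cannot be patched by a small tweak: if the adversary ever answers ``yes'' to a query about a new string to keep $L_1$ infinite, that string becomes a valid unseen output for $\mcA$ and the attack collapses. So you either need to add the termination argument with a two-language absorption structure like the paper's, or restrict to the sub-case where the loop makes only finitely many distinct queries (where the complement of $F_\infty$ remains infinite and you do get a contradiction) and handle the remaining sub-case separately.
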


Our lower bound in fact shows that a generation algorithm that only issues membership queries cannot even non-uniformly generate from all \textit{finite} (even size $2$!) collections. This result shows that non-uniform generation provably requires more computational power in the form of different and stronger oracles. We note that our result does not contradict the closure-based uniform generation algorithm of \citet{kleinberg2024language}, which can be implemented using membership queries. The key detail is that their algorithm starts off with a small piece of additional information about the collection; perhaps surprisingly, we find that this information is entirely crucial (see \Cref{thm:membership-query-lb} and the discussion at the end of \Cref{sec:mem-query-lb}).

Our next result addresses the open question in \citep{kleinberg2024language} regarding the tradeoff between validity and breadth in generation.  Towards this, we propose a definition of \textit{exhaustive} generation in the limit. This definition still requires an algorithm to eventually always generate from the target language (validity). However, there is an additional requirement concerning breadth of generation: beyond some finite time, it ought to be possible to \textit{terminate} the input entirely and stop looking at additional examples. If the algorithm is now asked to continue generating strings indefinitely, it should be the case that the set of strings it would go on to generate, when combined with the input seen so far and the strings it previously generated, cover the entirety of the target language. We formalize these requirements in \Cref{def:exhaustive-generation}, and show a strong negative result for exhaustive generation.

\begin{restatable}[Exhaustive Generation Lower Bound]{theorem}{exhaustivegenerationlb}
    \label{thm:exhaustive-generation-lb}
    There exists a countable collection $\mcC$ (of regular languages) that cannot be exhaustively generated in the limit.
\end{restatable}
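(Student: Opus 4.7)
The plan is to construct a concrete collection of regular languages exhibiting a Gold-style finite-versus-infinite trap, and to run a staged adversary that forces any candidate exhaustive generator into contradictory commitments. Take $\Sigma = \{a\}$ and set $\mcC = \{L_\infty\} \cup \{L_n : n \geq 1\}$, where $L_n = \{a, a^2, \ldots, a^n\}$ and $L_\infty = a^*$; every language in $\mcC$ is regular. Suppose for contradiction that an algorithm $A$ exhaustively generates $\mcC$.

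Build an enumeration $E$ of $L_\infty$ in stages. In Stage $i$, first append the new string $a^i$ and then pad with copies of $a$ until the cumulative prefix length $t'_i$ is at least the exhaustive-generation threshold $t^*_i$ that $A$ must satisfy on target $L_i$ along the enumeration $E_i$ which agrees with $E$ up to position $t'_i$ and then continues with $a$'s forever. The sequence $E_i$ is a valid enumeration of $L_i$ since its content is exactly $L_i$ and each element of $L_i$ appears in it, so such a $t^*_i$ exists. The pivotal observation is that $A$'s state at time $t'_i$ is determined by the shared prefix, hence identical on $E$ and on $E_i$; consequently, the set of post-termination outputs $A$ would produce if the input were cut off at $t'_i$ is the same in both scenarios. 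By the exhaustive-generation property applied to target $L_i$ along $E_i$, this set must be contained in the finite set $L_i$.

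The enumeration $E$ constructed in this way enumerates all of $L_\infty$, since $a^i$ appears at Stage $i$ for every $i$. Applying the exhaustive-generation guarantee at target $L_\infty$ along $E$ yields a finite $T^*$ beyond which terminating the input yields post-termination outputs that, together with the inputs and prior outputs, cover $L_\infty$. Choose $i$ with $t'_i \geq T^*$. By time $t'_i$, the inputs seen lie in $L_i$ and at most $t'_i$ prior outputs have been produced; so any $k$ larger than both $i$ and $t'_i$ forces $a^k$ to appear in the post-termination outputs. Yet those outputs lie in $L_i$ and exclude every $a^k$ with $k > i$. This contradiction shows that no such $A$ exists.

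The main obstacle is the irreducible ambiguity in $A$'s reading of a single finite prefix: padded with repetitions, it appears as the start of an enumeration of $L_i$ and validity forces the post-termination outputs into $L_i$, but extended with fresh $a^j$'s it appears as the start of an enumeration of $L_\infty$ and coverage forces those same outputs to exhaust $L_\infty$. The staged construction ensures that both guarantees apply simultaneously to the same prefix, producing the contradiction. This is the classical Gold-style finite-plus-infinite trap, now repurposed to rule out exhaustive generation in the sense of \Cref{def:exhaustive-generation}.
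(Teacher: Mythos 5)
Your staged-adversary architecture—build a single enumeration $E$ whose prefixes are each extendable to enumerations of a shrinking chain of languages, so that the exhaustive-generation guarantees for those languages and for $L_\infty$ apply simultaneously to the same prefix—is exactly the strategy the paper uses, and the "pivotal observation" about the shared prefix determining $\mcA$'s state is the right one. But your specific construction is not a valid instance of the problem. The paper's standing assumption in the Preliminaries is that every language in the collection is infinite ($|L_i|=\infty$ for all $i$); this is essential for generation in the limit to even make sense, since otherwise the requirement $z_t \in K \setminus S_t$ becomes unsatisfiable once $S_t = K$. Your collection takes $L_n = \{a, a^2, \ldots, a^n\}$, each of which is finite, so $\mcC$ is disqualified at the outset. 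Put differently, any single finite $L_n$ already trivially fails to admit generation in the limit in the paper's model, so a "lower bound" obtained this way does not speak to the theorem being proved.

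The fix is to replace your finite $L_n$'s with infinite languages $L_i$ such that both $L_i$ and $L_\infty \setminus L_i$ are infinite—this is the precise structural property that makes your contradiction go through: the stage-$i$ guarantee forces $Z_{\ge t'_i}$ to lie in $L_i$ up to a finite set, while coverage of $L_\infty$ demands that the infinite set $L_\infty \setminus L_i$ be absorbed by the finite set $S_{t'_i} \cup Z_{< t'_i}$ plus finitely many exceptions. With $L_n$ finite as you have it, $Z_{\ge t'_i}$ being eventually trapped in $L_i$ forces $Z_{\ge t'_i}$ finite and the contradiction is of a different (and disallowed) character. The paper's construction achieves the needed structure by working over $\Z$: it takes $L_\infty = \Z$ and $L_i = \{-i, -i+1, -i+2, \ldots\}$, so each $L_i$ is infinite, $L_\infty \setminus L_i$ is infinite, and the nested chain $L_0 \subsetneq L_1 \subsetneq L_2 \subsetneq \cdots$ supports the same staged enumeration you describe. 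One further small point: under the paper's current \Cref{def:exhaustive-generation}, the stage-$i$ guarantee is $|Z_{\ge t'_i} \setminus L_i| < \infty$ rather than $Z_{\ge t'_i} \subseteq L_i$; your phrase "those outputs lie in $L_i$" should be adjusted to account for this finite slack, as the paper does in its final set-theoretic manipulation.
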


\Cref{thm:exhaustive-generation-lb} provides a concrete answer to the open question asked by \citet{kleinberg2024language}, showing that the validity-breadth tradeoff is necessary in a formal sense. 
Our result thus adds to evidence in the literature suggesting that language models with desirable properties must necessarily hallucinate \citep*{kalai2024calibrated, xu2024hallucination, banerjee2024llms, wu2024no}.

Our next contribution is a \textit{precise characterization} of the collections of languages that allow for exhaustive generation, similar to that given by \cite{angluin1980inductive} for identifiability. Angluin introduced two conditions (properties of language collections) in her work, which we denote as ``Angluin's Condition with Enumeration'' (see \eqref{condition:angluin-condition-with-enumeration}) and ``Angluin's Condition with Existence'' (see \eqref{condition:angluin-condition-with-existence}). The condition with enumeration is both a sufficient and necessary condition for identifiability, while the condition with existence is only a necessary condition. These conditions posit the existence of certain \textit{tell-tale sets} $T_i$ for each language $L_i$ in the collection---informally, the set $T_i$ allows an algorithm to distinguish the case when $L_i$ is the target language from cases where the target language is some other language $L_j$.

We show that a \textit{weaker} version of Angluin's Condition with Existence, which we denote as ``Weak Angluin's Condition with Existence'' (see \eqref{condition:weak-angluin-condition-with-existence}), is both necessary and sufficient for exhaustive generation.

\begin{restatable}[Exhaustive Generation Characterization]{theorem}{exhaustivegenerationcharacterization}
    \label{thm:exhaustive-generation-characterization}
    A countable language collection $\mcC$ can be exhaustively generated if and only if it satisfies Weak Angluin's Condition with Existence.
\end{restatable}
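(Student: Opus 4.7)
The proof will split into two directions, following the usual pattern of Angluin-style characterizations.

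\textbf{Sufficiency (Weak Angluin's Condition with Existence implies exhaustive generatability).} I would construct an explicit exhaustive generator parametrized by the tell-tale sets $T_i \subseteq L_i$ promised by the condition. At each time step $t$ with input $S_t$, the generator maintains the set $D_t = \{L_i \in \mcC : S_t \subseteq L_i \text{ and } T_i \subseteq S_t\}$ of languages that are both consistent with the input and ``activated'' by having their tell-tale already revealed, and hypothesizes $\hat L_t \in D_t$ via a priority rule (for instance, smallest index, possibly with a round-robin across small-index candidates to ensure nothing is starved). It then emits the next previously unseen string from $\hat L_t$ in some canonical enumeration. The key verification is twofold. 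For validity, the Weak Condition applied to $\hat L_t$, together with $T_{\hat L_t} \subseteq S_t \subseteq K$, forces $\hat L_t$ into the correct set-theoretic relation with the target $K$ so that its strings are in $K$. For exhaustiveness, since $T_K$ is finite and $T_K \subseteq K$, the input enumerates every element of $T_K$ in finitely many steps, after which $K \in D_t$ permanently; the priority / round-robin rule then ensures $K$ is selected infinitely often and every string of $K$ is eventually emitted.

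\textbf{Necessity (exhaustive generatability implies Weak Angluin's Condition with Existence).} I would argue by contrapositive through an adversarial construction. Suppose the Weak Condition fails at some language $L^\star \in \mcC$, meaning that for every finite $T \subseteq L^\star$ there is a witness language $L_T \in \mcC$ violating the required clause. Fix any purported exhaustive generator $\mathcal A$. In stages, I build an enumeration by alternating two moves: (i) feed $\mathcal A$ a longer prefix drawn from $L^\star$ until it either emits a string or commits to stopping input; (ii) using the failure of the Weak Condition at $L^\star$, select a witness language that is indistinguishable from $L^\star$ on the current prefix yet forces $\mathcal A$'s recent behavior to be incorrect relative to this witness, either by having emitted a string outside the witness (violating validity) or by demonstrating a string of the witness that $\mathcal A$ will provably never produce (violating exhaustiveness). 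Diagonalizing across stages and carefully stabilizing the eventual target $K \in \mcC$ yields a single input/target pair on which $\mathcal A$ fails.

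\textbf{Main obstacle.} The principal difficulty is in the necessity direction. The diagonalization must simultaneously threaten validity \emph{and} exhaustiveness, and the choices of confusing witness languages across infinitely many stages must cohere so that the constructed input converges to an enumeration of a single language in $\mcC$, not drift through infinitely many. Because Weak Angluin's Condition is (by hypothesis) strictly weaker than Angluin's original Condition with Existence, the supply of witnesses produced by its negation is more subtle to exploit than in the identifiability setting, and extracting a uniform failure of $\mathcal A$ from this weaker non-condition is where the argument requires the most care. On the sufficiency side, the subtle point is ensuring that the priority/round-robin scheme does not leave any element of $K$ unemitted even when $D_t$ contains many spurious small-index languages before $T_K$ has been witnessed; an accounting argument tracking when each candidate first enters $D_t$ should suffice.
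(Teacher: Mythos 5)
Your two-direction decomposition and the general shape of both arguments are on the right track, but each direction has issues worth naming.

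\textbf{Sufficiency.} Your ``tell-tale activation'' algorithm (track $D_t = \{L_i : S_t \subseteq L_i,\, T_i \subseteq S_t\}$, pick the smallest-index member, enumerate it) is essentially the paper's algorithm for the \emph{stronger} Weak Angluin's Condition \emph{with Enumeration} (Proposition~\ref{prop:exhaustive-characterization-sufficient-membership}). For the Existence version (Proposition~\ref{prop:exhaustive-characterization-sufficient-function}), the tell-tale sets are merely guaranteed to exist and need not be enumerable, so the algorithm cannot test $T_i \subseteq S_t$; the paper instead adapts Kleinberg--Mullainathan's critical-language algorithm and augments the chosen critical language $L_{i_{t^\star}}$ (a subset of the target) with the finite sets $L_{i_j} \setminus L_{i_{t^\star}}$ for the other critical $j$'s, needing only a ``finiteness of $L_i\setminus L_j$'' oracle rather than access to the $T_i$'s. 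Your approach would prove the characterization as a statement about functions, but requires a stronger oracle. Your ``round-robin'' complication is a red herring: in generate-only mode $S_t$ is frozen, so $D_t$ is frozen and a single hypothesis $\hat L_t$ can simply be enumerated; the issue you are worried about (starvation) does not arise. Also, ``so that its strings are in $K$'' is not quite right --- the correct conclusion, after $\hat L_t$ stabilizes to a superset $L_g\supseteq K$ with $T_g\subseteq K$, is $|L_g\setminus K|<\infty$, matching condition~1 of \Cref{def:exhaustive-generation}.

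\textbf{Necessity.} The diagonalization idea is right, but the mechanism you describe is inverted, and the obstacle you flag (``making the input converge to an enumeration of a single language'') is exactly the one the paper solves that you leave unresolved. The adversary does \emph{not} ``force $\mathcal A$'s recent behavior to be incorrect relative to the witness.'' Rather, negation of the condition at $L^\star$ supplies at each stage a witness $L_i\subsetneq L^\star$ with $S_{t_{i-1}+1}\subseteq L_i$ and $|L^\star\setminus L_i|=\infty$; the adversary then feeds a valid enumeration of $L_i$ and \emph{waits for $\mathcal A$ to become correct for $L_i$} (namely $|Z_{\ge t}\setminus L_i|<\infty$), which is what commits $\mathcal A$ to a small hypothesis. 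Then the adversary inserts the smallest-index string of $L^\star$ (in a fixed reference ordering) not yet enumerated, uses the current prefix as the next tell-tale to get a fresh witness $L_{i+1}$, and repeats. Because a fresh string from the fixed ordering of $L^\star$ is injected at every stage, the limiting input is a valid enumeration of $L^\star$ --- this is how single-language convergence is guaranteed. The contradiction lands not ``relative to this witness'' but for the actual target $L^\star$: picking $j$ with $t_j\ge t^*(L^\star)$, exhaustiveness demands $S_{t_j}\cup Z_{<t_j}\cup Z_{\ge t_j}\supseteq L^\star$, while the stage-$j$ commitment gives $|Z_{\ge t_j}\setminus L_j|<\infty$, so the finite set $S_{t_j}\cup Z_{<t_j}$ must cover $(L^\star\setminus L_j)$ minus a finite set, which is infinite --- impossible. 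In particular, your alternative branch ``having emitted a string outside the witness (violating validity)'' is never invoked, and the phrase ``a string of the witness that $\mathcal A$ will never produce'' has it backwards: the strings $\mathcal A$ misses are in $L^\star\setminus L_j$, i.e.\ strings \emph{not} in the witness.
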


Lastly, we consider a setting of uniform language generation, where at each time step, the algorithm 
has the additional ability of asking an oracle whether any string of its choice belongs to the \textit{target} language that is being enumerated. This setting is inspired by one of the many \textit{informant} models considered in the classical work of \citet{gold1967language} for the problem of language identification, and seeks to model real-life learning scenarios where an algorithm is allowed to ask for feedback at every time step (e.g., as in RLHF \citep{christiano2017deep}). Note that this model is different from the membership query model---there, the algorithm can only ask the oracle queries regarding a language of its choice, but it crucially does not know which language is the target language. In particular, we note that in this model, even \textit{identification} becomes possible for a large class of countable collections (see Table 1 in \citet{gold1967language}). %
Furthermore, we know already from the result of \citet{kleinberg2024language} that all countable collections can be generated in the limit (without requiring any feedback), and from our \Cref{thm:non-uniform-generation-ub-overview} above, that they can also be non-uniformly generated in the limit (also, without any feedback). Thus, the interesting question here is to study whether collections that cannot be uniformly generated without feedback become uniformly generatable, once we allow feedback---as \Cref{eg:gf-but-no-gnf} illustrates, this is indeed true.

Similar to how \citet{raman2024generation} characterize uniformly generatable languages (without feedback) in terms of a property of the collection, we completely characterize (\Cref{sec:genfeed}) the collections of languages that can be uniformly generated in this feedback model, in terms of an abstract complexity measure which we term the Generation-Feedback ($\gf$) dimension.

\begin{theorem}[Uniform Generation with Feedback]
    \label{thm:genfeed}
    A collection $\mcC$ of languages can be uniformly generated with feedback if and only if its $\gf$ dimension is finite.
\end{theorem}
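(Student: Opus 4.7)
The plan is to establish the characterization via a dimension-based two-sided argument, analogous in spirit to the closure-dimension characterization of uniform generation of \citet{raman2024generation}, but adapted to the feedback setting. First I would pin down the $\gf$ dimension precisely: intuitively, it should be the largest depth of an adversary game tree in which, at each node, the adversary either reveals a new input string from the target or truthfully answers a membership query about the target posed by the algorithm, while still leaving at least two candidate languages in $\mcC$ consistent with the transcript so far---one of which does not contain the string the algorithm is about to output. The $\gf$ dimension thus measures the number of rounds an adversary can delay commitment to a single target while remaining consistent with some language in the collection.

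For the forward direction (finite $\gf$ dimension implies uniform generation with feedback), I would design an algorithm that plays the dual of this game. After seeing at most $d+1$ distinct inputs, where $d$ is the $\gf$ dimension, the algorithm interleaves its adaptive membership queries on the target with the incoming enumeration to progressively shrink the set of candidate languages consistent with the transcript. The finiteness of $d$ guarantees that the algorithm can always select a query (or use a newly arriving input) that strictly reduces the ``height'' of the surviving adversary subtree; after at most $d$ such reductions, the set of candidates is narrow enough that the algorithm can identify a previously unseen string common to all of them, which is safe to output. Because the bound $t^\ast$ depends only on $d$ and not on the target or enumeration order, the generator is uniform.

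For the reverse direction (infinite $\gf$ dimension implies no uniform generator with feedback), I would run a direct adversary argument. Given any candidate generator with claimed uniform bound $t^\ast$, since the $\gf$ dimension exceeds $t^\ast$, pick an adversary tree of depth $t^\ast+1$ and play along the branch dictated by the algorithm's feedback queries; the branch guarantees that every input revealed and every answer returned is realized by at least one common language $L \in \mcC$. At time $t^\ast$ the construction preserves a consistent target $L$ that disagrees with the algorithm's next output, so the algorithm is forced to produce an invalid string on the run that ultimately enumerates $L$. This contradicts the definition of uniform generation with feedback.

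The main obstacle will be pinning down the exact formulation of the $\gf$ dimension so that the forward and reverse arguments match up tightly. Unlike the static closure dimension, the $\gf$ dimension must encode the interactive, adaptive nature of feedback and therefore takes the shape of a Littlestone-style tree dimension with two different node types (adversary-chosen input reveals and algorithm-chosen query answers). Getting the quantifier alternation right---so that the algorithm's existential strategy in the upper bound is exactly dual to the adversary's in the lower bound, and so that the ``consistency with some language in $\mcC$'' invariant is preserved down a single branch---is the delicate part. Once the dimension is set up cleanly, I expect both the algorithmic construction and the adversary playback to proceed by a routine induction on the dimension.
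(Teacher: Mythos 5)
Your high-level plan is sound, but the dimension formulation you propose diverges substantially from the paper's, and that divergence changes how hard the forward direction is. The paper defines the $\gf$ dimension (Definition~\ref{def:gf-dimension}) directly in game-theoretic terms over interactive generator and adversary \emph{strategies}: it is the supremum over $d$ such that \emph{for every} generator strategy $G$ \emph{there exists} a language $K$ and an adversary strategy $A$ consistent with $K$ for which, at some round $r$ with $|S_r| \ge d$, the ``effective intersection'' $E_r(T)$---the intersection of all languages in $\mcC$ consistent with the transcript so far, minus the observed inputs $S_r$---is empty. That quantifier order is deliberately chosen so that its negation at level $d+1$ reads ``\emph{there exists} a generator strategy $G$ such that for every adversary, once $|S_r| \ge d+1$ we have $E_r(T) \ne \emptyset$.'' This makes the upper bound essentially definitional: the generator just outputs any string from $E_r(T) \subseteq K \setminus S_r$. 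No height-reduction argument, no explicit algorithm construction, and no minimax step is needed.

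Your proposed Littlestone-style \emph{tree} dimension, by contrast, encodes the adversary's choices as a static tree and would force you to prove a nontrivial duality result to get the forward direction: you would need that the nonexistence of a depth-$(d{+}1)$ adversary tree implies the existence of an algorithm that can always ``strictly reduce the height of the surviving adversary subtree.'' That step is not automatic here. In the standard Littlestone argument the tree is binary and the algorithm can always descend into the heavier subtree, but in this game the adversary has an infinite branching factor (which string to reveal, how to answer queries) and the algorithm's available actions are likewise unbounded; your appeal to ``routine induction on the dimension'' glosses over exactly this. You correctly flag quantifier alternation as the delicate point, but the paper's definition resolves it by baking the alternation into the definition rather than proving a duality theorem.

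Your reverse direction sketch is much closer to the paper's. The paper also plays the adversary's strategy up to a round $r$ where $E_r(T) = \emptyset$, observes that the generated string $z_r$ must lie outside some consistent language $L \in \mcC_r(T)$, and then switches the adversary to enumerate $L$ and answer queries by membership in $L$; since $L$ is consistent with the transcript so far, this is a legal run that forces $z_r \notin L \setminus S_r$. Your ``play along the branch, then pick the disagreeing target'' is the same idea in tree language. The net effect is that your proposal is correct in spirit on the lower bound, but the forward direction has a genuine gap unless you either prove the duality your tree formulation requires or adopt a strategy-based definition like the paper's.
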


We supplement our results with interesting examples of language collections throughout the paper that provide intuition and elicit differences between different models.

\subsection{Related Work}
\label{sec:related-work}

Our work directly builds off of the %
thought-provoking work of \citet{kleinberg2024language}, which, in addition to providing a formal model and positive results for language generation in the limit, also provides interesting results in the context of prompt completion in language models. While the work of \citet{kleinberg2024language} is also the primary motivating work for \citet{raman2024generation}, our results on non-uniform generation are based on the definitions of uniform and non-uniform generation given by \citet{raman2024generation}, and the corresponding open question raised by them regarding the characterization of non-uniform generation. %

\paragraph{Concurrent and Independent Work.}
We discuss the relationship of our paper with two different parallel and independent works on language generation. Our results on non-uniform generation resolve an open problem from one work \citep{raman2024generation}, and independently of us, the authors also obtained similar results in an updated version of that paper. On the other hand, our results (in the earlier version of this manuscript \citep{charikar2024exploring}) on the tradeoff between validity and breadth were independent of the work of a different set of authors \citep*{kalavasis2024limits}. In this case, both we and they independently realized that we could build on the results in our manuscript to obtain tight characterizations. We elaborate on this below.

\paragraph{Non-uniform generation:} In parallel with, and independent of our work, \citet{raman2024generation} updated their paper \citep*{li2024generation}.
Their updated manuscript includes a result on non-uniform generation for countable collections, resolving an open problem they raised earlier.
The conclusion of Corollary 3.7 in \citet*{li2024generation} is similar to that of \Cref{thm:non-uniform-generation-ub-overview} in the present paper.

\paragraph{Validity-breadth tradeoff:}
In the time that we were preparing the original version of this manuscript \citep{charikar2024exploring}, and independently of our work, the very recent paper of \citet*{kalavasis2024limits} also considers the validity-breadth tradeoff arising in \citet{kleinberg2024language}. In particular, \citet{kalavasis2024limits} also formalize a definition of \textit{generation with breadth}, and obtain negative results similar to ours, providing evidence that the tradeoff between validity and breadth is necessary in a sense. While our definition of exhaustive generation is similar to their definition of generation with breadth in that both seek to formalize not missing out on any strings in the target language, there are certain important differences. In particular, a generator that satisfies their definition also satisfies our definition, but not vice versa---in this sense, ours is a \textit{weaker} requirement of breadth than theirs. For this reason, our negative result for exhaustive generation allows us to also answer an open question asked by \citet{kalavasis2024limits} (see \Cref{sec:consistency-breadth-impossible}). We elaborate further on the connections to the work of \citet{kalavasis2024limits} in \Cref{sec:connections-to-breadth}. We note that the work of \citet{kalavasis2024limits} also extensively studies the \textit{rates} at which generation in the limit can be achieved, within the framework of universal learning due to \citet{bousquet2021theory}. Here, instead of an online setting where the target language may be arbitrarily enumerated in a worst-case fashion, they consider a statistical setting where the stream of input strings is drawn i.i.d. from a \textit{distribution} supported on the target language, and the quantity under consideration is the rate at which validity/breadth of generation increases, with the number of input samples. Finally, we note that certain other definitions of generation with breadth (see \Cref{remark:kmv-other-notions} ahead) are also considered by \cite{kalavasis2024limits}.

\paragraph{Characterizations of the validity-breadth tradeoff:} Shortly after our initial manuscript appeared, the authors of \citep{kalavasis2024limits} were able to build on our results in \cite{charikar2024exploring} and their work to obtain a tight characterization of languages that satisfy various definitions of generation with breadth (like exact breadth, approximate breadth, etc.), as well as exhaustive generation considered in our works. Their new manuscript \citep{kalavasis2024characterizations} has several results that mirror results that we obtained in parallel and are presented in this updated version of our manuscript.
After we learned of their results from personal communication, we coordinated with them to make our definitions and terminology consistent. We provide a more detailed comparison between our results and theirs in \Cref{sec:kmv-characterizations-comparison}.

\section{Preliminaries}

We largely follow the problem setup of \cite{kleinberg2024language}, who build upon the setup originally introduced in \cite{gold1967language}. Let $\Sigma$ be a finite alphabet set (e.g., $\{0,1\}, \{a,b,\ldots,z\}$), and let $\Sigma^*$ denote the set of all strings of finite length formed by concatenating elements from $\Sigma$ in any order. A language $L$ is simply a countable subset of $\Sigma^*$, and we will always be concerned with generating from a \underline{countable} collection $\mcC=\{L_1,L_2,\ldots\}$ of languages. For generation in the limit to make sense, we will assume that $|L_i|=\infty$ for every $i$. The set of all integers is denoted by $\Z$.

\subsection{Generation in the Limit}

The setup assumes that a target language $K \in \mcC$ is chosen and fixed, and thereafter, strings from $K$ are provided sequentially as input in the form of an enumeration $x_1,x_2,x_3,\ldots$. The choice of $K$ and the order of enumeration can possibly be chosen by an adversary. Repetitions of strings are permitted in the enumeration---the only requirement is that every string in $K$ appears at least once in the enumeration, i.e., for every $z \in K$, $z=x_t$ for some $t < \infty$. \textbf{We use the shorthand $S_t$ to denote the set of all distinct strings seen in an enumeration $x_1,\ldots,x_t$ up until $t$.}

\begin{definition}[Generation in the limit \citep{kleinberg2024language}]
    \label{def:generation-in-the-limit}
    An algorithm generates in the limit from languages in a collection $\mcC$, if for any language $K \in \mcC$ and any enumeration of $K$ presented to the algorithm, there exists $t^* < \infty$ such that for all $t \ge t^*$, the string $z_t$ generated by the algorithm at time step $t$ belongs to $K \setminus S_t$.
\end{definition}

\begin{remark}
    Note that the time $t^*$ in the definition above above can depend on both the target language $K$ as well as the particular enumeration of it.
\end{remark}

In the above definition, the only requirement of the algorithm is that it is a \textit{computable map} from the input seen so far to an output string (\cite{kleinberg2024language} refer to this setting as ``generation in the limit via a function'').
We will be explicit when we care about the computational power required by the algorithm to compute this map. For example, the way in which \cite{kleinberg2024language} account for the computational power required by a generating algorithm is via the \textit{membership query} model.
Here, at each step, an algorithm is allowed to make finitely many queries to an oracle of the form ``$w \in L_i$?'', for any $w \in \Sigma^*$ and any $L_i \in \mcC$ of its choice. Notably, the algorithm cannot make the query  ``$w \in K$?'' corresponding to the unknown target language $K$. 
\cite{kleinberg2024language} showed that there exists an algorithm that successfully generates in the limit from any language in a countable collection using only membership queries.

\subsection{Uniform/Non-uniform Generation}
\label{sec:prelims-uniform-non-uniform-gen}

In the case that the collection $\mcC$ is finite, \cite{kleinberg2024language} additionally showed that it is possibly to construct an algorithm that \textit{uniformly} generates from languages in $\mcC$: namely, as soon as the algorithm sees $t^*=t^*(\mcC)$ many distinct strings from any $K \in \mcC$, \textit{irrespective} of $K$ and its enumeration order, it generates from $K \setminus S_t$ successfully for every $t \ge t^*$. Inspired by this, the recent work of \cite{raman2024generation} formalizes the following distinctions of ``non-uniform'' and ``uniform'' generation of a language.

\begin{definition}[Non-uniform generation in the limit (Definition 3 in \cite{raman2024generation})]
    \label{def:non-uniform-generation-by-function}
    An algorithm non-uniformly generates in the limit from languages in a collection $\mcC$, if for any language $K \in \mcC$, there exists a $t^*=t^*(\mcC, K)$ such that for any enumeration of $K$ presented to the algorithm, the string $z_t$ generated by the algorithm at time step $t$ belongs to $K \setminus S_t$ for all $t$ satisfying $|S_t| \ge t^*$.
\end{definition}

\begin{definition}[Uniform generation in the limit (Definition 4 in \cite{raman2024generation})]
    \label{def:uniform-generation-by-function}
    An algorithm uniformly generates in the limit from languages in a collection $\mcC$, if there exists a $t^*=t^*(\mcC)$ such that for any language $K \in \mcC$ and any enumeration of $K$ presented to the algorithm, the string $z_t$ generated by the algorithm at time step $t$ belongs to $K \setminus S_t$ for all $t$ satisfying $|S_t| \ge t^*$.
\end{definition}

\cite{raman2024generation} generalize the uniform generation result of \cite{kleinberg2024language} for finite collections to possibly infinite collections by showing that any collection $\mcC$ of languages having \textit{bounded complexity} (defined in terms of having a finite ``closure dimension'') admits uniform generation. Furthermore, if a collection $\mcC$ has infinite closure dimension, then it is not possible to uniformly generate from languages in $\mcC$.

\begin{definition}[Closure dimension]
\label{def:closure-dimension}
    The closure dimension of a collection $\mcC$ of languages is the size of the largest set $S=\{x_1,\ldots,x_d\}$, such that the intersection of all languages in $\mcC$ that contain $S$ is finite.
\end{definition}

\subsection{Exhaustive Generation}
\label{sec:prelims-exhaustive-generation}

The generation algorithm of \cite{kleinberg2024language} exhibits a tension between \textit{validity} of outputs and \textit{breadth} of generation, as also explicitly stated by them.
Namely, the algorithm starts off by generating strings that could possibly not belong to the target language $K$ for a while, before eventually settling to generate from subsets of $K$ that seemingly get smaller and smaller. \cite{kleinberg2024language} leave the problem of bridging this gap open, asking if it is possible to construct an algorithm that generates from $K$ with breadth (i.e., does not miss out on generating any strings from $K$), or if there is a formal sense in which such a tradeoff is necessary. To model this tradeoff, we propose a definition of \textit{exhaustive} generation. 

For this, we consider a generating algorithm $\mcA$, which at any time $t$, maps 
$S_t$ to a generator $\mcG_t:\N \to \Sigma^*$. We can imagine that the string generated by the algorithm at time step $t$ is simply $\mcG_t(1)$. However, we want to also consider what happens if the input were to be \textit{terminated} beyond time step $t$. In this case, we want to study the sequence of strings $\mcG_t(1), \mcG_t(2), \mcG_t(3),\ldots$ that would be generated by $\mcG_t$---we can think of this latter scenario to be a form of ``generate-only'' mode that is implicitly defined by the generator $\mcG_t$. We use the shorthand $Z_{< t}$ to denote the set of distinct strings in the sequence $\mcG_1(1), \mcG_2(1),\ldots,\mcG_{t-1}(1)$, and the shorthand $Z_{\ge t}$ to denote the set of distinct strings in the sequence $\mcG_t(1), \mcG_t(2), \ldots$ generated by $\mcG_t$ were it to go into generate-only mode from time $t$.

\begin{definition}[Exhaustive Generation]
    \label{def:exhaustive-generation}
    A generating algorithm $\mcA$ exhaustively generates in the limit from languages in a collection $\mcC$, if for any $K \in \mcC$ and any enumeration of $K$, there exists $t^* < \infty$ such that for any $t \ge t^*$, it holds that
    \begin{enumerate}
        \item $|Z_{\ge t} \setminus K| < \infty$.\footnote{In a previous version of our manuscript, we stated this condition as $Z_{\ge t} \subseteq K$, i.e., the generator makes no mistakes after it goes into generate-only mode. The updated condition allows the generator to make a finite number of mistakes after it goes into generate-only mode. We believe that the updated condition is a more natural definition for exhaustive generation, in keeping with the spirit of making no mistakes for language generation in the limit.}
        \item $S_{t} \cup Z_{< t} \cup Z_{\ge t} \supseteq K$.\footnotemark
    \end{enumerate}
    \footnotetext{We could have alternatively considered this condition to be $Z_{< t} \cup Z_{\ge t} \supseteq K$ instead, which implies the written condition, since $S_t \subseteq K$. Conversely, a generator that relies on $S_t$ to satisfy condition 2 in \Cref{def:exhaustive-generation} could simply generate $S_t$ first, before going on to generate $Z_{\ge t}$ which it was planning to generate.}
\end{definition}

We note that unlike uniform/non-uniform generation considered in \Cref{sec:prelims-uniform-non-uniform-gen}, $t^*$ in the above definition is allowed to depend on both $K$ and its enumeration order, just as in \Cref{def:generation-in-the-limit}.

The first condition in the definition above is in line with the requirement that the generator, if asked to go into generate-only mode, can only generate finitely many spurious strings, and thereafter, completely stops hallucinating. The second condition ensures coverage of the entire language. Since we hope that the generator will largely produce new and previously unseen examples, our definition effectively throws in the examples already presented in covering the target language $K$ and does not penalize the generator for not regenerating those.
We note that in \Cref{remark:relaxed-exhaustive-generation} in \Cref{sec:exhaustive-generation-characterization}, we also comment about a slightly relaxed version of \Cref{def:exhaustive-generation}.

The notion of such a ``generate-only'' mode may seem unnatural at first, but in fact, it is quite natural to terminate the input and not look at further examples in the context of exhaustive generation. Recall that the input is a valid enumeration of the target language $K$; that is, every string in $K$ eventually appears in it.
Therefore, it is natural to terminate the input at some point and consider the output of the generator in ``generate-only'' mode.

\section{Non-uniform Language Generation for a Countable Collection}
\label{sec:non-uniform-generation-ub}

In this section, we show that every countable collection $\mcC$ of languages can be non-uniformly generated. We consider $\mcC$ to be specified in a given enumeration $\mcC=\{L_1,L_2,L_3,\ldots\}$. %

\paragraph{Algorithm.} 
Consider the algorithm, which at step $t$ in the enumeration of its input, %
initializes $I_t = \Sigma^*$, and iterates through the languages $L_1,\ldots,L_t$ in this order. Whenever it encounters a language $L_i$ that contains $S_t$, it checks if $|I_t \cap L_i|=\infty$. %
If it is, then it updates $I_t$ as $I_t=I_t \cap L_i$. Otherwise, it skips over $L_i$, and leaves $I_t$ unaffected. Thus, throughout the algorithm's iteration over $L_1,\ldots,L_t$, the following invariants are maintained: (1) $I_t$ is an infinite set, and (2) $I_t$ is the intersection of a finite set of languages that contain $S_t$.\footnote{We use the convention that the intersection of zero languages is $\Sigma^*$.} The algorithm then generates an arbitrary string from $I_t \setminus S_t$\footnote{If we do not want repetitions in the strings generated by the algorithm, we can have it generate an arbitrary ungenerated-as-yet string from the infinite $I_t \setminus S_t$.}.

We will now show that the above algorithm non-uniformly generates from languages in $\mcC$. We will specify the non-uniform guarantee of the algorithm in terms of the \textit{non-uniform complexity} of languages in the collection $\mcC$.

\begin{definition}[Non-uniform Complexity]
    \label{def:m(L)}
    Given $\mcC=\{L_1,L_2,L_3,\ldots$\}, for any $i \in \N$, define the non-uniform complexity $m_\mcC(L_i)$ of $L_i$ as
    \begin{align}
        \label{eqn:def-m(L)}
        m_\mcC(L_i) := \max  \left\{ \left| \bigcap_{L \in \mcC'}L \right|: \mcC' \subseteq \{L_1,\dots,L_i\}, \mcC' \ni L_i, \left| \bigcap_{L \in \mcC'}L \right| < \infty \right\}.
    \end{align}
\end{definition}

\begin{theorem}
    \label{thm:non-uniform-ub}
    For any language $L_{i^*} \in \mcC$, and any enumeration of $L_{i^*}$ presented as input to the above algorithm, the algorithm generates from $L_{i^*} \setminus S_t$ for all $t$ satisfying 
    \begin{align}
        \label{eqn:non-uniform-timer-value}
        |S_t| \ge \max(i^*, m_\mcC(L_i^*) + 1). 
    \end{align}
\end{theorem}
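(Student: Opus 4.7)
The plan is to analyze what happens at time $t$ when $|S_t| \ge \max(i^*, m_\mcC(L_{i^*})+1)$, tracing the algorithm's iteration through $L_1,\ldots,L_t$ and focusing on the moment it examines $L_{i^*}$. First I would observe that since $t \ge |S_t| \ge i^*$, the language $L_{i^*}$ is indeed among the languages scanned in this round. Since the input enumerates $L_{i^*}$, we also have $S_t \subseteq L_{i^*}$. Let $\mcC' \subseteq \{L_1,\ldots,L_{i^*-1}\}$ denote the set of languages that the algorithm has actually intersected into $I_t$ by the time it reaches $L_{i^*}$; by the invariants, each $L \in \mcC'$ contains $S_t$ and the current value of $I_t$ equals $\bigcap_{L \in \mcC'} L$ (with the convention that an empty intersection is $\Sigma^*$).

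Now I would split into two cases according to the algorithm's decision at $L_{i^*}$. In the \emph{accept} case, the algorithm intersects $L_{i^*}$ into $I_t$; since subsequent iterations only shrink $I_t$, the final set $I_t$ is contained in $L_{i^*}$, and because $|I_t|=\infty$ while $|S_t|<\infty$, the algorithm outputs some string from $I_t \setminus S_t \subseteq L_{i^*} \setminus S_t$, as desired. In the \emph{skip} case, we must have $|I_t \cap L_{i^*}| = \bigl|\bigcap_{L \in \mcC'\cup\{L_{i^*}\}} L\bigr| < \infty$. The set $\mcC'' := \mcC' \cup \{L_{i^*}\}$ is a subset of $\{L_1,\ldots,L_{i^*}\}$ containing $L_{i^*}$, and every language in $\mcC''$ contains $S_t$, so $S_t \subseteq \bigcap_{L \in \mcC''} L$. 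By the definition of $m_\mcC(L_{i^*})$ this gives
\[
|S_t| \;\le\; \Bigl|\bigcap_{L \in \mcC''} L\Bigr| \;\le\; m_\mcC(L_{i^*}),
\]
contradicting the hypothesis $|S_t| \ge m_\mcC(L_{i^*})+1$. Hence the skip case cannot occur, and the accept-case argument completes the proof.

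Most of this is a direct unwinding of the algorithm against the definition of $m_\mcC(\cdot)$; the only genuinely subtle step is arguing that the skip case forces a witness $\mcC''$ of the specific form required by \Cref{def:m(L)} (subset of $\{L_1,\ldots,L_{i^*}\}$, containing $L_{i^*}$, with finite intersection). I would also briefly address the edge case where the maximum in \Cref{def:m(L)} is taken over an empty family: in that situation no valid $\mcC''$ exists, so the skip case is automatically impossible and the bound $|S_t| \ge i^*$ alone suffices. I expect the main (minor) obstacle to be making sure the iteration order and the invariants on $I_t$ are stated cleanly enough that the identification of $\mcC''$ with a feasible witness in the definition of $m_\mcC(L_{i^*})$ is unambiguous.
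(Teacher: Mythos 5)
Your proof is correct and takes essentially the same approach as the paper's: both argue that $L_{i^*}$ must be accepted by the algorithm, and derive a contradiction in the skip case by exhibiting $\mcC'' = \mcC' \cup \{L_{i^*}\}$ as a witness violating the definition of $m_\mcC(L_{i^*})$. The only cosmetic difference is that you phrase the contradiction as $|S_t| \le m_\mcC(L_{i^*})$, whereas the paper phrases it as a finite intersection of size at least $m_\mcC(L_{i^*})+1$, but these are the same bound read in opposite directions.
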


\begin{proof}

    Consider any $t$ where $|S_t| \ge \max(i^*, m_\mcC(L_i^*) + 1)$. Note that such a $t$ necessarily satisfies $t \ge i^*$, and hence, the target language $L_{i^*}$ is under consideration by the algorithm at this step. Furthermore, $L_{i^*}$ definitely contains $S_t$. So, the algorithm proceeds to iterate over $L_1,\ldots,L_t$, maintaining $I_t$ along the way. We only need to argue that when the algorithm encounters $L_{i^*}$ and considers $I_t \cap L_{i*}$, it finds that this set is infinite. If this is true, then $I_t$ would be updated to be $I_t \cap L_{i^*}$, which is a subset of $L_{i^*}$. Thereafter, $I_t$ will only become a smaller subset of $L_{i^*}$ as the algorithm proceeds in its iteration. Finally, because the algorithm maintains $I_t$ to be infinite, at the end of its iteration, $I_t$ is an infinite subset of $L_{i^*}$. Thus, the algorithm can safely generate a string from $I_t \setminus S_t$, as it is guaranteed to also belong to $L_{i^*} \setminus S_t$.

    So, we continue to argue that when the algorithm encounters $L_{i^*}$ and considers $I_t \cap L_{i*}$, it finds that this set is infinite. Suppose this were not the case---then, $I_t \cap L_{i^*}$ is a finite set. Observe however, that $I_t \cap L_{i^*}$ is an intersection of languages that contain $S_t$. Therefore, since $|S_t| \ge m_\mcC(L_i^*) + 1$, we have that $|I_t \cap L_{i^*}| \ge m_\mcC(L_i^*) + 1$. But notice also that $I_t$ is an intersection of languages that appear before $L_{i^*}$ in the enumeration of $\mcC$. 
    Thus, we have obtained a collection of languages that is a subcollection of $\{L_1,\ldots, L_{i^*}\}$, contains $L_{i^*}$, and has a finite intersection of size at least $m_\mcC(L_i^*) + 1$: this contradicts the definition of $m_\mcC(L_i^*)$ (see \eqref{eqn:def-m(L)}). Thus, $I_t \cap L_{i*}$ must be infinite.
\end{proof}

At its heart, our algorithm is closer to the closure-based algorithm of \cite{kleinberg2024language} for finite collections. Namely, while that algorithm considers generating from the intersection of \textit{all} languages in the collection consistent with the input (namely, the closure), our algorithm is more conservative among the languages it considers for computing the closure. Ultimately, it is able to greedily choose these languages, keeping track of a rather simple criterion---that of infinite intersection.

We make a concluding remark that our algorithm above is not collection-specific: it simultaneously works for all countable collections (defined on the fixed universe $\Sigma^*$), just like the generation algorithm of \cite{kleinberg2024language} for countable collections. Furthermore, our algorithm can be implemented with a membership query oracle, \textit{and} an additional oracle, which, when queried with any finite collection of languages, responds with whether the intersection of the languages in the collection is finite or not.

\section{Lower Bound for Non-uniform Language Generation Using Only Membership Queries}
\label{sec:mem-query-lb}

Given that the generation algorithm of \cite{kleinberg2024language} can be implemented using membership queries, it is natural to ask if our non-uniform generation algorithm from above can also be implemented using only membership queries. Towards this, we show a strong negative result---we show that it is impossible (in a computability sense) for any algorithm to (simultaneously) non-uniformly generate for every finite collection, with just membership queries.

Somewhat surprisingly, our lower bound applies to collections with just two languages $L_0$ and $L_1$, when the algorithm has no a priori information about $L_0$ and $L_1$ and can only access the languages using membership queries.
While this result might seem to contradict the results of \cite{kleinberg2024language} who give uniform generation algorithms for finite collections of languages, we reconcile this apparent contradiction in our discussion after establishing the formal theorem.

Informally, the theorem says that we cannot have an algorithm with a non-uniform generation guarantee for every collection of two languages.
A non-uniform guarantee for such a collection $\mcC=\{L_0,L_1\}$ means that, for any input that is a valid enumeration of $L_0$ (respectively $L_1$), there is a bound $t_0$ (respectively $t_1$) independent of the enumeration, such that the algorithm correctly generates from $L_0$ (respectively $L_1$) after time step $t_0$ (respectively $t_1$).
The proof takes a supposed valid algorithm $\mcA$ and constructs a bad input, i.e., a specific collection of two languages where $\mcA$ fails to satisfy the non-uniform generation guarantee.
This adversarial input is one that keeps the algorithm guessing at every step, i.e., the algorithm learns no information to determine whether the target languages is $L_0$ or $L_1$. For any time step $t$ beyond $\max(t_0,t_1)$, the algorithm is supposed to make no mistakes.
However, we can easily force a mistake at time $t$ by picking the target language to be one of $L_0,L_1$, and revealing this to the algorithm after time step $t$.

\begin{theorem}
    \label{thm:membership-query-lb}
    There cannot exist a (deterministic) algorithm $\mcA$ that only makes membership queries, and %
    satisfies the following property: for every collection $\mcC=\{L_0,L_1\}$ of two languages (on a universe $\Sigma^*$), there exist $t^*(\mcC, L_0) < \infty$ and $t^*(\mcC, L_1) < \infty$ such that:
    \begin{enumerate}
        \item For every enumeration $x_1,x_2,\dots$ of %
        $L_0$ and every $1 \le t < \infty$, the algorithm makes a finite number of queries at step $t$, and if $|S_t| \ge t^*(\mcC, L_0)$, then $\mcA(S_t)\footnotemark \in L_0 \setminus S_t$.
        \item For every enumeration $x_1,x_2,\dots$ of %
        $L_1$ and every $1 \le t < \infty$, the algorithm makes a finite number of queries at step $t$, and if $|S_t| \ge t^*(\mcC, L_1)$, then $\mcA(S_t) \in L_1 \setminus S_t$.
    \end{enumerate}
    \footnotetext{Here, we denote by $\mcA(S_t)$ the string $z_t$ generated by the algorithm $\mcA$ at time step $t$.}
\end{theorem}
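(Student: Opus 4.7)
The plan is a standard adversary argument. Assuming $\mcA$ satisfies the stated property, I will construct a collection $\mcC=\{L_0,L_1\}$ and an enumeration of $L_0$ on which $\mcA$ necessarily fails. Partition $\Sigma^*$ into two disjoint infinite sets $U\sqcup V$. Throughout the run, the adversary builds $L_0\subseteq U$ as the set of strings it actually enumerates, and declares $L_1:=L_0\cup V$, so distinctness and infiniteness of both languages are automatic.

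The adversary maintains, in addition to the input prefix $S_t$, a running finite ``forbidden set'' $F_t$ of strings it has committed will not appear in $L_0$. At step $t$ the adversary picks $x_t$ to be any fresh element of $U\setminus(S_{t-1}\cup F_{t-1})$, feeds it to $\mcA$, and then answers every query ``$q\in L_0$?'' with ``yes'' iff $q\in S_t$ (adding $q$ to $F_t$ on ``no''), and every query ``$q\in L_1$?'' with ``yes'' iff $q\in S_t\cup V$ (again adding $q$ to $F_t$ on ``no''). After receiving $\mcA$'s output $z_t$, the adversary additionally places $z_t$ into $F_t$ whenever $z_t\in U\setminus S_t$. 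At the end, set $L_0:=\bigcup_t S_t$ and $L_1:=L_0\cup V$.

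The heart of the argument is that, at every step $t$, the output $z_t$ falls into exactly one of three cases: $z_t\in S_t$; or $z_t\in V$ (in which case $z_t\notin L_0$ because $L_0\subseteq U$); or $z_t\in U\setminus S_t$ (in which case $z_t$ was just placed in $F_t$, so $z_t\notin L_0$). In every case, $z_t\notin L_0\setminus S_t$. Since $|S_t|=t$ grows without bound, no finite value of $t^*(\mcC,L_0)$ can be correct: taking $t=t^*(\mcC,L_0)$ exhibits a step with $|S_t|\ge t^*(\mcC,L_0)$ at which $\mcA$ violates the required guarantee, contradicting the assumption.

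The main point that must be checked carefully is that the adversary's query answers stay consistent with the eventual $L_0,L_1$. The key invariant is that every element placed in $F_t$ is forever excluded from subsequent inputs $x_s$ for $s>t$, and hence never enters $L_0=\bigcup_s S_s$; combined with $F_t\cap S_t=\emptyset$ (maintained by always picking $x_t\notin F_{t-1}$), this yields $F_t\cap L_0=\emptyset$. Thus every ``no'' answer is truthful: for an $L_0$-query this is $q\in F_t\Rightarrow q\notin L_0$, and for an $L_1$-query this additionally uses $q\in U\setminus S_t$ to conclude $q\notin V$, so $q\notin L_0\cup V=L_1$. The ``yes'' answers follow immediately from $S_t\subseteq L_0$ and $V\subseteq L_1$. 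The set $U\setminus(S_{t-1}\cup F_{t-1})$ from which $x_t$ is picked is always nonempty because $U$ is infinite while the excluded set is finite at every step.
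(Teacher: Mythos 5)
The overall strategy—adaptively build $\{L_0,L_1\}$ as a function of $\mcA$'s run so that $\mcA$ fails—is the same as the paper's, but your specific construction has a gap that the paper's construction is designed to avoid.

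The gap is in the claim that ``the excluded set is finite at every step,'' which you assert but do not justify. Your adversary never adds any new string to $L_0$: every query ``$q\in L_0$?'' with $q\notin S_t$ is answered ``no.'' Consider an algorithm $\mcA$ that, at step $t$, keeps querying until it finds some string $\ell\notin S_t$ with $\ell\in L_0$ (a perfectly reasonable subroutine: for any \emph{genuine} collection with infinite $L_0$, such an $\ell$ exists and the loop terminates, so the algorithm can satisfy the theorem's premise). Against your adversary, this loop runs forever, $F_t$ becomes infinite, and your construction never produces a well-defined collection. At that point there is no contradiction to derive: the only ``collection'' consistent with your adversary's infinite stream of answers has $L_0=S_t$, which is finite and hence illegal, so the premise (which quantifies over legal collections) does not apply to rule out the infinite loop. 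In short, you cannot invoke the finite-query clause of the premise unless you can exhibit a legal collection and enumeration reproducing the adversary's transcript.

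The paper's construction sidesteps exactly this: every newly queried string is placed into one of $L_0,L_1$ (alternating), so both languages are populated to infinity \emph{during} the query loop, and the input prefix $x_1,\ldots,x_m$ remains a legal prefix for an enumeration of $L_0$. Consequently, an infinite query loop would already yield a legal $\mcC$ on which $\mcA$ violates the finite-query clause, which is the contradiction needed. Your construction, by contrast, leaves $L_0$ too thin; to repair it, you would need to admit some fresh ``yes'' answers for $L_0$, at which point you are essentially back to the paper's argument and your ``every $z_t$ fails'' invariant would need to be reconsidered. A secondary observation: your argument attacks only the $L_0$-branch of the premise, which is fine in principle, whereas the paper delays the forced mistake to $n=\max(t^*(\mcC,L_0),t^*(\mcC,L_1))$ and then picks whichever target makes $z_n$ wrong—but this is a stylistic difference, not a gap.
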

\begin{proof}
    Assume for the sake of contradiction that $\mcA$ is a valid generating algorithm satisfying the property. We will adversarially construct two languages $L_0=L_0(\mcA)$ and $L_1=L_1(\mcA)$ such that $\mcA$ does not satisfy the property on the collection $\mcC=\{L_0,L_1\}$.

    We will build up $L_0$ and $L_1$ in phases $\mcP_1,\mcP_2,\dots$, as a function of the execution of $\mcA$. The first time that we will consider any string $w \in \Sigma^*$, we will irrevocably decide whether $w$ belongs to exactly one of $L_0$ or $L_1$, or both of them. Towards this, we will maintain a global variable $a$ whose state is tracked across phases: it is initialized to $a=0$ before the start of Phase $\mcP_1$, and %
    maintains its value
    from the end of phase $\mcP_{m-1}$ to the start of phase $\mcP_m$. We also maintain a dictionary $\assigned$, whose keys are strings in $\Sigma^*$, and initialize $\assigned[w]=-1$ for every string $w \in \Sigma^*$. Each phase $\mcP_m$ operates as follows:
    \begin{framed}
        \begin{align*}
            &\text{\underline{Description of Phase $\mcP_m$}:} \\
            &\text{(1) //} \blue{\textit{prepare next string to be enumerated}}\\
            &\qquad \text{Insert a yet unseen, distinct string $x_m$ in both $L_0$ and $L_1$,} \\
            &\qquad \text{and set $\assigned[x_m]=\{0,1\}$.} \\
            &\text{(2) Feed $x_m$ as the next input to $\mcA$.} \\
            &\text{(3) //} \blue{\textit{process membership queries made by the algorithm}} \\
            &\qquad \text{Initialize $j=1$, and loop until $\mcA$ generates some $z_m$}: \\
            &\qquad\qquad \text{Suppose $\mcA$ queries ``$y_{j} \in L_{a_j}$?'' for $a_j \in \{0,1\}$.} \\
            &\qquad\qquad \text{If $\assigned[y_{j}] \neq -1$:} \\
            &\qquad \qquad\qquad // \blue{\textit{$y_{j}$ has already been placed in $L_0,L_1$ earlier}} \\
            &\qquad \qquad\qquad \text{Answer ``Yes''/``No'' according to $\assigned[y_{j}]$.} \\
            &\qquad\qquad \text{Else:} \\
            &\qquad \qquad\qquad //        
            \blue{\textit{place $y_{j}$ in exactly one of $L_0,L_1$}} \\
            &\qquad\qquad\qquad \text{Insert $y_{j}$ in $L_a$, set $\assigned[y_{j}]=a$.} \\
            &\qquad\qquad\qquad\text{Answer ``Yes''/``No'' based on $a_j\stackrel{?}{=}a$, and thereafter, set $a=1-a$.} \\
            &\text{(4) //} \blue{\textit{process string guessed by the algorithm}}\\ 
            &\qquad \text{If $\assigned[z_m] = -1$ for the generated string $z_m$:}\\
            &\qquad\qquad \text{Insert $z_{m}$ in $L_a$, set $\assigned[z_m]=a$, and thereafter, set $a=1-a$.}
        \end{align*}
    \end{framed}
    This concludes the construction of the languages $L_0$ and $L_1$. Now, suppose there were some finite $t^*(\mcC, L_0)$ and $t^*(\mcC, L_1)$ for which the generations of $\mcA$ satisfied the property in the theorem statement. Let $n = \max(t^*(\mcC, L_0), t^*(\mcC, L_1))$. We note a few key properties of our construction that we exploit in the rest of the proof: (i) $L_0$ and $L_1$ are infinite languages. (ii) Each string enumerated in Step (2) as input belongs to both $L_0$ and $L_1$, thus the enumeration $x_1, x_2, \ldots x_m$ (for $m \leq n$) is valid for both $L_0$ and $L_1$. (iii) The adversary can force the algorithm to make a mistake for the generated string $z_n$ (and hence violate the correctness guarantee) by committing to one of $L_0$, $L_1$ as the true language.

    Consider feeding the generator the input sequence $x_1,x_2,\dots, x_n$, where for $m \in [n]$, $x_m$ is the string from Step (1) in Phase $\mcP_m$ above. 
    Note that the number of distinct strings $|S_n|=n$. Suppose that $z_1,\dots,z_n$ are the strings $\mcA$ generates at each step. Observe that by design, upon feeding $x_m$, $\mcA$ asks exactly the same sequence of queries that we considered in the loop in Step (3) above. 
    
    We first argue that $\mcA$ must only ask a finite number of queries, and then generate $z_m$, which is exactly the string generated by $\mcA$ at the culmination of the loop in Step (3). If this is not the case, then the loop in Step (3) would never have terminated. The infinite loop ensures that both $L_0$ and $L_1$ continue being populated to be infinite languages---hence, we have a legal collection $\mcC$. 
    Note that the sequence $x_1, x_2, \ldots x_m$ produced so far can be extended to be a valid enumeration of either $L_0$ or $L_1$ (since each of $x_1,\dots,x_m$ belongs to both $L_0$ and $L_1$, as ensured by Step (1)).
    We now simply declare $L_0$ to be the true (target) language. 
    Since the sequence $x_1, x_2, \ldots x_m$ is the prefix of a valid enumeration of $L_0$, $\mcA$ must satisfy the correctness guarantee for a valid generation algorithm and cannot loop forever.

    So, consider the string $z_n$ generated by $\mcA$ after seeing the last input $x_n$---note that by the correctness guarantee, $z_n$ must not belong to $S_n = \{x_1,x_2,\ldots x_n\}$. Consider the value of $\assigned[z_n]$ after Step (4) in Phase $\mcP_n$; suppose $\assigned[z_n]=0$. We will then declare $L_1$ to be the true language, and thereafter, simply begin re-enumerating all of $L_1$ afresh. Similarly, if $\assigned[z_n]=1$, we will declare $L_0$ to be the true language, and thereafter, simply begin re-enumerating all of $L_0$. Notice that either way, we produce a legal and complete enumeration of the true language. More importantly, notice that in either case, the string $z_n$ that $\mcA$ generates at step $n$ \textbf{does not} belong to the true language---this is because $z_n$ is only contained in one of $L_0$ or $L_1$ after Step (4). This violates the guarantee required of $\mcA$, and hence $\mcA$ is not a valid generating algorithm.

\end{proof}

Observe that for finite collections, uniform and non-uniform generation become equivalent, as one can simply assume the uniform bound to be the maximum non-uniform bound across all the languages. In that case, our lower bound above would appear to contradict the closure algorithm of \cite{kleinberg2024language} for finite collections, which uses only membership queries. We now explain why this is not a contradiction.

One way to think about the proof above is that there are two very different strategies for the algorithm to generate from the target language, depending on whether the intersection of $L_0$ and $L_1$ is finite or infinite. For finite intersection of size $d$, from time step $d+1$ onwards, only one of the two languages is consistent with the input since both the languages cannot contain the first $d+1$ positive examples provided to the algorithm. 
Thus, the algorithm will have correctly identified the index of the target language at this point and can easily generate new strings from this language by performing membership queries until it finds a string in the language.
On the other hand, if the intersection of the languages $L_0$ and $L_1$ is infinite, a very different strategy works: we simply perform membership queries until we find a new string $x$ which belongs to both $L_0$ and $L_1$.
Thus, successful strategies in the two cases of finite/infinite intersection both involve running a potentially infinite loop which is guaranteed to terminate given a crucial piece of information, i.e., whether the intersection is finite or not.
In the absence of this information a priori, we show that the algorithm must either loop forever, or can be forced to make a mistake at time step $t$ for arbitrarily large $t$.

Now it is clear why our result does not contradict the uniform generation algorithm of \cite{kleinberg2024language} for finite collections.
The algorithm of \cite{kleinberg2024language} crucially assumes \textit{a priori} knowledge of whether the languages in the finite collection have infinite intersection\footnote{which is a reasonable assumption, since the problem of checking whether the intersection of two context-free languages is infinite is already undecidable.}, and employs very different strategies in the two cases of infinite/finite intersection. In the case of infinite intersection, the algorithm does not wait to see any input at all, and simply queries whether strings in the universe belong to all languages---it is guaranteed to always find such a string in finite time, and it can go on to generate it. Otherwise, if the languages in the collection do not have infinite intersection, the algorithm is additionally told a quantity $t(\mcC)$, which measures the the size of the largest finite intersection of a subcollection---the algorithm then waits till it sees $t(C)+1$ distinct strings (behaving arbitrarily until then). Only at such a time does the algorithm, using membership queries, begin determining if there is a new string beyond the input seen so far that belongs to the the intersection of all languages consistent with the input, i.e., the \textit{closure} (which is guaranteed to be infinite). In fact, if the algorithm did not wait till it sees $t(C)+1$ strings, it is very well possible that the closure is \textit{exactly} equal to the input. In such a case, the algorithm will ceaselessly issue membership queries in its quest to find a new string in the closure beyond the input, and never terminate.

We conclude this section with one final insight.
The construction of the language collection $\mcC=\{L_0,L_1\}$ in the proof above has the following property: at some point in the construction, if $\mcA$ loops forever in the course of making membership queries, where it does not discover any new string that belongs to both languages, then the languages $L_0,L_1$ have finite intersection.
On the other hand, if $\mcA$ always generates a new guess after making finitely many membership queries, where it does not discover any new string that belongs to both languages, then the intersection of languages $L_0,L_1$ is infinite.
This lower bound is easily circumvented by an algorithm that has an a priori promise that the size of the intersection is either finite or infinite, and
our proof crucially exploits the fact that the generation algorithm does not have this a priori information.

\section{Results on Exhaustive Generation}
\label{sec:exhaustive-generation-results}

In this section, we study the setting of exhaustive generation introduced in \Cref{sec:prelims-exhaustive-generation}. First, in \Cref{sec:exhaustive-generation-lb}, we show a lower bound for exhaustive generation. The collection of languages that realizes this lower bound turns to not be identifiable as well, possibly suggesting that identifiability and exhaustive generation are equivalent. However, in \Cref{sec:identifiability-exhaustive-generation-separation}, we exhibit a language collection that is not identifiable in the limit, but \textit{can} be exhaustive generated. Next, in \Cref{sec:connections-to-breadth}, we elaborate on the connections between our notion of exhaustive generation and the notion of generation with breadth introduced by \cite{kalavasis2024limits}. As we argue in \Cref{sec:consistency-breadth-impossible}, our lower bound from \Cref{sec:exhaustive-generation-lb} resolves an open question asked by \cite{kalavasis2024limits}. In \Cref{sec:distinction-breadth-exhaustive-generation}, we show that the collection from \Cref{sec:identifiability-exhaustive-generation-separation} which separates identifiability and exhaustive generation can also \textit{not} be generated with breadth. Recall that this collection is not identifiable, but can be exhaustively generated. This may suggest that amongst exhaustive generation and generation with breadth, the latter is closer to identifiability. Indeed, as we show in \Cref{sec:generation-with-breadth-necessary-condition}, a necessary condition for identifiability (``Angluin's Condition with Existence'') turns out to also be a necessary condition for generation with breadth.

\subsection{Lower Bound for Exhaustive Generation}
\label{sec:exhaustive-generation-lb}

Recall that in exhaustive generation, we are concerned with generating algorithms $\mcA$ that output a generator $\mcG_t:\N \to \Sigma^*$ at every time step $t$. We denote by $Z_{< t}$ the set of distinct strings generated by the algorithm up until time $t-1$ (namely $\mcG_1(1),\ldots,\mcG_{t-1}(1)$), and by $Z_{\ge t}$ the set of distinct strings generated by the algorithm from time $t$ onwards, as if it were to stop seeing any more input (namely $\mcG_{t}(1), \mcG_t(2),\ldots$). As given in \Cref{def:exhaustive-generation}, we desire that for any language $K \in \mcC$ and enumeration of it, there exists a finite $t^* < \infty$ such that for every $t \ge t^*$, it holds that $|Z_{\ge t} \setminus K| < \infty$, and also that $S_t \cup Z_{< t} \cup Z_{\ge t} \supseteq K$.

We restate \Cref{thm:exhaustive-generation-lb}, which shows that even simple language collections cannot be exhaustively generated.

\exhaustivegenerationlb*

\begin{proof}
    Consider the collection $\mcC = L_\infty \cup \bigcup_{i \in \Z}L_i$, where
    \begin{align*}
        &L_\infty = \Z, \\
        &L_i = \{-i, -i+1, -i+2 ,\ldots\} \text{ for $i \in \Z$}.
    \end{align*}
    Each language above is an arithmetic progression, and hence a regular language (e.g., when the strings are expressed in binary representation).
    Assume for the sake of contradiction that there exists an algorithm $\mcA$ that exhaustively generates in the limit from languages in $\mcC$. This means that for any $K \in \mcC$ and any enumeration of $K$, there exists a $t^* < \infty$ such that for any $t \ge t^*$, it holds that (1) $|Z_{\ge t} \setminus K| < \infty$, and (2) $S_t \cup Z_{<t} \cup Z_{\ge t} \supseteq K$. %

    Suppose that an adversary enumerates $L_0$ as $0,1,2,3,\ldots$. 
    Then, there must exist some time step $t^*_0 < \infty$, such that for the generator $\mcG_{t^*_0}$ output by $\mcA$ at $t^*_0$ (which is solely a function of $S_{t^*_0}=\{0,1,2,\ldots,t^*_0\}$, we have that $|Z_{\ge t^*_0} \setminus L_0|<\infty$. Let $t_0=t^*_0$.

    Next, consider the enumeration of $L_{1}$, given as $0,1,2,\ldots,t_0,-1,0,1,2,3,\ldots$. This is a valid enumeration of $L_{1}$, and hence there must exist a finite time step $t^*_{1} < \infty$, such that for any $t \ge t^*_{1}$, the generator $\mcG_{t}$ output by $\mcA$ at $t$ satisfies that $|Z_{\ge t} \setminus L_{1}|<\infty$. 
    Let $t_1 = \max(t_{1}^*, t_0+1)$, and observe that in this enumeration of $L_1$, $t_1$ appears at a timestep beyond $t^*_1$.

    Now, consider the enumeration of $L_{2}$, given as $0,1,2,\ldots,t_0,-1,0,1,2,3,\ldots,t_{1}, -2,-1,0,\ldots$. This is a valid enumeration of $L_{2}$, and hence there must exist a finite $t^*_{2} < \infty$, such that for any $t \ge t^*_{2}$, the generator $\mcG_{t}$ output by $\mcA$ at $t$ satisfies that $|Z_{\ge t} \setminus L_{2}| < \infty$. Let $t_2 = \max(t_{2}^*, t_1+1)$, and observe again that in this enumeration of $L_2$, $t_2$ appears at a time step beyond $t^*_2$.

    We can repeat the same argument indefinitely, which results in the enumeration 
    \begin{align*}
    0,1,2,\ldots,t_0,-1,0,1,2,\ldots,t_{1}, -2,-1,\ldots,t_2, -3,-2,\ldots,t_3,-4,-3,,\ldots
    \end{align*}
    Observe that this is a valid enumeration of $L_\infty$: starting with $i=0$, phase $i$ of the above argument comprises of the sequence $-i,-i+1,\ldots,0,1,\ldots,t_i$, and the overall enumeration is a concatenation of the sequences produced in phases $0,1,2,\ldots$. Hence, every negative integer appears at least once in this enumeration, and so does every positive integer, because $t_0 < t_1 < t_2 < \ldots$.
    
    Now, if $\mcA$ were to successfully exhaustively generate from $L_\infty$, for the above enumeration of $L_\infty$, there must be some finite time step $t_\infty$ such that for every $t \ge t_\infty$, the generator $\mcG_{t}$ output by $\mcA$ at $t$ satisfies that $S_{t} \cup Z_{< t} \cup Z_{\ge t} \supseteq L_\infty$. Consider then the smallest $i$ for which $t_i$ appears at a time step beyond $t_\infty$ (such an $i$ must exist, because the sequence $t_0,t_1,t_2,\ldots$ is increasing), and let this time step at which $t_i$ appears be $t'_i \geq t_\infty$. By the previous reasoning, we have that $S_{t'_{i}} \cup Z_{< t'_{i}} \cup Z_{\ge t'_{i}} \supseteq L_\infty$. However, by construction, the time step $t'_i$ is beyond $t^*_i$, and hence by the invariant that we have maintained (property (1) above), $|Z_{\ge t'_i} \setminus L_i| < \infty$. Hence, $S_{t'_i} \cup Z_{<_{t'_i}} \supseteq L_\infty \setminus (L_i \cup \{Z_{\ge t'_i} \setminus L_i\})$. But $S_{t'_i} \cup Z_{< t'_i}$ is a finite set, and  $L_\infty \setminus (L_i \cup \{Z_{\ge t'_i} \setminus L_i\})$ is an infinite set, so this is impossible.
\end{proof}

The proof above allows us to even rule out %
algorithms that satisfy much weaker versions of the exhaustive generation guarantee.
Consider, for example, the following randomized guarantee:

\begin{corollary}[Randomized Exhaustive Generation Lower Bound]
    For the collection $\mcC$ considered in the proof of \Cref{thm:exhaustive-generation-lb}, there cannot exist a randomized algorithm $\mcA$ which satisfies the following guarantee: for any $K \in \mcC$ and any enumeration of $K$, there exists $t^* < \infty$ such that for any $t \ge t^*$, it holds with probability $> 1/2$ that
    \begin{enumerate}
        \item $|Z_{\ge t} \setminus K|<\infty$.
        \item $S_t \cup Z_{< t} \cup Z_{\ge t} \supseteq K$.
    \end{enumerate}
\end{corollary}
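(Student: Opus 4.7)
My plan is to re-run the adversarial construction in the proof of \Cref{thm:exhaustive-generation-lb} essentially verbatim, with the deterministic thresholds replaced by their probabilistic analogues, and then close the argument with a union bound. Concretely, in phase $i$ I fix the prefix already built through the end of phase $i-1$ and apply the randomized guarantee to the enumeration of $L_i$ that extends this prefix, obtaining a finite $t_i^*$ such that for every $t \ge t_i^*$, both properties (1) and (2) for $L_i$ hold with probability strictly greater than $1/2$ over the algorithm's internal coins. Setting $t_i = \max(t_i^*, t_{i-1}+1)$, the string $t_i$ gets enumerated at some time step $t'_i \ge t_i^*$. Concatenating these phases produces a valid enumeration of $L_\infty$, and applying the randomized guarantee one final time yields $t_\infty^* < \infty$; I then pick the smallest $i$ for which $t'_i \ge t_\infty^*$ (such an $i$ exists since $t'_i \to \infty$).

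The critical observation is a coupling between the two scenarios. The generators $\mcG_1,\ldots,\mcG_{t'_i}$ depend on the input only through the prefix $x_1,\ldots,x_{t'_i}$ together with the algorithm's random coins, and this prefix is \emph{identical} in the $L_i$ enumeration and in the $L_\infty$ enumeration (the two enumerations diverge only past time $t'_i$). Hence the joint distribution of $(Z_{<t'_i}, Z_{\ge t'_i})$ is the same in both scenarios, so the two probabilistic guarantees can be interpreted as events on a common probability space, namely the algorithm's internal randomness.

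Let $A = \{|Z_{\ge t'_i}\setminus L_i| < \infty\}$, which has probability strictly greater than $1/2$ thanks to the $L_i$-guarantee at time $t'_i \ge t_i^*$, and let $B = \{S_{t'_i} \cup Z_{<t'_i} \cup Z_{\ge t'_i} \supseteq L_\infty\}$, which has probability strictly greater than $1/2$ thanks to the $L_\infty$-guarantee at time $t'_i \ge t_\infty^*$. The union bound yields $\Pr[A \cap B] > 0$, so there is a realization of the algorithm's coins for which both events hold. In that realization, $Z_{\ge t'_i} \subseteq L_i \cup F$ for some finite set $F$, hence $S_{t'_i} \cup Z_{<t'_i} \cup Z_{\ge t'_i} \subseteq L_i \cup (S_{t'_i} \cup Z_{<t'_i} \cup F)$. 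For event $B$ to hold, the finite set $S_{t'_i} \cup Z_{<t'_i} \cup F$ must then contain the infinite set $L_\infty \setminus L_i$, which is impossible.

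The only real obstacle is the coupling step: ensuring that a single probability space underlies both probabilistic guarantees, so that the union bound can be applied to yield a deterministic realization leading to a contradiction. Once one notes that $\mcG_t$ depends on the input only through the prefix seen so far and that the two prefixes agree exactly up to time $t'_i$, the rest of the argument is a near-mechanical adaptation of the deterministic proof, with the finite-vs-infinite contradiction appearing in exactly the same form.
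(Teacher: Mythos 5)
Your proof is correct and follows essentially the same route as the paper's: run the deterministic adversarial construction verbatim, pick the smallest $i$ with $t'_i \geq t_\infty^*$, and note that the two "$> 1/2$" events must co-occur with positive probability, forcing the finite set $S_{t'_i} \cup Z_{<t'_i}$ (plus a finite error set) to cover the infinite set $L_\infty \setminus L_i$. Your explicit remark about the coupling --- that the prefixes of the $L_i$ and $L_\infty$ enumerations coincide through time $t'_i$ so the two guarantees live on the same probability space --- is a nice clarification that the paper leaves implicit.
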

\begin{proof}
    We need only modify the last paragraph in the proof of \Cref{thm:angluin-characterization} as follows: for the constructed enumeration of $L_\infty$, there must be some finite time step $t_\infty$ such that for every $t \ge t_\infty$, the generator $\mcG_{t}$ output by $\mcA$ at $t$ satisfies that with probability $> 1/2$, $S_{t} \cup Z_{< t} \cup Z_{\ge t} \supseteq L_\infty$. Consider then the smallest $i$ for which $t_i$ appears at a time step beyond $t_\infty$ (such an $i$ must exist, because the sequence $t_0,t_1,t_2,\ldots$ is increasing), and let this time step at which $t_i$ appears be $t'_i$. By the previous reasoning, we have that with probability $> 1/2$, $S_{t'_{i}} \cup Z_{< t'_{i}} \cup Z_{\ge t'_{i}} \supseteq L_\infty$. However, by construction, the time step $t'_i$ is beyond $t^*_i$, and hence by property (1) above, with probability $>1/2$, $|Z_{\ge t'_i} \setminus L_i|<\infty$.
    Since the two events $S_{t'_{i}} \cup Z_{< t'_{i}} \cup Z_{\ge t'_{i}} \supseteq L_\infty$ and $|Z_{\ge t'_i} \setminus L_i|<\infty$ individually occur with probability $>1/2$, they both together occur with probability $> 0$. Thus, with probability $> 0$,
    $S_{t'_i} \cup Z_{<_{t'_i}} \supseteq L_\infty \setminus (L_i \cup \{Z_{\ge t'_i} \setminus L_i\})$. But $S_{t'_i} \cup Z_{< t'_i}$ is a finite set, and $L_\infty \setminus (L_i \cup \{Z_{\ge t'_i} \setminus L_i\})$ is an infinite set, meaning that $S_{t'_{i}} \cup Z_{< t'_{i}} \cup Z_{\ge t'_{i}} \supseteq L_\infty$ should happen with probability $0$. This is a contradiction.
\end{proof}

\subsection{Separation between Identifiability and Exhaustive Generation}
\label{sec:identifiability-exhaustive-generation-separation}

One might observe that while it is not possible to exhaustively  generate from the collection $\mcC$ considered in the proof of \Cref{thm:exhaustive-generation-lb}, it is additionally also not possible to \textit{identify} from this collection. Furthermore, the proof we presented also has parallels to the way one might go about proving non-identifiability for this collection. A natural question to consider then is: are the notions of identifiability and exhaustive generation the same?

For clarity, we recall the definition of identifiability, wherein an algorithm is trying to figure out the \textit{index} of the target language being enumerated, and state Angluin's characterization \citep{angluin1979finding, angluin1980inductive} for collections of languages that are identifiable in the limit.  
\begin{definition}[Identification in the limit \cite{gold1967language}]
    An algorithm identifies in the limit from languages in a collection $\mcC=\{L_1,L_2,\ldots\}$, if for any language $K \in \mcC$ and any enumeration of $K$ presented to the algorithm, there exists $t^* < \infty$ such that for all $t \ge t^*$, the index $i_t$ output by the algorithm at time step $t$ satisfies $L_{i_t}=K$.
\end{definition}

\vspace{-.8cm}
\setlength{\jot}{0pt}
\begin{flalign}
    \label{condition:angluin-condition-with-enumeration}
    &\textbf{\underline{Angluin's Condition with Enumeration (Condition 1 in \cite{angluin1980inductive})}:}\nonumber \\
    &\text{A collection $\mcC=\{L_1,L_2,\ldots\}$ satisfies
    Angluin's Condition with Enumeration, if there exists a } \nonumber \\ 
    &\text{computable procedure, which for every language $L \in \mcC$, outputs an enumeration of a finite set $T$,} \nonumber \\
    &\text{such that $T \subseteq L$, and furthermore, every $L' \in \mcC$ that contains $T$ satisfies that $L'$ is not a proper }\nonumber \\
    &\text{subset of $L$.}&&
\end{flalign}

\begin{theorem}[Angluin's characterization (Theorem 1 in \cite{angluin1980inductive})]
    \label{thm:angluin-characterization}
    A collection of languages $\mcC=\{L_1,L_2,\ldots\}$ is identifiable in the limit if and only if it satisfies Angluin's Condition with Enumeration. 
\end{theorem}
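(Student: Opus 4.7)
The plan is to prove both directions of the equivalence. For the \emph{sufficiency} direction, let $P$ be the given computable tell-tale procedure and write $T_i^{(t)}$ for the set of strings output by $P(L_i)$ after $t$ simulation steps. I would define an identifier $M$ that on input $S_t$ outputs the smallest $i \le t$ satisfying both $S_t \subseteq L_i$ and $T_i^{(t)} \subseteq S_t$. When the target is $L_k$, the tell-tale $T_k$ is fully enumerated by $P$ within a finite number of steps, and since $T_k \subseteq L_k$, eventually $T_k \subseteq S_t$, making $i = k$ a valid candidate for all large $t$. For any $i < k$ chosen at infinitely many $t$, we would have $T_i \subseteq S_t \subseteq L_i$ and $S_t \subseteq L_k$ at those times; letting $t \to \infty$ and using $\bigcup_t S_t = L_k$ gives $T_i \subseteq L_k \subseteq L_i$. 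The tell-tale property of $T_i$ rules out $L_k$ being a proper subset of $L_i$, forcing $L_k = L_i$ and hence $i = k$ by distinctness of the indices, a contradiction. So $M$ eventually settles on $k$.

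For the \emph{necessity} direction, let $M$ be an identifier. The key concept is a \emph{locking sequence} for $L_i$: a finite sequence $\sigma$ of strings in $L_i$ on which $M$ outputs the index of $L_i$ and continues to do so on every extension $\sigma\tau$ by further $L_i$-strings. Locking sequences must exist, for otherwise, starting from any prefix on which $M$ is correct one could always find an extension forcing $M$ to err; interleaving such bad extensions with the canonical enumeration of $L_i$ would produce a valid enumeration of $L_i$ on which $M$ errs infinitely often, contradicting identification. The set of strings in any locking sequence $\sigma$ is then a tell-tale: for any $L' \subsetneq L_i$ containing these strings, concatenating $\sigma$ with any enumeration of $L'$ yields a valid enumeration of $L'$ (since $L' \subseteq L_i$), and the locking property keeps $M$ pinned at the index of $L_i$ forever, contradicting identification of $L'$.

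The main obstacle is making the tell-tale procedure \emph{computable}, since the locking property cannot be certified in finite time. My plan is a dovetailed construction: enumerate candidate sequences $\sigma_1, \sigma_2, \ldots$ of strings from $L_i$, and in parallel run verification jobs that enumerate extensions $\tau$ and check whether $M(\sigma_k \tau)$ still equals the index of $L_i$. At each step $t$, the procedure's ``current candidate'' is the smallest $k$ whose verification has not yet been refuted, and the procedure outputs a string from $\sigma_k$ in round-robin fashion. Because a genuine locking sequence exists, its verification never refutes, so the current candidate must stabilize at some $\sigma_{k^{**}}$ which is itself locking (else it too would be refuted in finite time). The set of distinct strings ever output is then a finite superset of the strings in $\sigma_{k^{**}}$, and any superset of a tell-tale within $L_i$ is again a tell-tale. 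Showing that the set of distinct outputs is finite---which amounts to arguing that the current candidate changes only finitely often---is the central technical point I would verify carefully.
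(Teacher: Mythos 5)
The paper states this theorem purely as a citation to Angluin (1980, Theorem~1) and offers no proof of its own, so there is no in-paper argument to compare against. Your sketch reproduces the standard route to the result: identification-by-enumeration with a tell-tale filter for sufficiency, and locking sequences (\`a la Blum--Blum) plus dovetailing for necessity, so the overall shape is correct.

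A few points worth tightening. In the sufficiency direction, the final step ``forcing $L_k = L_i$ and hence $i=k$ by distinctness of the indices, a contradiction'' overreaches: if $L_i = L_k$ with $i < k$, that is not a contradiction (the collection may contain duplicates, and in any case outputting $i$ with $L_i = K$ is correct identification under the paper's definition). What your argument actually shows --- that any index output infinitely often names the target language --- is exactly what is needed; you should state the conclusion in that form and argue convergence via stabilization of the eligibility of each finite index.

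In the necessity direction, the verification step ``check whether $M(\sigma_k\tau)$ still equals the index of $L_i$'' is not decidable as written, because $M$ may lawfully output a different index $j$ with $L_j = L_i$, and language equality cannot be checked. The standard repair is to define locking \emph{syntactically}: $\sigma$ is locking if $M(\sigma\tau)=M(\sigma)$ for every extension $\tau$ by $L_i$-strings, which is decidable query-by-query; one then observes that for a valid identifier, a syntactic locking sequence necessarily locks onto a \emph{correct} index, since otherwise appending a full enumeration of $L_i$ to $\sigma$ would make $M$ err from $\sigma$ onward. Note also that the paper's definition of identification permits the output index to oscillate among correct values, which makes the existence of \emph{syntactic} locking sequences slightly more delicate than the Blum--Blum argument you cite; you should either work with Gold's convergence-to-a-fixed-index formulation (which Angluin uses, and is equivalent in power) or address oscillation explicitly. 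Your handling of the finiteness of the output set is sound: the ``current candidate'' index is nondecreasing (refutations are permanent) and bounded by the first syntactically locking index, so only finitely many $\sigma_k$ ever contribute strings.
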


Note that identifiability immediately implies exhaustive generation---once an algorithm has identified the target language, it can simply enumerate all the strings from it thereafter. Does exhaustive enumeration also imply identifiability? The following example shows that there are collections that are non-identifiable, but can be exhaustively generated.

\begin{example}
    \label{example:identifiability-exhaustive-generation-separation}
    Let $\Sigma^*=\Z$. Let $L_\infty = \Z$, and for any $i \in \Z$, let $L_i = \Z \setminus \{i\}$. Consider the countable collection $\mcC=L_\infty \cup \bigcup_{i \in \Z}L_i$. We first claim that it is not possible to identify from $\mcC$. To see this, observe that for every finite subset $T$ of $L_\infty$, any language $L_i$ for which $i \notin T$ contains $T$, but $L_i$ \textit{is} a proper subset of $L_\infty$. Hence, this collection cannot satisfy Angluin's Condition with Enumeration, and by \Cref{thm:angluin-characterization}, $\mcC$ is not identifiable in the limit.

    We now argue that it is possible to exhaustively generate from $\mcC$ in a straightforward manner. Consider the algorithm $\mcA$, which, oblivious of the input sequence, simply generates the sequence $0,-1,1,-2,2,-3,3,-4,4,\ldots$ in this order; the generator $\mcG_t:\N \to \Sigma^*$ that $\mcA$ outputs at time step $t$ may be appropriately deduced from this. We claim that this algorithm exhaustively generates from the collection. To see this, consider first that the target language is $L_\infty$. Then, we can set $t^*=1$, for which, we have that for all $t \ge t^*$, $Z_{\ge t} \subseteq L_\infty$, and also, $Z_{<t} \cup Z_{\ge t} = L_\infty$. Now, consider instead that the target language is $L_i$, for some $i \in \Z$. Then, observe crucially that once the algorithm generates $i$ (an error), it makes no further errors. Namely, let $t_i$ be the time step at which the algorithm generates $i$. We can set $t^*= t_i+1$, which guarantees that for all $t \ge t^*$, $Z_{\ge t} \subseteq L_i$ and also $Z_{<t} \cup Z_{\ge t} \supseteq L_i$.
\end{example}

\subsection{Connections to Generation with Breadth \citep{kalavasis2024limits}}
\label{sec:connections-to-breadth}

Our definition of exhaustive generation is related to the definition of generation with breadth given in \cite{kalavasis2024limits}. While both these definitions seek to formalize the notion of \textit{generating the entirety of the language}, there are important differences.

To elaborate on these, we first state the definition of generation with breadth in the limit from \cite{kalavasis2024limits}. While their model assumes that the generator output by the algorithm at time step $t$ is a \textit{distribution} over strings in $\Sigma^*$, and defines this notion in terms of the \textit{support} of the distribution, we can equivalently state the definition in our formulation, where the generator $\mcG_t$ is a mapping from $\N \to \Sigma^*$ that enumerates the support of the distribution. Conversely, we can obtain a generator in the sense of \cite{kalavasis2024limits} by considering any arbitrary distribution supported on the range $Z_{\ge t}$ of $\mcG_t$.

\begin{definition}[Generation with breadth \citep{kalavasis2024limits}]
    A generating algorithm $\mcA$ generates with breadth in the limit from languages in a collection $\mcC$, if for any $K \in \mcC$ and any enumeration of $K$, there exists $t^* < \infty$ such that for every $t \ge t^*$, it holds that $Z_{\ge t}=K$.\footnote{\cite{kalavasis2024limits} work with two equivalent definitions of generation with breadth; the condition we state is one of them (see Remark 2 in \cite{kalavasis2024limits}).}
\end{definition}

\begin{remark}
    \label{remark:kmv-other-notions}
    \cite{kalavasis2024limits} sometimes refer to the above as generation with \textit{exact} breadth, to disambiguate it from some other relaxations that they consider, like generation with \textit{approximate} breadth and \textit{unambiguous} generation. For approximate breadth (Definition 21 in \cite{kalavasis2024limits}), the generator is required to eventually only generate strings from the target language, but is allowed to miss out on \textit{finitely} many strings from it. In unambiguous generation (Definition 8 in \cite{kalavasis2024limits}), the generator is allowed to hallucinate (i.e., generate strings not belonging to the target language) infinitely often, but eventually, the set of strings it generates should have the smallest symmetric difference to the target language than any other language in the collection. For more details about how these, and other notions relate to one another, we refer the reader to the work by \cite{kalavasis2024characterizations}.
\end{remark}

\subsubsection{Consistency and Breadth Simultaneously Impossible}
\label{sec:consistency-breadth-impossible}

Observe that an algorithm that generates with breadth also satisfies our definition of exhaustive generation. This, together with \Cref{thm:exhaustive-generation-lb}, helps us answer an open question asked by \cite{kalavasis2024limits} in the negative:

\paragraph{Open Question 1 in \cite{kalavasis2024limits}:} \textit{Is there a class of generative algorithms for which the induced generators can be modeled as Turing machines and which achieve breadth and consistency for all countable collections of languages?}

Here, by ``consistency'', we simply mean the requirement of \Cref{def:generation-in-the-limit}. Note that if there was such a class of algorithms, which achieves generation with breadth for all countable collections, it would also imply a class of  algorithms, that achieves exhaustive generation for all countable collections. But this would contradict \Cref{thm:exhaustive-generation-lb}, which exhibits a countable collection that cannot be exhaustively generated by any algorithm. Thus, there cannot exist a class of generative algorithms that achieves breadth and consistency for all countable collections.

\subsubsection{Distinction between Generation with Breadth and Exhaustive Generation}
\label{sec:distinction-breadth-exhaustive-generation}

However, exhaustive generation \textit{does not} imply generation with breadth. In particular, notice that generation with breadth does not allow an algorithm to utilize the previous strings $Z_{\le t}$ generated by it, whereas our second condition for exhaustive generation (\Cref{def:exhaustive-generation}) allows an algorithm to include an arbitrary subset of $Z_{\le t}$ in order to cover the entirety of the target language. %
Our definition of exhaustive generation was formulated independently of the work of \cite{kalavasis2024limits}; hence the difference in the two definitions.
As we shall see, this crucial distinction is what separates exhaustive generation from generation with breadth.

Before we go on, we briefly explain our rationale for including a subset of previously generated elements $Z_{\le t}$ in covering the target language $K$.
Requiring the generator to produce the entirety of the target language (i.e., the condition $Z_{\ge t}=K$) does appear rather close (although not obviously equivalent) to being able to identify the target language.
The strong lower bounds on language identification motivated us to consider our definition of exhaustive generation, which allows for a notion of error in using previously generated elements to cover the target language $K$.
Allowing such errors is already present in
the original notion of generation in the limit from \cite{kleinberg2024language}.
The natural analog for exhaustive generation is to allow the algorithm to use a subset of the elements generated so far in covering $K$, leading to our definition.

We now elaborate on the relationship of the various notions: generation with breadth, exhaustive generation and language identification.
One of the main results (Theorem 3.5) in \cite{kalavasis2024limits} shows that, if there is a generating algorithm satisfying a certain ``$\mop$'' condition, which generates with breadth from a collection, then it can be used to construct an algorithm that can identify languages from the collection. The technical ``$\mop$'' condition, short for Membership Oracle Problem, is the following: for any generator that the algorithm may output at any time step (which, in the formulation of \cite{kalavasis2024limits}, is a \textit{distribution} over strings), it should be possible to decide whether any string $x$ belongs to the support of the generator.

But we can observe that the generators $\mcG_t$ output by the exhaustive generation algorithm from \Cref{example:identifiability-exhaustive-generation-separation}, viewed in the distributional sense of \cite{kalavasis2024limits}, satisfy the above $\mop$ property---to check if some string $x$ belongs to the support of $\mcG_t$, 
note that $Z_{\ge 2t+1} = \Z \setminus \{-t,\ldots,t\}$ and $Z_{\ge 2t} = \Z \setminus \{-t,\ldots, t-1\}$.
But %
even so, the collection in the example is not identifiable! We can also observe that if the target language is, say $L_1$, at no time step $t \ge 1$ in the generated sequence does it hold that $Z_{\ge t}=L_1$. Hence, this algorithm does not generate with breadth. In fact, as we show ahead, there cannot exist an algorithm that generates with breadth for this collection.

\begin{claim}
    \label{claim:breadth-exhaustive-gen-separation}
    The collection $\mcC$ considered in \Cref{example:identifiability-exhaustive-generation-separation} cannot be generated with breadth in the limit.
\end{claim}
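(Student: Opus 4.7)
My plan is to assume, for contradiction, that some algorithm $\mcA$ generates with breadth from $\mcC$, and then to adversarially construct a valid enumeration of $L_\infty$ along which the breadth condition fails. Since the generator $\mcG_t$ output by $\mcA$ depends only on $S_t$, I will write $r(T) \subseteq \Sigma^*$ for the range of the generator produced on input set $T$. Breadth then translates to the following: for every valid enumeration of $\Z = L_\infty$, $r(S_t) = \Z$ for all but finitely many $t$; and for every valid enumeration of $L_i = \Z \setminus \{i\}$, $r(S_t) = \Z \setminus \{i\}$ for all but finitely many $t$. These two demands are incompatible in a strong sense, which I will exploit via a phase-by-phase construction.

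Fixing an enumeration $z_1, z_2, \ldots$ of $\Z$ up front, I will build $E$ in alternating phases $A_0, B_0, A_1, B_1, \ldots$, maintaining $S_t$ and, at the end of each $A$-phase, an integer $i^*_k \notin S_t$. Phase $A_k$ will first append $i^*_{k-1}$ (when $k \ge 1$), then every $z_j$ with $j \le k$ not yet in $S_t$, and then $z_{k+1}, z_{k+2}, \ldots$ one at a time, halting as soon as $r(S_t) = \Z$; I will set $i^*_k$ to be any integer outside $S_t$ at that point. Phase $B_k$ will append successive elements of $\Z \setminus \{i^*_k\}$ (from the $z_j$'s, skipping $i^*_k$ and elements already in $S_t$) until $r(S_t) = \Z \setminus \{i^*_k\}$.

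The core lemma will be that every phase terminates in finitely many steps. If phase $A_k$ never terminated, the resulting infinite $E$ would exhaust $z_1, z_2, \ldots$ and hence be a valid enumeration of $L_\infty$ with $r(S_t) \ne \Z$ throughout $A_k$, contradicting breadth for $L_\infty$. Similarly, if phase $B_k$ never terminated, then since $i^*_k$ is never appended in $B_k$, $E$ would be a valid enumeration of $L_{i^*_k}$ with $r(S_t) \ne \Z \setminus \{i^*_k\}$ throughout $B_k$, contradicting breadth for $L_{i^*_k}$. So each phase is finite and the construction has infinitely many phases; the forced inclusion of $z_1, \ldots, z_k$ in phase $A_k$ and of $i^*_k$ at the start of $A_{k+1}$ ensures every integer eventually appears in $E$, so $E$ is a valid enumeration of $L_\infty$.

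The contradiction is then immediate: at the final step of each phase $B_k$ we have $r(S_t) = \Z \setminus \{i^*_k\} \ne \Z$, so $r(S_t) \ne \Z$ for infinitely many $t$ along $E$, contradicting breadth for $L_\infty$. The main obstacle I anticipate is orchestrating the phases so that each one demonstrably terminates while the aggregate enumeration simultaneously covers $\Z$; the interleaving of the prefix $z_1, \ldots, z_k$ inside each $A$-phase is the crucial design choice that balances these competing demands.
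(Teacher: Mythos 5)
Your proof is correct and follows essentially the same phase-based adversarial strategy as the paper: build a single valid enumeration of $L_\infty$ in phases, where in each phase the adversary presents a (growing) prefix that could also be extended to an enumeration of some $L_i$, forcing the generator (by breadth applied to $L_i$) to converge to $\Z \setminus \{i\}$; since this happens for infinitely many phases, $r(S_t) \ne \Z$ infinitely often along a valid enumeration of $L_\infty$, contradicting breadth for $L_\infty$. Your proof uses a generic fixed enumeration $z_1, z_2, \ldots$ of $\Z$ rather than the paper's specific $\next$-based iteration, which is a clean simplification. One minor structural observation: your $A_k$ phases, which drive the generator back to $r(S_t) = \Z$, are not strictly necessary for the final contradiction. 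Since the contradiction already comes from the infinitely many $B_k$-phase endpoints where $r(S_t) = \Z \setminus \{i^*_k\} \ne \Z$, a bounded catch-up phase that merely appends $i^*_{k-1}$ and $z_1, \ldots, z_k$ (to guarantee coverage of $\Z$) and then immediately hands control to $B_k$ would suffice—this is essentially what the paper does, where each phase transition just inserts the previously excluded element plus the next missing one. Your version is harmless, though, and has the benefit that termination of every phase is argued symmetrically from the breadth guarantee.
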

\begin{proof}
    The proof steps are similar to the proof of \Cref{thm:exhaustive-generation-lb}. Assume for the sake of contradiction that there exists an algorithm $\mcA$ that generates the languages in $\mcC$ with breadth. This means that for any $K \in \mcC$ and any enumeration of $K$, there exists a $t^* < \infty$ such that for any $t \ge t^*$, it holds that $Z_{\ge t} = K$.
    
    For the sake of convenience, for any $x \in \Z$, define
    \begin{align*}
        \next(x) &= \begin{cases}
            -x & \text{if } x < 0 \\
            -(x+1) & \text{if } x > 0 \\
            -1 & \text{if } x =0.
        \end{cases}
    \end{align*}
    That is, $\next(x)$ is the example immediately following $x$ in the sequence $0,-1,1,-2,2,-3,3,\ldots$. Suppose now that an adversary enumerates $L_0$ as $-1,1,-2,2,-3,3,\ldots$. Then, there must exist some time step $t^*_0 < \infty$, such that for the generator $\mcG_{t^*_0}$ output by $\mcA$ at $t^*_0$, we have that $Z_{\ge t^*_0} = L_0$. Let $t_0=t^*_0$, and let the example in the enumeration above at time step $t_0$ be $n'_1$.

    Next, consider the enumeration of $L_{n_1}$, where $n_1=\next(n'_1)$, given as 
    $$-1,1,-2,2,-3,3,\ldots,n'_1, 0, -1, 1, -2, 2,\ldots,n'_1,\next(n_1),\ldots$$
    where beyond $n'_1$, observe that we inserted $0$, and skipped including $n_1$. The reason we inserted $0$ was because it was missing in the enumeration before $n'_1$ (on account of the previous enumeration of $L_0$). This is a valid enumeration of $L_{n_1}$, and hence there must exist a finite time step $t^*_{1} < \infty$, such that for any $t \ge t^*_{1}$, the generator $\mcG_{t}$ output by $\mcA$ at $t$ satisfies that $Z_{\ge t} = L_{n_1}$. In particular, choose a time step $t_1$ such that the example $n'_2$ at $t_1$ satisfies that $|n'_2| > |n'_1|$.

    Now, consider the enumeration of $L_{n_2}$, where $n_2=\next(n'_2)$, given as 
    $$-1,1,-2,2,-3,3,\ldots,n'_1, 0, -1, 1, -2, 2,\ldots,n'_2,n_1,0,-1,1,-2,2,-3,3,n'_2,\next(n_2),\ldots$$
    where beyond $n'_2$, observe that we again included $n_1$ (because it was missing in the enumeration till $n'_2$), and did not include $n_2$. This is a valid enumeration of $L_{n_2}$, and hence there must exist a finite time step $t^*_{2} < \infty$, such that for any $t \ge t^*_{2}$, the generator $\mcG_{t}$ output by $\mcA$ at $t$ satisfies that $Z_{\ge t} = L_{n_2}$. In particular, choose a time step $t_2$ such that the example $n'_3$ at $t_2$ satisfies that $|n'_3| > |n'_2|$.

    We can repeat the same argument indefinitely, and observe that this results in an enumeration of $L_\infty$. This is because the $|n'_1| < |n'_2| < |n'_3| < \ldots$, and the sequence between any $0,-1,1,\ldots,n'_{i+1}, n_i$ comprises of all the elements  in the enumeration $0,-1,1,-2,2,\ldots$ up until $n'_{i+1}$.

    Now, if $\mcA$ were to successfully generate with breadth from $L_\infty$, for the above enumeration of $L_\infty$, there must be some finite time step $t_\infty$ such that for every $t \ge t_\infty$, the generator $\mcG_{t}$ output by $\mcA$ at $t$ satisfies that $Z_{\ge t} = L_\infty$. Consider then the smallest $j$ such that $n'_{j+1}$ appears at a time step beyond $t_\infty$, and let $t_j$ be the time step at which $n'_{j+1}$ appears. By the invariant we have maintained above, it must be the case that $Z_{\ge t_j}=L_{n_j}$, which contradicts the requirement that $Z_{\ge t_j}=L_\infty$.
\end{proof}

We make one final summarizing remark: our lower bounds from \Cref{claim:breadth-exhaustive-gen-separation} and \Cref{thm:exhaustive-generation-lb} hold for \textit{all} generators, but only for the specific collections that we consider, whereas the lower bound from \cite[Theorem 3.5]{kalavasis2024limits} for generation with (exact) breadth holds for all non-identifiable collections, but only for a restricted class of generators (namely, those satisfying the MOP condition).

\subsubsection{Necessary Condition for Generation with Breadth}
\label{sec:generation-with-breadth-necessary-condition}

The collection in \Cref{example:identifiability-exhaustive-generation-separation} exhibits a countable collection for which generation with breadth is not possible, but exhaustive generation is possible. Recall that we had argued above that this collection is not identifiable in the limit (the same adversary strategy employed in the proof above also foils any given identification algorithm). As we show next, a necessary condition for identifiability, which we denote as ``Angluin's Condition with Existence'', is also necessary for generation with breadth. This further illustrates how the notions of identification in the limit and generation with breadth are closely tied together.

\vspace{-.6cm}
\setlength{\jot}{0pt}
\begin{flalign}
    \label{condition:angluin-condition-with-existence}
    &\textbf{\underline{Angluin's Condition with Existence (Condition 2 in \cite{angluin1980inductive})}:}\nonumber \\
    &\text{A collection $\mcC=\{L_1,L_2,\ldots\}$ satisfies
    Angluin's Condition with Existence, if for every language} \nonumber \\
    &\text{$L \in \mcC$, there exists a finite subset $T \subseteq L$, such that every $L' \in \mcC$ that contains $T$ satisfies that $L'$} \nonumber \\
    &\text{is not a proper subset of $L$.}&&
\end{flalign}

Corollary 1 in \cite{angluin1980inductive} shows that the above condition is necessary for identification in the limit. The following proposition shows that it is also necessary for generation with breadth.

\begin{proposition}
    \label{prop:generation-with-breadth-necessary-condition}
    If a collection $\mcC=\{L_1,L_2,\ldots\}$ can be generated with breadth, then it satisfies Angluin's Condition with Existence.
\end{proposition}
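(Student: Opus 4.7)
The plan is to argue by contradiction. Suppose $\mcA$ generates with breadth from $\mcC$ but $\mcC$ fails Angluin's Condition with Existence: there is a language $L \in \mcC$ such that every finite $T \subseteq L$ admits some $L' \in \mcC$ with $T \subseteq L' \subsetneq L$. Fix an enumeration $L = \{l_1, l_2, \ldots\}$. I will adversarially construct an enumeration on which $\mcA$'s breadth guarantee fails.

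The construction proceeds in alternating phases, each carrying a target $F \in \mcC$. In an ``$L$-phase'' ($F = L$), the adversary enumerates the next element of $l_1, l_2, \ldots$ not yet seen; in an ``$L'$-phase'' ($F = L' \subsetneq L$ for some $L' \in \mcC$), it enumerates elements of $L'$ in a fixed order. After each step $t$, consult the generator $\mcG_t$ that $\mcA$ outputs, compute $Z_{\ge t}$, and compare it with $F$. If $Z_{\ge t} = F$, transition: from an $L$-phase, enter a new $L'$-phase whose target is some $L' \in \mcC$ with $S_t \subseteq L' \subsetneq L$ (such $L'$ exists by failure of Angluin's condition for $L$, since $S_t \subseteq L$ is finite); from an $L'$-phase, enter the next $L$-phase. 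A key structural point is that failure of Angluin's condition for $L$ is invoked \emph{afresh at every} $L$-to-$L'$ transition; this is legitimate because $S_t$ is always a finite subset of $L$ at those moments.

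The heart of the argument is the claim that every phase must terminate. If an $L$-phase ran forever, the overall enumeration would be a valid enumeration of $L$ (it covers every $l_k$, since each $L$-phase step adds the next unseen one, and uses only elements of $L$), so breadth would force $Z_{\ge t} = L$ at some finite $t$, triggering a transition---a contradiction. If an $L'$-phase with target $L'$ ran forever, the overall enumeration would be a valid enumeration of $L'$: at the moment of entry $S_t \subseteq L'$, every subsequent insertion lies in $L'$, and running forever covers all of $L'$; breadth again forces $Z_{\ge t} = L'$ at some finite $t$, triggering a transition---a contradiction. Hence every phase is finite, there are infinitely many transitions, and the overall enumeration is a valid enumeration of $L$. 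Applying breadth gives $t^*$ with $Z_{\ge t} = L$ for all $t \ge t^*$; but at every transition out of an $L'$-phase we have $Z_{\ge t} = L' \ne L$, and infinitely many such transitions occur, so one lies at a time $t > t^*$---the desired contradiction. The main subtlety is the validity bookkeeping: that each $L'$-phase suffix, if extended forever, really constitutes a valid enumeration of its target, and that the global enumeration visits every $l_k$; both are handled by the choices of enumerating into $L'$ that contains $S_t$ at each transition, and of advancing the unseen-$l$ index at the first step of each $L$-phase.
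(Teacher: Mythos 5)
Your proof is correct and follows essentially the same strategy as the paper's: you build an adversarial enumeration that alternates between enumerating into some $L' \subsetneq L$ (found by applying the negation of Angluin's Condition with Existence to the finite prefix $S_t$) and presenting fresh elements of $L$, wait in each $L'$-stage for the breadth guarantee to force $Z_{\ge t} = L'$, then observe that the resulting enumeration covers all of $L$ and yet has $Z_{\ge t} \neq L$ at infinitely many and arbitrarily large times. The only cosmetic difference is that your $L$-phases persist until $Z_{\ge t} = L$, whereas the paper's construction inserts exactly one new element of $L$ before seeking the next $L'$; both are valid and lead to the same contradiction.
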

\begin{proof}
    Let us assume for the sake of contradiction that the collection $\mcC = \{L_1,L_2,\ldots\}$ does not satisfy \eqref{condition:angluin-condition-with-existence}, but $\mcA$ is an algorithm that generates from languages in $\mcC$ with breadth. Then, there exists a language $L \in \mcC$, such that 
    \begin{align}
    \text{$\forall$ finite subsets $T \subseteq L$, $\exists$ $L' \in \mcC$ that satisfies $T \subset L'$, and $L'$ is a proper subset of $L$.}
    \label{negation-condition-2}
    \end{align}
    Fix an enumeration of all the strings in $L$, and let this be 
    \begin{align}
        \label{eqn:fixed-ordering-L}
        L=\{x_1,x_2,\ldots\}.
    \end{align}
    Suppose that the adversary first presents $x_1$ as input. Then by \eqref{negation-condition-2}, for $T_1=\{x_1\}$, there exists $L_1 \in \mcC$ such that $T_1 \subset L_1$ and $L_1$ is a proper subset of $L$. Then, suppose that the adversary proceeds to enumerate strings in $L_1$ one by one in the order that they appear in the enumeration of $L$ above, skipping over $x_1$. Since $\mcA$ generates with breadth, and the strings presented so far constitute a valid enumeration of $L_1$, there must exist some time $t^*_1 < \infty$ such that for every $t \ge t^*_1$, it holds that $Z_{\ge t}=L_1$. Let $t_1 = t^*_1$. Suppose now that at time $t_1+1$, the adversary presents the string in $L \setminus L_1$ which has not been enumerated so far, and appears at the smallest index in the ordering in \eqref{eqn:fixed-ordering-L} above.

    Now, consider the finite set $S_{t_1+1}$ enumerated so far, and observe that $S_{t_1+1} \subset L$. So, let $T_2 = S_{t_1+1}$. Again by \eqref{negation-condition-2}, there exists $L_2 \in \mcC$ such that $T_2 \subset L_2$ and $L_2$ is a proper subset of $L$. Since all the strings presented so far are contained in $L_2$, from time step $t_1+2$ onward, the adversary continues to enumerate all the strings in $L_2$ one by one in the order that they appear in \eqref{eqn:fixed-ordering-L}, skipping over strings that have already been enumerated. Combined with the strings presented so far, this is a valid enumeration of $L_2$. Thus, there must exist some time $t^*_2 < \infty$ such that for every $t \ge t^*_2$, it holds that $Z_{\ge t}=L_2$. Let $t_2 \ge t^*_2$ be such that $t_2 > t_1$. Now suppose that at time $t_2+1$, the adversary presents the string in $L \setminus L_2$ which has not been enumerated so far, and appears at the smallest index in the ordering in \eqref{eqn:fixed-ordering-L} above.

    Now, consider the finite set $S_{t_2+1}$ enumerated so far, and observe that $S_{t_2+1} \subset L$. So, let $T_3 = S_{t_2+1}$. Again by \eqref{negation-condition-2}, there exists $L_3 \in \mcC$ such that $L_3 \supseteq T_3$, but also, $L_3$ is a proper subset of $L$. Thus, since all the strings presented so far are contained in $L_3$, from time step $t_2+2$ onward, the adversary continues to enumerate all the strings in $L_3$ one by one in the order that they appear in \eqref{eqn:fixed-ordering-L}, skipping over strings that have already been enumerated. Combined with the strings presented so far, this is a valid enumeration of $L_3$. Thus, there must exist some time $t^*_3 < \infty$ such that for every $t \ge t^*_3$, it holds that $Z_{\ge t}=L_3$. Let $t_3 \ge t^*_3$ be such that $t_3 > t_2$. Now suppose that at time $t_3+1$, the adversary inputs the string in $L \setminus L_3$ which has not been enumerated so far, and appears at the smallest index in the ordering in \eqref{eqn:fixed-ordering-L} above. \ldots

    We can repeat this argument indefinitely, and observe that the adversary will have produced an enumeration of $L$. This is because $t_1 < t_2 < t_3 \ldots$, and at every time step $t_i+1$, the adversary inputs the smallest indexed string in \eqref{eqn:fixed-ordering-L} that has not yet been enumerated.
    The condition that we maintain is that, when presented with the sequence up to time step $t_i$, 
    the generator $\mcG_{t_i}$ output by $\mcA$ at $t_i$ satisfies the generation with breadth guarantee for $L_i$; in particular, it maintains the invariant $Z_{\ge t_i} = L_{i}$.

    Now, if $\mcA$ were to successfully generate with breadth from $L$, for the above enumeration of $L$, there must be some finite time step $t^*$ such that for every $t \ge t^*$, the generator $\mcG_{t}$ output by $\mcA$ at $t$ satisfies that $Z_{\ge t} = L$. Consider then the smallest $j$ such that $t_j$ is a time step beyond $t^*$ in the above argument. By the invariant we have maintained, it must be the case that $Z_{\ge t_j}=L_{j}$. But note that $L_j$ is a proper subset of $L$, and this contradicts the generation with breadth requirement for $L$ that $Z_{\ge t_j}=L$.
\end{proof}

\begin{remark}
    \label{remark:generation-with-breadth-condition-gap}
    We note that the only technical difference between Angluin's Condition with Existence \eqref{condition:angluin-condition-with-existence} and Angluin's Condition with Enumeration \eqref{condition:angluin-condition-with-enumeration} is the \textit{efficient computability} (recursive enumerability) of the sets $T$ for every language $L$. Angluin's Condition with Enumeration, which requires the $T$'s to be computable, is also sufficient for generation with breadth, simply because it is sufficient for identification. However, we note that Angluin's Condition with Existence, which simply ensures existence of the $T$'s, is not sufficient for identification in the limit (see Theorem 2 in \cite{angluin1980inductive}). Therefore, if one were to show that Angluin's Condition with Enumeration (a stronger condition) is necessary for generation with breadth, then this would equate the two notions of identification in the limit and generation with breadth. On the other hand, showing that Angluin's Condition with Existence (a weaker condition) is sufficient for generation with breadth would separate the two notions.
\end{remark}

\section{Characterization of Exhaustive Generation}
\label{sec:exhaustive-generation-characterization}

Similar to Angluin's characterization (\Cref{thm:angluin-characterization}) for identification in the limit, in this section, we fully characterize exhaustive generation. First, in \Cref{sec:weak-angluin-condition-with-existence}, we introduce a weakening of Angluin's Condition with Existence, denoted as ``Weak Angluin's Condition with Existence'', and show that this condition characterizes the language conditions that can be exhaustively generated. Namely, \Cref{prop:exhaustive-characterization-necessary} shows that the condition is necessary for exhaustive generation to be possible at all. \Cref{prop:exhaustive-characterization-sufficient-function} shows that the condition is also sufficient for exhaustive generation, albeit with access to a somewhat strong oracle. Our next result (\Cref{prop:exhaustive-characterization-sufficient-membership}) in \Cref{sec:weak-angluin-condition-with-enumeration} then shows that a slight strengthening of the condition, which we denote as ``Weak Angluin's Condition with Enumeration'', is sufficient for exhaustive generation with the standard membership query oracle.

\subsection{Weakening of Angluin's Condition with Existence}
\label{sec:weak-angluin-condition-with-existence}

We start by showing that the collections of languages that can be exhaustively generated from are \textit{exactly} those collections that satisfy the following condition:

\vspace{-.6cm}
\setlength{\jot}{0pt}
\begin{flalign}
    \label{condition:weak-angluin-condition-with-existence}
    &\textbf{\underline{Weak Angluin's Condition with Existence}} \text{:}\nonumber \\
    &\text{A collection $\mcC=\{L_1,L_2,\ldots\}$ satisfies
    Weak Angluin's Condition with Existence, if for every } \nonumber \\
    &\text{language $L \in \mcC$, there exists a finite subset $T \subseteq L$, such that every $L' \in \mcC$ that contains $T$ and is } \nonumber \\
    &\text{a proper subset of $L$ satisfies $|L \setminus L'| < \infty$.}&&
\end{flalign}

The following two propositions, respectively show that the above condition is necessary and sufficient for exhaustive generation, thereby establishing \Cref{thm:exhaustive-generation-characterization}.

\begin{proposition}[Exhaustive Generation Necessary Condition]
    \label{prop:exhaustive-characterization-necessary}
    If a collection $\mcC=\{L_1,L_2,\ldots\}$ can be exhaustively generated, then it satisfies Weak Angluin's Condition with Existence.
\end{proposition}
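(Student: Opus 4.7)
The plan is to show the contrapositive: if Weak Angluin's Condition with Existence fails at some $L \in \mcC$, then no algorithm exhaustively generates $\mcC$. We adapt the adversarial construction from the proof of \Cref{prop:generation-with-breadth-necessary-condition}, crucially exploiting the extra strength in the negation: each ``trick'' language $L'$ chosen at a given step can be taken so that $|L \setminus L'|=\infty$, not merely $L'\subsetneq L$.

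Fix $L$ at which the condition fails, fix any enumeration $x_1,x_2,\ldots$ of $L$, and assume for contradiction that $\mcA$ exhaustively generates from $\mcC$. Build the input in phases indexed by $i=1,2,\ldots$. At the start of Phase $i$, let $T_i$ be the set of distinct strings already enumerated (with $T_1=\{x_1\}$ after we feed $x_1$). By the negation, pick $L_i \in \mcC$ with $T_i \subseteq L_i$, $L_i \subsetneq L$, and $|L \setminus L_i| = \infty$. Then continue enumerating the strings of $L_i$ in the index order of $x_1, x_2, \ldots$, skipping those already seen. Since this input up to any time $t$ during Phase $i$ is extendable to a valid enumeration of $L_i$, and $\mcG_t$ depends only on the input seen so far, exhaustive generation with target $L_i$ supplies a threshold $t_i^*$ beyond which $|Z_{\ge t}\setminus L_i|<\infty$. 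Set $t_i := \max(t_i^*, t_{i-1}+1)$ and close Phase $i$; at time $t_i+1$, start Phase $i+1$ by inserting the smallest-indexed string of $x_1,x_2,\ldots$ not yet enumerated.

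Two invariants drive the argument: (a) the overall input is a valid enumeration of $L$, because the smallest unenumerated index strictly increases across phases (we explicitly insert it at the start of each phase), and (b) $S_{t_i} \subseteq L_i$, because $L_i$ was chosen to contain everything enumerated before Phase $i$ and Phase $i$ only enumerates strings from $L_i$; invariant (b) is what legitimizes applying exhaustive generation with target $L_i$ in the previous paragraph. Now apply exhaustive generation with target $L$: there exists $t^*$ such that for all $t\ge t^*$, $S_t\cup Z_{<t}\cup Z_{\ge t}\supseteq L$. Pick $i$ with $t_i \ge t^*$. From Phase $i$'s guarantee, $Z_{\ge t_i}\subseteq L_i\cup F_i$ for some finite $F_i$, hence
\[
S_{t_i}\cup Z_{<t_i}\cup Z_{\ge t_i}\subseteq L_i\cup\bigl(S_{t_i}\cup Z_{<t_i}\cup F_i\bigr),
\]
and the parenthesized set is finite. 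Since $|L\setminus L_i|=\infty$, the right side cannot contain $L$, the desired contradiction. The main obstacle is invariant (b): we must carefully ensure that the prefix at time $t_i$ is genuinely a valid prefix of an enumeration of $L_i$, so that the exhaustive generation guarantee for $L_i$ is legally applicable to the generator $\mcG_{t_i}$. The strengthened hypothesis $|L \setminus L_i| = \infty$ (rather than merely $L_i \subsetneq L$, as suffices in \Cref{prop:generation-with-breadth-necessary-condition}) is precisely what converts the failure of the equality $Z_{\ge t_i}=L$ needed for breadth into the failure of the weaker inclusion $S_{t_i}\cup Z_{<t_i}\cup Z_{\ge t_i}\supseteq L$ needed here.
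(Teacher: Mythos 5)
Your proposal is correct and is essentially the same proof as in the paper: the same contrapositive, the same phase-based adversarial enumeration that exploits the strengthened negation $|L \setminus L_i| = \infty$, the same invariant that $S_{t_i} \subseteq L_i$ legitimizes applying the exhaustive generation guarantee for $L_i$ at time $t_i$, and the same finite-versus-infinite contradiction at the end. The only differences are cosmetic (e.g., naming the finite error set $F_i = Z_{\ge t_i} \setminus L_i$).
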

\begin{proof}
    Let us assume for the sake of contradiction that the collection $\mcC = \{L_1,L_2,\ldots\}$ does not satisfy the condition, but $\mcA$ is an algorithm that exhaustively generates from languages in $\mcC$. Then, there exists a language $L \in \mcC$, such that: 
    \begin{align}
    \text{$\forall$ finite subsets $T \subset L$, $\exists$ $L' \in \mcC$, such that $T \subset L'$, $L' \subsetneq L$, but $|L \setminus L'|=\infty$.}
    \label{negation-of-exhaustive-generation-condition}
    \end{align}
    Fix an enumeration of all the strings in $L$, and let this be 
    \begin{align}
        \label{eqn:fixed-ordering-L-2}
        L=\{x_1,x_2,\ldots\}.
    \end{align}
    Suppose that the adversary first presents $x_1$ as input. Then by \eqref{negation-of-exhaustive-generation-condition}, for $T_1=\{x_1\}$, there exists $L_1 \in \mcC$ such that $T_1 \subset L_1$, $L_1$ is a proper subset of $L$, and $|L \setminus L_1|=\infty$. Then, suppose that the adversary proceeds to enumerate strings in $L_1$ one by one in the order that they appear in the enumeration of $L$ above, skipping over $x_1$. Since $\mcA$ generates exhaustively for the collection, and the strings presented so far constitute a valid enumeration of $L_1$, there must exist some time $t^*_1 < \infty$ such that for every $t \ge t^*_1$, it holds that $|Z_{\ge t} \setminus L_1|<\infty$. Let $t_1 = t^*_1$. Suppose now that at time $t_1+1$, the adversary presents the string in $L \setminus L_1$ that appears at the smallest index in the ordering in \eqref{eqn:fixed-ordering-L-2} within the set of strings that have not been enumerated so far. 

    Now, consider the finite set $S_{t_1+1}$ enumerated so far, and observe that $S_{t_1+1} \subseteq L$. Let $T_2 = S_{t_1+1}$. Again by \eqref{negation-of-exhaustive-generation-condition}, there exists $L_2 \in \mcC$ such that $T_2 \subset L_2$, $L_2$ is a proper subset of $L$, and $|L \setminus L_2|=\infty$. Since all the strings presented so far are contained in $L_2$, from time step $t_1+2$ onward, the adversary continues to enumerate all the strings in $L_2$ one by one in the order that they appear in \eqref{eqn:fixed-ordering-L-2}, skipping over strings that have already been enumerated. Combined with the strings presented so far, this is a valid enumeration of $L_2$. Thus, there must exist some time $t^*_2 < \infty$ such that for every $t \ge t^*_2$, it holds that $|Z_{\ge t} \setminus L_2| < \infty$. Let $t_2 \ge t^*_2$ be such that $t_2 > t_1$. Now suppose that at time $t_2+1$, the adversary presents the string in $L \setminus L_2$ that appears at the smallest index in the ordering in \eqref{eqn:fixed-ordering-L-2} within the set of strings which have not been enumerated so far.

    Now, consider the finite set $S_{t_2+1}$ enumerated so far, and observe that $S_{t_2+1} \subseteq L$. Let $T_3 = S_{t_2+1}$. Again by \eqref{negation-of-exhaustive-generation-condition}, there exists $L_3 \in \mcC$ such that $T_3 \subset L_3$, $L_3$ is a proper subset of $L$, and $|L \setminus L_3|=\infty$. Since all the strings presented so far are contained in $L_3$, from time step $t_2+2$ onward, the adversary continues to enumerate all the strings in $L_3$ one by one in the order that they appear in \eqref{eqn:fixed-ordering-L-2}, skipping over strings that have already been enumerated. Combined with the strings presented so far, this is a valid enumeration of $L_3$. Thus, there must exist some time $t^*_3 < \infty$ such that for every $t \ge t^*_3$, it holds that $|Z_{\ge t} \setminus L_3|<\infty$. Let $t_3 \ge t^*_3$ be such that $t_3 > t_2$. Now suppose that at time $t_3+1$, the adversary presents the string in $L \setminus L_3$ which appears at the smallest index in the ordering in \eqref{eqn:fixed-ordering-L-2} within the set of strings that have not been enumerated so far. \ldots

    We can repeat this argument indefinitely, and observe that the adversary will have produced an enumeration of $L$. This is because $t_1 < t_2 < t_3 \ldots$, and at every time step $t_i+1$, the adversary presents the string with smallest index in the ordering \eqref{eqn:fixed-ordering-L-2} that has not yet been enumerated.
    The condition that we maintain is that, when presented with the sequence up to time step $t_i$, 
    the generator $\mcG_{t_i}$ output by $\mcA$ at $t_i$ satisfies the exhaustive generation guarantee for $L_i$; in particular, it maintains the invariant $|Z_{\ge t_i} \setminus L_{i}|<\infty$.
    
    Now, if $\mcA$ were to exhaustively generate from $L$, for the above enumeration of $L$, there must be some finite time step $t^*$ such that for every $t \ge t^*$, the generator $\mcG_{t}$ output by $\mcA$ at $t$ satisfies that $S_t \cup Z_{<t} \cup Z_{\ge t} \supseteq L$. Consider then the smallest $j$ such that $t_j$ is a time step beyond $t^*$ in the above argument. By the exhaustive generation requirement for $L$, we must have that $S_{t_j} \cup Z_{< t_j} \cup Z_{\ge t_j} \supseteq L$. However, by the invariant we have maintained above, it is also the case that $|Z_{\ge t_j} \setminus L_{j}|<\infty$. Hence,
    \begin{align}
    S_{t_j} \cup Z_{< t_j} \supseteq L \setminus Z_{\ge t_j} \supseteq L \setminus (L_j \cup Z_{\ge t_j}) = L \setminus (L_j \cup (Z_{\ge t_j} \setminus L_{j})) = (L \setminus L_j) \setminus (Z_{\ge t_j} \setminus L_{j}).
    \label{finite-infinite-set-equality}
    \end{align}
    But $S_{t_j} \cup Z_{t_j}$ and $Z_{\ge t_j} \setminus L_{j}$ are finite sets, while $L \setminus L_j$ is an infinite set, so \eqref{finite-infinite-set-equality} is impossible.    
\end{proof}

\begin{remark}[Relaxed Exhaustive Generation]
    \label{remark:relaxed-exhaustive-generation}
    We remark that essentially the same proof above also establishes that Weak Angluin's Condition with Existence is a necessary condition for a slightly more relaxed definition of exhaustive generation. Namely, consider replacing the two conditions in \Cref{def:exhaustive-generation}, by the single condition $|Z_{\ge t} ~\Delta~ K| < \infty$, where $\Delta$ denotes the symmetric difference (i.e., $A ~\Delta~ B=(A\setminus B) \cup (B \setminus A)$). We can verify that the same contradiction in the proof above also goes through for this definition. However, we also observe that while \Cref{def:exhaustive-generation} implies this relaxed definition, the relaxed definition does not imply \Cref{def:exhaustive-generation} (namely, the relaxed definition does not necessarily satisfy the second condition in \Cref{def:exhaustive-generation}).
\end{remark}

\begin{proposition}[Exhaustive Generation Sufficient Condition]
    \label{prop:exhaustive-characterization-sufficient-function}
    If a collection $\mcC=\{L_1,L_2,\ldots\}$ satisfies Weak Angluin's Condition with Existence, then it can be exhaustively generated, 
    with access to an oracle which determines, for any $i,j$, whether $L_i \setminus L_j$ is finite.
\end{proposition}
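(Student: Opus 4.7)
The plan is to construct at each time $t$ an index $\iota_t \le t$ using $S_t$, membership queries, and the finite-difference oracle, and let $\mcG_t$ canonically enumerate $L_{\iota_t}$. I will show that for all sufficiently large $t$, $L_{\iota_t} \supseteq K$ and $|L_{\iota_t} \setminus K| < \infty$; once this holds, condition~(1) of \Cref{def:exhaustive-generation} follows from $|Z_{\ge t} \setminus K| = |L_{\iota_t} \setminus K| < \infty$, and condition~(2) from $Z_{\ge t} \supseteq L_{\iota_t} \supseteq K$. (If no index is admissible at time $t$, $\mcG_t$ is set to any fixed default generator.)

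Call $i \le t$ \emph{admissible at time $t$} if $S_t \subseteq L_i$ and no $j \le t$ is a \emph{bad witness for $i$}, where $j$ is a bad witness iff (i) $S_t \subseteq L_j$; (ii) the oracle reports $|L_j \setminus L_i| < \infty$; (iii) the oracle reports $|L_i \setminus L_j| = \infty$; and (iv) every one of the first $t$ strings in the canonical enumeration of $L_j$ (obtained by membership-testing a fixed enumeration of $\Sigma^*$ against $L_j$) lies in $L_i$. Clause~(iv) is a dovetailed, time-bounded substitute for ``$L_j \subseteq L_i$'': it holds for all $t$ when $L_j \subseteq L_i$, and whenever $L_j \not\subseteq L_i$ any $w \in L_j \setminus L_i$ eventually appears among the first $t$ enumerated members of $L_j$ and falsifies it. Let $\iota_t$ be the smallest admissible index.

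The core claim is that for all sufficiently large $t$, $\iota_t$ equals $i_0 := \min\{i \le i^* : L_i \supseteq K \text{ and } |L_i \setminus K| < \infty\}$, a value that exists since $i^*$ itself qualifies. The key use of Weak Angluin's Condition with Existence is showing $i^*$ is admissible for large $t$: fixing the tell-tale $T^* \subseteq K$ given by the condition, $T^* \subseteq S_t$ eventually. Any candidate bad witness $j$ for $i^*$ with $L_j \subseteq K$ must satisfy $L_j \subsetneq K$ (else $|K \setminus L_j| = 0$ contradicts (iii)), and then $T^* \subseteq L_j$ combined with Weak Angluin's forces $|K \setminus L_j| < \infty$, contradicting (iii) again; while any $j$ with $L_j \not\subseteq K$ eventually has a distinguishing $w \in L_j \setminus K$ surface in the first $t$ enumerated elements of $L_j$, falsifying (iv). Conversely, every $i < i^*$ with $|L_i \triangle K| = \infty$ is eventually inadmissible: $L_i \not\supseteq K$ becomes inconsistent with $S_t$ once any $x \in K \setminus L_i$ is enumerated, whereas $L_i \supsetneq K$ with $|L_i \setminus K| = \infty$ is disqualified (for $t \ge i^*$) by the bad witness $j = i^*$, whose clauses (i)--(iii) are immediate and whose (iv) holds vacuously since $L_{i^*} = K \subseteq L_i$. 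A parallel argument shows that every $i \le i^*$ with $L_i \supseteq K$ and $|L_i \setminus K| < \infty$ is also admissible for large $t$ (any persistently consistent candidate $j$ would have $L_j \supseteq K$ and, under clause (iv), $L_j \subseteq L_i$, forcing $L_i \setminus L_j$ finite and contradicting (iii)), so indeed $\iota_t = i_0$ once $t$ is large enough.

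The principal technical subtlety---and the step I expect to require the most care---is formulating clause (iv) against the enumeration of $L_j$ itself rather than of $\Sigma^*$. If one instead used $\Sigma^*$'s enumeration, an adversary could place the distinguishing element of a sneaky $L_j$ arbitrarily far out in $\Sigma^*$, keeping clause (iv) spuriously satisfied at arbitrarily large times and indefinitely disqualifying the correct target. Anchoring clause (iv) on $L_j$'s own enumeration ensures that for each relevant pair $(i,j)$ with $L_j \not\subseteq L_i$ only a finite delay separates this fact from its detection, after which the rest of the proof reduces to routine bookkeeping over how quickly $T^*$ enters $S_t$, how quickly each inconsistent $L_i$ is rendered inconsistent, and how quickly counter-examples surface in the enumerations of the various $L_j$.
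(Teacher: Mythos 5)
There is a genuine gap in the argument that $i^*$ becomes admissible for all sufficiently large $t$. For a \emph{fixed} $j$ with $L_j \not\subseteq K$, the truncated subset test in clause~(iv) does eventually fail, as you say; but the quantifiers run the wrong way. At time $t$ the bad-witness candidates $j$ range over $\{1,\ldots,t\}$, so \emph{new} languages whose stray elements have not yet surfaced within the first $t$ steps of their canonical enumerations keep entering the pool, and the detection delay grows with $j$ at the same rate as the dovetailing horizon. Consequently $i^*$ can be disqualified at \emph{every} time step by a fresh witness, and $\iota_t$ need not ever stabilize at $i_0$.

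Here is a concrete collection satisfying Weak Angluin's Condition with Existence on which your algorithm fails. Let $\Sigma^* = \Z$ with fixed enumeration $0,1,-1,2,-2,\ldots$. Set $L_1 = K = \{1,2,3,\ldots\}$, and for $j\ge 2$ set
\[
L_j \;=\; \{1,\ldots,j\}\;\cup\;\{-j\}\;\cup\;\{j+2,j+4,j+6,\ldots\}.
\]
No language in $\mcC=\{L_1,L_2,\ldots\}$ is a proper subset of another (each $L_j$ with $j\ge 2$ is the unique member containing $-j$, and $j+1\in K\setminus L_j$ so $K\not\subseteq L_j$), hence \eqref{condition:weak-angluin-condition-with-existence} holds vacuously. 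Let the adversary enumerate $K$ as $1,2,3,\ldots$, so $S_t=\{1,\ldots,t\}$ and $i^*=1$. For every $t\ge 2$, the index $j=t$ is a bad witness for $i^*=1$: clause~(i) holds since $S_t\subseteq L_t$; clause~(ii) holds since $L_t\setminus K=\{-t\}$ is finite; clause~(iii) holds since $K\setminus L_t=\{t+1,t+3,\ldots\}$ is infinite; and clause~(iv) holds because the first $t$ strings in $L_t$'s canonical enumeration are exactly $1,\ldots,t\subseteq K$, with $-t$ appearing only at position $t+1$. Thus $i^*=1$ is never admissible. Moreover no $j\in\{2,\ldots,t-1\}$ has $S_t\subseteq L_j$ (since $j+1\in S_t\setminus L_j$), so $\iota_t=t$ for all $t\ge 2$. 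But $L_t$ misses the infinite set $\{t+1,t+3,\ldots\}\subseteq K$, so setting $Z_{\ge t}$ to an enumeration of $L_{\iota_t}$ violates condition~2 of \Cref{def:exhaustive-generation} at every time step. Note also that the choice to anchor clause~(iv) on $L_j$'s own enumeration rather than $\Sigma^*$'s does not help here: for the relevant candidate $j=t$, the stray element $-t$ lies beyond the $t$-th position in \emph{both} orderings.

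The root difficulty is that your scheme tries to single out a language \emph{containing} $K$ with finite excess and enumerate it wholesale; any procedure that must commit to such a superset is vulnerable to an endless stream of impostor supersets whose distinguishing elements sit just beyond the current dovetailing depth. The paper's proof sidesteps this by working from below: following \citet{kleinberg2024language}, it takes as a base the last critical language $L_{i_{t^\star}}$ (which is provably a subset of $K$ once $T^*\subseteq S_t$), and then unions in the finite sets $L_{i_j}\setminus L_{i_{t^\star}}$ for each critical superset $L_{i_j}$ with finite excess over the base, as certified by the given oracle. Since the base can only undercount $K$, and the additions contribute only finitely much beyond $K$, the argument never needs to correctly identify a superset of $K$ at any single step.
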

\begin{proof}
    We will show that the algorithm of \cite{kleinberg2024language}, together with a slight modification, exhaustively generates in the limit. First, we recall their algorithm. Suppose that the target language being enumerated is $L_{z}$ for $z \in \N$.

    At time step $t$, the algorithm considers the languages in the subcollection $\mcC_t = \{L_i: 1 \le i \le t, L_i \supseteq S_t\}$, i.e., the languages among $L_1,\ldots,L_t$ that are consistent with the input $S_t$ enumerated so far. Let $\mcC_t = \{L_{i_1},\ldots,L_{i_{n_t}} \}$, where $i_1 < i_2 < \ldots < i_{n_t}$. A language $L_{i_j} \in \mcC_t$ is termed \textit{critical} if $L_{i_j} \subseteq L_{i_k}$ for every $k < j$. Consider the largest $j \le n_t$ such that the language $L_{i_j} \in \mcC_t$ is critical, and denote this $j$ by $t^\star$---the algorithm generates a string from $L_{i_{t^\star}} \setminus S_t$. 
    
    We quickly review their proof of correctness for this algorithm. We can verify that there exists a large enough time step $t^+ \ge z$ such that for all $t \ge t^+$, language $L_{z}$ is critical at time $t$ (this is because $L_{z}$ is always consistent with the input, and hence always belongs to $\mcC_t$ for $t \ge z$, and also, for every language $L_i$ such that $i < z$ which satisfies that $L_{z} \not\subseteq L_i$, there is a string in $L_{z} \setminus L_i$ which eventually gets enumerated, and hence makes $L_i$ inconsistent). Furthermore, we can also verify that for every $t \ge t^+$, the last critical language $L_{i_{t^\star}}$ that the algorithm chooses to generate from satisfies (by definition of criticality) that $L_{i_{t^\star}} \subseteq L_{z}$. This establishes correctness. %

    Our exhaustive generation algorithm closely follows this algorithm of \cite{kleinberg2024language}. At time step $t$, let $L_{i_{t^\star}}$ be the language considered as above. Then, as we argued, for every $t \ge t^+$, we have that $L_{i_{t^\star}} \subseteq L_{z}$. Now, consider the finite set $T_{z} \subseteq L_{z}$ which is guaranteed to exist by \eqref{condition:weak-angluin-condition-with-existence}. Observe crucially that there also exists a large enough $t'$ such that for every $t \ge t'$, $T_{z} \subseteq S_t$. This is because $T_{z}$ is a finite subset of $L_{z}$, and every string in $L_{z}$ is guaranteed to show up in the input enumeration at some finite time. So, consider $t''=\max(t', t^+)$. 

    We now claim that for every $t \ge t''$, the language $L_{i_{t^\star}}$, in addition to satisfying $L_{i_{t^\star}} \subseteq L_z$ (by the property of criticality), also additionally satisfies that $|L_z \setminus L_{i_{t^\star}}| < \infty$. To see this, suppose that $L_{i_{t^\star}} \subset L_z$ (if $L_{i_{t^\star}} = L_z$, then the claim is vacuously true). Because we have chosen a time step larger than $t'$, the argument from the previous paragraph gives us that $T_z \subseteq S_t$. Furthermore, by definition of the algorithm, $S_t \subseteq L_{i_{t^\star}}$. Thus, $L_{i_{t^\star}}$ is a language satisfying $T_z \subseteq L_{i_{t^\star}}$, and also that $L_{i_{t^\star}} \subset L_z$. Therefore, \eqref{condition:weak-angluin-condition-with-existence} implies that $|L_z \setminus L_{i_{t^\star}}| < \infty$.

    So, consider populating a language $Z_{\ge t}$ as follows. We initialize $Z_{\ge t}=L_{i_{t^\star}}$. In the collection of consistent languages $\mcC_t = \{L_{i_1},\ldots, L_{i_{t^\star}},\ldots, L_{n_t}\}$ considered at time step $t$, consider every $j \le t^\star$ such that 
    \begin{align}
    \text{$L_{i_j}$ is critical (and also a superset of $L_{i_{t^\star}}$), and also satisfies $|L_{i_j} \setminus L_{i_{t^\star}}| < \infty$,}
    \label{checked-condition}
    \end{align}
    and update $Z_{\ge t} = Z_{\ge t} \cup (L_{i_j} \setminus L_{i_{t^\star}})$ for every such $j$. 
    
    Note that because $L_z \in \mcC_t$, and $z \le i_{t^\star}$, there is a $j$ satisfying $i_j=z$ that will be considered. Furthermore, by the argument above, $|L_z \setminus L_{i_{t^\star}}| < \infty$, and hence this $j$ will pass the condition \eqref{checked-condition}. Because there are only finitely many $j \le t$, and for every $j$ satisfying the condition, we only add finitely many strings to $Z_{\ge t}$, at the end of the procedure we ensure that 1) $Z_{\ge t} \supseteq L_z$, and also 2) $|Z_{\ge t} \setminus L_z| < \infty$.

    The exhaustive generation algorithm outputs the generator $\mcG_t$, which simply enumerates the strings in $Z_{\ge t}$ that is constructed by the above procedure, if it is asked to go into generate-only mode at this time. By design, we have ensured that for all $t \ge t''$, both conditions required for exhaustive generation  in \Cref{def:exhaustive-generation} are satisfied.
\end{proof}

\begin{remark}
    We note that the algorithm in the above proof of \Cref{prop:exhaustive-characterization-sufficient-function} also needs access to a {\em subset oracle} which, given indices $i,j$ determines whether $L_i \subseteq L_j$.
    \citet{kleinberg2024language} show how to implement their algorithm using only membership queries, without needing such a subset oracle. The same ideas can be applied in our setting as well.
\end{remark}

\subsection{Weakening of Angluin's Condition with Enumeration}
\label{sec:weak-angluin-condition-with-enumeration}

While the condition in \eqref{condition:weak-angluin-condition-with-existence} fully characterizes exhaustive generation, our algorithm above for collections satisfying this condition admittedly requires a more powerful oracle (for any $i,j$, it must be able to determine if $L_i \setminus L_j$ is finite) than the standard membership query oracle considered by \cite{kleinberg2024language}. As our final result in this section, we show that a slight strengthening of \eqref{condition:weak-angluin-condition-with-existence} is sufficient for exhaustive generation with the standard membership query oracle. The strengthened condition requires the efficient computatability of the tell-tale sets in \eqref{condition:weak-angluin-condition-with-existence}, and is yet another instantiation of the subtle difference between existence and enumerability of these sets.

\vspace{-.6cm}
\setlength{\jot}{0pt}
\begin{flalign}
    \label{condition:weak-angluin-condition-with-enumeration}
    &\textbf{\underline{Weak Angluin's Condition with Enumeration}} \text{:}\nonumber \\
    &\text{A collection $\mcC=\{L_1,L_2,\ldots\}$ satisfies
    Weak Angluin's Condition with Enumeration, if there exists} \nonumber \\
    &\text{a computable procedure, which for every language $L \in \mcC$, ouptuts an enumeration of a finite set} \nonumber \\
    &\text{$T$, such that $T \subseteq L$, and furthermore, every $L' \in \mcC$ that contains $T$ and is a proper subset of $L$} \nonumber \\
    &\text{satisfies $|L \setminus L'| < \infty$.}&&
\end{flalign}

\begin{proposition}%
    \label{prop:exhaustive-characterization-sufficient-membership}
    If a collection $\mcC=\{L_1,L_2,\ldots\}$ satisfies Weak Angluin's Condition with Enumeration, then it can be exhaustively generated with only membership oracle access to the language collection.
\end{proposition}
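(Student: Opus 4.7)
The plan is to modify the algorithm from the proof of \Cref{prop:exhaustive-characterization-sufficient-function} so that the finiteness oracle is replaced by two membership-query subroutines: (a) the tell-tale enumeration $T_L$ guaranteed for each $L \in \mcC$ by Weak Angluin's Condition with Enumeration, and (b) a dovetailed counterexample search for subset relations. Since \cite{kleinberg2024language} already show how to execute the ``last critical consistent language'' machinery with only membership queries (i.e., without a subset oracle), we use their implementation as a black box to produce fresh strings of $L_{i_{t^\star}}$.

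At each time $t$, the algorithm maintains the consistent subcollection $\mcC_t \subseteq \{L_1,\dots,L_t\}$ using membership queries. The generator $\mcG_t:\N \to \Sigma^*$ produces strings by dovetailing the following tasks across its successive calls: $(i)$ using the \cite{kleinberg2024language} subroutine, keep emitting fresh distinct strings of $L_{i_{t^\star}}$; and $(ii)$ for every $L \in \mcC_t$ in dovetailed fashion, run three coordinated searches---the incremental enumeration of $T_L$ with each new element queried against $L_{i_{t^\star}}$; a scan of $\Sigma^*$ looking for a witness $w \in L_{i_{t^\star}} \setminus L$ (a counterexample to $L_{i_{t^\star}} \subseteq L$); and, as long as no observed element of $T_L$ has fallen outside $L_{i_{t^\star}}$ and no witness $w$ has been found, a scan of $\Sigma^*$ for strings $v \in L \setminus L_{i_{t^\star}}$, emitting each such $v$ as an output of $\mcG_t$. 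The moment either check fails for $L$, the scan associated with $L$ is permanently halted.

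For correctness, fix the target $K = L_z$ and take $t$ large enough that (a) $L_{i_{t^\star}} \subseteq L_z$ (from the criticality analysis of \cite{kleinberg2024language}), (b) $L_z \in \mcC_t$, and (c) the finite set $T_{L_z}$ has been enumerated into $S_t$, and hence $T_{L_z} \subseteq S_t \subseteq L_{i_{t^\star}}$ by consistency. Then both checks for $L_z$ remain consistent forever, so the scan for $L_z$ runs indefinitely and emits, in finite time, every string of the finite set $L_z \setminus L_{i_{t^\star}}$---finiteness guaranteed by Weak Angluin's Condition with Enumeration when $L_{i_{t^\star}} \subsetneq L_z$, and trivially otherwise. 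Combined with task $(i)$, this yields $Z_{\ge t} \supseteq L_z$, giving Condition~2 of \Cref{def:exhaustive-generation}. For Condition~1, each $L \in \mcC_t$ at which some check eventually fails contributes only finitely many premature emissions before being halted; each $L$ at which neither check ever fails satisfies $T_L \subseteq L_{i_{t^\star}} \subseteq L$, so the condition again yields $|L \setminus L_{i_{t^\star}}| < \infty$. Summing over the finitely many $L \in \mcC_t$ gives $|Z_{\ge t} \setminus L_z| < \infty$.

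The main subtlety I expect is that the tell-tale procedure only guarantees an r.e.\ enumeration of $T_L$---with no signal that it has completed---and the subset check is similarly only co-semi-decidable. The dovetailing above is designed precisely for this: no finite-time commitment ever requires either check to have ``finished,'' and any emissions made before a delayed failure are finite in number and therefore harmless for Condition~1. The other implementation-level concern, namely producing fresh strings of $L_{i_{t^\star}}$ itself using only membership queries to languages in $\mcC$, is inherited directly from the techniques of \cite{kleinberg2024language}.
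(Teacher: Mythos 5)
Your proposal takes a genuinely different route from the paper, but it has a gap in the membership-query implementation that you wave at but do not close. The paper does not try to salvage the closure/last-critical-language algorithm of Proposition~\ref{prop:exhaustive-characterization-sufficient-function}. Instead, it follows Angluin's identification algorithm directly: at time $n$ it forms $\mcC_n=\{L_i: 1\le i\le n,\ L_i\supseteq S_n\supseteq T_i^{(n)}\}$ --- i.e.\ it additionally filters by the requirement that the partial tell-tale enumeration $T_i^{(n)}$ has already appeared in the input --- takes the \emph{smallest} index $g$ in $\mcC_n$, and sets $Z_{\ge n}=L_g$. For $n$ large enough, every $L_i\in\mcC_n$ with $i\le z$ satisfies $L_i\supseteq L_z$ (otherwise a witness in $L_z\setminus L_i$ would already be in $S_n$), $T_i\subseteq S_n\subseteq L_z$, and hence by Weak Angluin's Condition with Enumeration $|L_i\setminus L_z|<\infty$. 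Since $L_z\in\mcC_n$, we get $g\le z$ and $L_g$ works. Everything here is computable with membership queries: $\mcC_n$ is a finite scan, and $Z_{\ge n}=L_g$ is enumerated by scanning $\Sigma^*$ with queries to the \emph{known-index} language $L_g$.

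The gap in your proposal is that every task in stage $(ii)$ --- checking each new element of $T_L$ ``against $L_{i_{t^\star}}$'', scanning for a witness $w\in L_{i_{t^\star}}\setminus L$, and scanning for $v\in L\setminus L_{i_{t^\star}}$ --- requires issuing membership queries of the form ``$x\in L_{i_{t^\star}}$?''. To issue such a query you must know the index $i_{t^\star}$, and determining the last critical index from $S_t$ is not possible with membership queries alone: criticality of $L_{i_j}$ is a conjunction of containments $L_{i_j}\subseteq L_{i_k}$, each of which is only co-semi-decidable (you can refute a containment by finding a witness, but never certify it in finite time). Deferring to \citet{kleinberg2024language} does not fix this: their membership-query implementation of the critical-language algorithm never commits to, and never reveals, the index $i_{t^\star}$; it produces a single output string per step whose correctness is only guaranteed in the limit as the revealed counterexamples accumulate. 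So it cannot serve as a black box that lets you query $L_{i_{t^\star}}$, nor as one that enumerates all of $L_{i_{t^\star}}$. You would need to additionally dovetail over candidate indices $j$ and argue that emissions made while the candidate is still wrong are finite and harmless, and that eventually the candidate stabilizes so that the $L_z$-scan emits all of $L_z\setminus L_{i_{t^\star}}$; this is not done in your write-up, and it is considerably more delicate than the paper's direct approach, which avoids computing $i_{t^\star}$ entirely.
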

\begin{proof}
Suppose we have a collection $\mcC=\{L_1,L_2,\ldots\}$ that satisfies \eqref{condition:weak-angluin-condition-with-enumeration}.
We describe an algorithm for exhaustive generation with only membership query access to the language collection, inspired by Angluin's algorithm for language identification (in the proof of Theorem 1 in \citep{angluin1980inductive}). 
Let $T_i^{(n)}$ denote the set of strings produced in the first $n$ steps of the enumeration of $T_i$.

At time step $n$, the algorithm considers the languages in the subcollection $\mcC_n = \{L_i: 1 \le i \le n, L_i \supseteq S_n \supseteq T_i^{(n)}\}$, i.e., the languages $L_i$ among $L_1,\ldots,L_n$ that are consistent with the input $S_n$ enumerated so far, with the additional condition that all strings in $T_i^{(n)}$ have also appeared in the input. %
It is easy to check that $\mcC_n$ can be determined with only membership oracle access to the language collection.
If $\mcC_n$ is empty, the algorithm sets $Z_{\geq n}$ arbitrarily.
Otherwise, let $g$ be the smallest index of a language in $\mcC_n$. The algorithm sets $Z_{\geq n}$ to be $L_g$.

We now establish correctness for this algorithm.
Suppose that the target language being enumerated is $L_{z}$ for $z \in \N$. For each $i \in \{1,\ldots,z\}$, we define $n_i$ as follows:
If $L_z \setminus L_i \neq \emptyset$, then $n_i$ is the first time step when a string from $L_z \setminus L_i$ appears in the input.
Otherwise, if $L_i \supseteq L_z$, $n_i$ is the smallest value of $n$ such that $T_i^{(n)} = T_i$.
(Note that, in fact, $T_i^{(n)} = T_i$ for all $n \geq n_i$).

Consider any time step $n \geq \max \{n_i, 1\leq i \leq z\}$.
We will show that the algorithm satisfies the correctness guarantee for exhaustive generation.
We claim that the subcollection $\mcC_n \cap \{L_1, \ldots L_z\}$ consists of precisely those languages $L_i, 1 \leq i \leq z$ such that $L_i \supseteq L_z$ and $L_i \setminus L_z$ is finite. Note that this also implies that $\mcC_n$ is non-empty since $L_z$ satisfies this condition.

Suppose that $L_z \setminus L_i \neq \emptyset$.
Since $n \geq n_i$, $S_n$ contains a string in $L_z \setminus L_i$.
Thus $L_i$ is not a consistent language at time step $n$ and is not included in $\mcC_n$.
On the other hand, suppose $L_i \in \mcC_n$.
By the previous argument, $L_i \supseteq L_z$.
Since $n \geq n_i$, $T_i \subseteq T_i^{(n)} \subseteq S_n \subseteq L_z$.
(The first set inclusion follows from the definition of $n_i$, the second from the definition of $\mcC_n$, and the third from the fact that $L_z$ is the target language.) 
Applying \eqref{condition:weak-angluin-condition-with-enumeration} with $L' = L_z$, we conclude that $|L_i \setminus L_z| < \infty$.

Now consider the smallest index $g$ of a language in $\mcC_n$.
Recall that $g$ exists since $\mcC_n$ is non-empty.
By the claim we just established, $L_g \supseteq L_z$ and $L_g \setminus L_z$ is finite.
Since the algorithm sets $Z_{\geq n} = L_g$, the algorithm satisfies the requirement for exhaustive generation.
\end{proof}

\section{Uniform Generation with Feedback}
\label{sec:genfeed}

In this section, we consider the setting of uniform generation from a language collection \textit{with feedback}, where the generating algorithm is additionally allowed, at each time step, to query if any string $w$ of choice belongs to the target language $K$ being enumerated. 
We note again that this model is different from the membership query model considered in \cite{kleinberg2024language}; there, an algorithm can only query if a string $w$ belongs to any language $L_i$ in the collection $\mcC=\{L_1,L_2,\ldots\}$. 
In the feedback setting, the algorithm is more powerful since it can directly query membership in the target language $K$.

As before, we restrict our attention to countable language collections---we know by \Cref{thm:non-uniform-ub} that non-uniform generation is always possible for such collections, without requiring any feedback. So, we want to further understand when \textit{uniform} generation is possible, with or without feedback. It is easy to construct examples of language collections that cannot be uniformly generated without feedback, but can be uniformly generated with feedback. The following example is adopted from Lemma 15 in \cite{raman2024generation}.

\begin{example}
    \label{eg:gf-but-no-gnf}
    Consider a partition $\{S_d\}_{d \in \N}$ of $\N$, where $|S_1| < |S_2| < \ldots$. Let $E$ be the set of all negative even integers, and $O$ be the set of all negative odd integers. For $d \in \N$, let $L^E_d = E \cup S_d$, and let $L^O_d = O \cup S_d$, and consider the language collection $\mcC$ comprising of all languages $L^E_d$ and $L^O_d$ for $d \in \N$. This collection has infinite closure dimension (which can be seen by noticing that the intersection of languages containing any $S_d$---namely $L^E_d$ and $L^O_d$---is exactly $S_d$), and hence cannot be uniformly generated from without feedback. However, observe that with just one query, a generator can find out whether the target language belongs to the ``even'' or ``odd'' category. Once it knows this, it can generate indefinitely from either the set of even or odd negative integers.
\end{example}

We want to now identify a property of a given collection $\mcC$
which characterizes whether it is possible to uniformly generate with feedback.
Our goal is to formulate such a property in the form of a combinatorial dimension like the closure dimension (Definition \ref{def:closure-dimension}) from the work of \cite{raman2024generation} for uniform generation.

Towards this, we first propose an alternate combinatorial dimension 
which also characterizes uniform generation (without feedback) 
and show that this is equivalent to the closure dimension.
We later generalize this combinatorial dimension and show that it characterizes uniform generation with feedback.

\subsection{The $\gnf$ Dimension for Generation with no Feedback}

First, we abstractly formalize some notions from the language generation setup, which will be particularly helpful in defining the dimension.

\paragraph{Transcript:} A transcript is a record of interaction between an adversary (who is enumerating a language $K \in \mcC$) and a generator (who is trying to generate from the language). A transcript is an infinite sequence $x_1,z_1,x_2,z_2,x_3,z_3,\ldots$, where each $x_t \in K$; here, $x_t$ is the input given to the generator at time step $t$, and $z_t$ is the string generated by the generator at time step $t$. Note that the actions of the adversary and the generator are interleaved.

\paragraph{Adversary Strategy:} An adversary strategy $A$ is a mapping from prefixes of a transcript ending in an action by the generator (in this case, the last string generated by the generator) to the next action by the adversary (in this case, the next string from $K$ to append to the enumeration). For any language $L \in \mcC$, an adversary strategy $A$ is consistent with $L$ if the sequence $x_1,x_2,\ldots$ is an enumeration of \textit{all} the strings in $L$, i.e., for every $x \in L$, there is some finite index $i$ such that $x_i=x$. Here again, we use the shorthand $S_t$ to denote the set of distinct strings in the sequence $x_1,\dots,x_t$.

\paragraph{Generator Strategy:} A generator strategy $G$ is a mapping from prefixes of a transcript ending in an action by the adversary to an action by the generator.

For an adversary strategy $A$ and generator strategy $G$, the transcript of interaction between $A$ and $G$, denoted $T(A,G)$, is the infinite sequence $x_1,z_1,x_2,z_2,\ldots$, where
\begin{align*}
    &x_1 = A(\emptyset) \\
    &z_1 = G(x_1) \\
    &x_2 = A(x_1, z_1) \\
    &z_2 = G(x_1,z_1,x_2) \\
    &\vdots
\end{align*}

\paragraph{Consistent Languages:} For a collection $\mcC$ of languages, generator strategy $G$, and adversary strategy $A$ consistent with some language $K \in \mcC$, consider the transcript $T=T(A,G)$. For $r \in \N$, we say that a language $L \in \mcC$ is consistent with $T$ upto round $r$ if $x_t \in L$ for every $t \le r$. Let $\mcC_r(T)$ denote the subset of $\mcC$ comprising of languages consistent with $T$ upto round $r$. Note that $K \in \mcC_r(T)$ for every $r \in \N$, and hence $\mcC_r(T)$ is never empty.

\paragraph{Effective Intersection:} We define the effective intersection at round $r$, denoted $E_r(T)$, as
\begin{align}
    \label{eqn:effective-intersection}
    E_r(T) &= \left\{\bigcap_{L \in \mcC_r(T)}L\right\} \setminus S_r.
\end{align}

We are now ready to define a complexity measure which we term the Generation-no-Feedback ($\gnf$) dimension.
\begin{definition}[$\gnf$ dimension]
    \label{def:gnf-dimension}
    The $\gnf$ dimension of a collection $\mcC$ is the supremum over $d \in \N$, for $d$ satisfying the following property: for every generator strategy $G$, there exists a language $K \in \mcC$ and an adversary strategy $A$ consistent with $K$ such that, in the transcript $T=T(A,G)$, there exists a finite $r \ge d$ where $|S_r| \ge d$ and $E_r(T) = \emptyset$. 
\end{definition}

We first show that the $\gnf$ dimension characterizes uniform generation, by equating it to the closure dimension (Definition \ref{def:closure-dimension}) from the work of \cite{raman2024generation}.

\begin{proposition}
    For any collection $\mcC$, the $\gnf$ dimension of $\mcC$ is equal to its closure dimension.
\end{proposition}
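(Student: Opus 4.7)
The plan is to prove the equality by establishing both inequalities, leveraging a single key observation that makes both directions essentially mechanical. The observation is that the effective intersection $E_r(T)$ is determined entirely by the set $S_r$ of distinct adversary inputs through round $r$, and is independent of the generator's outputs: since $\mcC_r(T) = \{L \in \mcC : S_r \subseteq L\}$, writing $\mathrm{cl}(S) := \bigcap_{L \in \mcC,\ S \subseteq L} L$ for the closure operator implicit in \Cref{def:closure-dimension}, one has $E_r(T) = \mathrm{cl}(S_r) \setminus S_r$. This reduces the problem to combinatorics of the closure operator on finite subsets of $\Sigma^*$.

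For the direction $\mathrm{cldim}(\mcC) \le \gnf(\mcC)$, I would start from a witness set $S = \{x_1,\ldots,x_d\}$ for the closure dimension, with $F := \mathrm{cl}(S)$ finite. Since $F \supseteq S$ and $F \subseteq L$ for every $L \supseteq S$, I can pick any $K \in \mcC$ containing $S$ and note $F \subseteq K$. The adversary strategy I construct first enumerates $x_1,\ldots,x_d$, then exhausts $F \setminus S$, and then completes a valid enumeration of $K$. At round $r = |F| \ge d$ we have $|S_r| = |F| \ge d$, and the identity $\mathrm{cl}(F) = F$ (proved by noting $\{L : F \subseteq L\} = \{L : S \subseteq L\}$, since any $L$ containing $S$ must contain the intersection $F$) gives $E_r(T) = F \setminus F = \emptyset$. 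Because this argument is oblivious to $G$, it works against every generator strategy, so $\gnf(\mcC) \ge d$.

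For the reverse $\gnf(\mcC) \le \mathrm{cldim}(\mcC)$, I would argue contrapositively: if $\mathrm{cldim}(\mcC) < d$, then no $d$-element set has finite closure. Monotonicity of the closure operator, $S \subseteq S' \Rightarrow \mathrm{cl}(S) \subseteq \mathrm{cl}(S')$ (which follows from the reverse inclusion $\{L : S' \subseteq L\} \subseteq \{L : S \subseteq L\}$), immediately upgrades this to: every set of size at least $d$ has infinite closure. Hence for any generator $G$ and any consistent adversary, any round $r$ with $|S_r| \ge d$ has $E_r(T) = \mathrm{cl}(S_r) \setminus S_r$ equal to an infinite set minus a finite set, thus nonempty. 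So no such $r$ witnesses the $\gnf$ condition, and $\gnf(\mcC) < d$.

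I do not expect any real obstacle here: once the generator-independence of $E_r(T)$ is noted, the proof reduces to the two short structural facts about the closure operator used above (monotonicity, and the idempotence $\mathrm{cl}(\mathrm{cl}(S)) = \mathrm{cl}(S)$). The mildest care is needed only in the lower-bound construction, to ensure the adversary's strategy extends to a legal enumeration of some infinite $K \in \mcC$ after the finite set $F$ is exhausted — but this is immediate from the fact that any $K \supseteq S$ in $\mcC$ is infinite and contains $F$.
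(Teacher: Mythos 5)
Your proof is correct and takes essentially the same approach as the paper: both directions construct the same witnesses, and your key observation that $E_r(T) = \mathrm{cl}(S_r)\setminus S_r$ depends only on $S_r$ is implicit in the paper's argument. The only cosmetic differences are that you present the $\gnf \le \mathrm{cldim}$ direction contrapositively where the paper argues directly, and your appeal to monotonicity is actually unnecessary, since closure dimension $<d$ already forbids by definition any set of size $\ge d$ with finite closure.
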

\begin{proof}
    Consider any $1 \le d < \infty$, and suppose that the $\gnf$ dimension of $\mcC$ is at least $d$. This means that there exists $d' \ge d$, such that for every generator strategy $G$, there exists a language $K \in \mcC$ and an adversary strategy $A$ consistent with $K$ such that in the transcript $T=T(A,G)$, there exists a finite $r \ge d'$ such that $|S_r| \ge d'$ and $E_r(T)=\emptyset$. %
    Then, fix an arbitrary generator strategy, and consider the $K \in \mcC$ and adversary strategy $A$ consistent with $K$ that satisfy this property. In particular, consider the time step $r$ in the transcript $T(A,G)$ at which $|S_r| \ge d'$ and $E_r(T)=\emptyset$. Recall that $\mcC_r(T)$ is not empty (it contains $K$), and furthermore, every language in $\mcC_r(T)$ contains $S_r$. Hence, since $E_r(T)=\emptyset$, the definition of $E_r(T)$ (see \eqref{eqn:effective-intersection}) implies that    
    \begin{align*}
        \bigcap_{L \in \mcC_r(T)}L = S_r.
    \end{align*}
    But note that the languages in $\mcC_r(T)$ are exactly those languages that are consistent with $S_r$ (i.e., they contain $S_r$), and furthermore, $S_r$ is a finite set. Thus, we have that the languages in $\mcC$ containing $S_r$ have finite intersection. Furthermore, $S_r$ is a set of size at least $d' \ge d$. This implies that the closure dimension of $\mcC$ is at least $d' \ge d$.

    Now, suppose that the closure dimension of $\mcC$ is at least $d$. This means that we can find a set $S$ of size $d' \ge d$, such that the intersection of languages in $\mcC$ that contain $S$ is finite. In particular, let
    \begin{align}
        \label{eqn:supset-condition}
        \bigcap_{L \in \mcC, L \supseteq S}L = \{x_1,\ldots,x_{d''}\},
    \end{align}
    where $d'\le d'' < \infty$. Now, fix any generator strategy $G$. Choose any $L \in \mcC$ that satisfies $L \supseteq S$ to be the target language $K$---we know that $K$ contains $\{x_1,\ldots,x_{d''}\}$ from \eqref{eqn:supset-condition}. Then, consider the adversary strategy $A$ which, in the first $d''$ rounds, enumerates $x_1,\ldots,x_{d''}$, and then continues to enumerate the rest of the strings in $K$, irrespective of the the generator's actions. Then, we have that $S_{d''}=\{x_1,\ldots,x_{d''}\}$, and $|S_{d''}|=d'' \ge d' \ge d$. Furthermore, we claim that $\cap_{L \in \mcC_{d''}(T)}L=S_{d''}$, which would imply that $E_{d''}(T)=\emptyset$. To see this, observe that $\mcC_{d''}(T)=\{L \in \mcC: L \supseteq S_{d''}\}$. Let $\mcC_1 = \{L \in \mcC: L \supseteq S\}$. Observe that if $L \supseteq S$, $L \supseteq \{x_1,\ldots,x_{d''}\}=S_{d''}$ by \eqref{eqn:supset-condition}. Thus, $\mcC_1 \subseteq \mcC_{d''}(T)$. Furthermore, if $L \supseteq S_{d''}$, then $L \supseteq S$, since $S_{d''} \supseteq S$. Thus, we also have that $\mcC_{d''}(T) \subseteq \mcC_1$, implying that $\mcC_{d''}(T) = \mcC_1$. This gives that 
    \begin{align*}
        \bigcap_{L \in \mcC_{d''}(T)}L = \bigcap_{L \in \mcC_1}L=\{x_1,\ldots,x_{d''}\} = S_{d''},
    \end{align*}
    where in the second equality, we used \eqref{eqn:supset-condition} again.
    Thus, we have obtained an $r = d'' \ge d$ such that $|S_r|\ge d''\ge d$ and $E_r(T) = \emptyset$. Since this holds regardless of the generator strategy, we have that the $\gnf$ dimension of $\mcC$ is at least $d'' \geq d$.

    The above argument shows that, for any finite $d \ge 1$, the closure dimension of $\mcC$ is at least $d$ if and only if the $\gnf$ dimension of $\mcC$ is at least $d$. Thus, either both the closure and $\gnf$ dimension of $\mcC$ are finite and equal, or they are both unbounded.
\end{proof}

Therefore, the $\gnf$ dimension is equivalent to the closure dimension and hence, characterizes uniform generation. However, this alternate formulation of the closure dimension in terms of abstract adversary and generator strategies allows us to generalize to the case where the generator is additionally allowed to query membership of a string in the target language at each time step.

\subsection{The $\gf$ Dimension for Generation with Feedback}

We systematically extend the notions defined in the previous section to account for queries issued by the generator.

\paragraph{Transcript:} A transcript is now an infinite sequence $(x_t, y_t, a_t, z_t)_{t \in \N}$, here, at time step $t$, $x_t$ is the input given by the adversary to the generator, $y_t \in \Sigma^*$ is the query issued by the generator, $a_t \in \{\yes, \no\}$ is the response given by the adversary to the membership query, and $z_t$ is the string generated by the generator. Note that the actions of the adversary and the generator are still interleaved.

\paragraph{Adversary Strategy:} An adversary strategy $A$ is still a mapping from prefixes of a transcript ending in an action by the generator (either the last string generated, or the membership query issued by the generator) to the next action by the adversary (either the next string to append to the enumeration, or a $\yes$/$\no$ response). For any language $L \in \mcC$, an adversary strategy $A$ is consistent with $L$ if (1) the sequence $x_1,x_2,\ldots$ is an enumeration of \textit{all} the strings in $L$, i.e., for every $x \in L$, there is some finite index $t$ such that $x_t=x$, and (2) for all $t \in \N$, the response $a_t=\yes$ if $y_t \in L$, and $\no$ otherwise. The shorthand $S_t$ still denotes the set of distinct strings in the input sequence $x_1,\dots,x_t$.

\paragraph{Generator Strategy:} Similarly, a generator strategy $G$ is still a mapping from prefixes of a transcript ending in an action by the adversary to an action by the generator.

For an adversary strategy $A$ and generator strategy $G$, the transcript $T(A,G)$ of interaction between $A$ and $G$ is now the infinite sequence $(x_t, y_t, a_t, z_t)_{t \in \N}$, where
\begin{align*}
    &x_1 = A(\emptyset) \\
    &y_1 = G(x_1) \\
    &a_1 = A(x_1, y_1) \\
    &z_1 = G(x_1, y_1, a_1) \\
    &x_2 = A(x_1, y_1, a_1, z_1) \\
    &y_2 = G(x_1, y_1, a_1, z_1, x_2) \\
    &a_2 = A(x_1, y_1, a_1, z_1, x_2, y_2) \\
    &z_2 = G(x_1, y_1, a_1, z_1, x_2, y_2, a_2) \\
    &\vdots
\end{align*}

We can now state the definition for uniform generation with feedback.

\begin{definition}[Uniform Generation with Feedback]
    \label{def:uniform-generation-with-feedback}
     A collection $\mcC$ can be uniformly generated from with feedback if there exists a generator strategy $G$ and a constant $t^*=t^*(\mcC)$, such that for every language $K \in \mcC$ and for every adversary strategy $A$ consistent with $K$, in the transcript $T(A,G)$, $z_t \in K \setminus S_t$ for every $t$ satisfying $|S_t| \ge t^*$.
\end{definition}

\paragraph{Consistent Languages:} For a collection $\mcC$ of languages, generator strategy $G$, and adversary strategy $A$ consistent with some language $K \in \mcC$, consider the transcript $T=T(A,G)$. For $r \in \N$, we say that a language $L \in \mcC$ is consistent with $T$ upto round $r$ if (1) $x_t \in L$ for every $t \le r$, and (2) $a_t=\yes$ if $y_t \in L$, and $\no$ otherwise for every $t \le r$. Let $\mcC_r(T)$ denote the subset of $\mcC$ comprising of languages consistent with $T$ upto round $r$. Note that $K \in \mcC_r(T)$ for every $r \in \N$, and hence $\mcC_r(T)$ is never empty.

\paragraph{Effective Intersection:} The definition of the effective intersection $E_r(T)$ at round $r$ remains the same as given in \eqref{eqn:effective-intersection}.

With these changes to the notions of adversary and generator strategies, the $\gf$ dimension is defined similarly as in \Cref{def:gnf-dimension}:
\begin{definition}[$\gf$ dimension]
    \label{def:gf-dimension}
    The $\gf$ dimension of a collection $\mcC$ is the supremum over $d \in \N$, for $d$ satisfying the following property: for every generator strategy $G$, there exists a language $K \in \mcC$ and an adversary strategy $A$ consistent with $K$, such that in the transcript $T=T(A,G)$, there exists a finite $r \ge d$ where $|S_r| \ge d$ and $E_r(T) = \emptyset$. 
\end{definition}

\begin{remark}
    One reason behind stating our definition of $\gf$ dimension directly in terms of adversary/generator strategies, and not in terms of a property about intersection of languages containing a set of strings as in \cite{raman2024generation}, is because of the ability of the generator to make membership queries on previously unseen strings. Essentially, there is additional complexity in controlling the intersection of consistent languages, which may drastically change based on queries that the generator may ask.
\end{remark}

We showed that the $\gnf$ dimension characterizes uniform generation (without feedback), by relating it to the closure dimension. Here, we directly argue that the $\gf$ dimension characterizes uniform generation with feedback. This follows from the following two lemmas:

\begin{lemma}[$\gf$ Dimension Upper Bound]
    If the $\gf$ dimension of a collection $\mcC$ is finite, it can be uniformly generated with feedback.
\end{lemma}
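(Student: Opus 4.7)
The plan is to set $D = \gf(\mcC)$ and exploit the failure of the $\gf$-dimension defining property at $d = D+1$: there must exist a generator strategy $G^*$ such that for every target language $K \in \mcC$ and every adversary strategy $A$ consistent with $K$, the transcript $T(A, G^*)$ satisfies $E_r(T) \ne \emptyset$ for every round $r \ge D+1$ with $|S_r| \ge D+1$. I will then convert $G^*$ into a new generator strategy $\tilde G$ witnessing uniform generation with feedback with threshold $t^* = D+1$.

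To construct $\tilde G$, I will have it mirror $G^*$ in its queries but override its outputs. At each step $t$, $\tilde G$ issues the same query $y_t$ that $G^*$ would issue on the current history (which now records $\tilde G$'s past outputs in place of $G^*$'s), and then outputs any string $z_t \in E_t(T(A, \tilde G))$, falling back to an arbitrary string if this set is empty. If I can show that $E_t(T(A, \tilde G))$ is non-empty whenever $|S_t| \ge D+1$, then since $K \in \mcC_t(T(A, \tilde G))$ when $A$ is consistent with $K$, any such $z_t$ lies in $K \setminus S_t$, which is exactly the requirement of \Cref{def:uniform-generation-with-feedback}.

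The core step, and the one I expect to be the main obstacle, is to verify this non-emptiness. The concern is that because $\tilde G$'s outputs differ from $G^*$'s, the adversary's subsequent enumerated strings can diverge, so $T(A, \tilde G)$ and $T(A, G^*)$ may look entirely different, and the non-emptiness property stated for $G^*$ does not transfer directly. To bridge this, I will use a simulation argument: given any adversary $A$ consistent with $K$, I will build an adversary $A'$ consistent with $K$ such that $A'$'s play against $G^*$ produces exactly the same sequence $(x_s, y_s, a_s)_{s \ge 1}$ as $A$'s play against $\tilde G$. The key observation is that $\tilde G$'s choice of $z_t$ is a deterministic function of the $x,y,a$ history through time $t$; thus $A'$ can, upon receiving each of $G^*$'s outputs, discard it, internally recompute what $\tilde G$ would have produced on the simulated history, and respond exactly as $A$ would have responded to that simulated history. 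A short induction on $t$ then shows the two transcripts agree on all $x$'s, $y$'s, and $a$'s, so in particular $A'$'s $x$-sequence is an enumeration of $K$ and $A'$ is consistent with $K$.

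To finish, since $E_r(T)$ depends only on the $(x_s, y_s, a_s)$ components of $T$, the simulation yields $E_r(T(A, \tilde G)) = E_r(T(A', G^*))$, which is non-empty for every $r \ge D+1$ with $|S_r| \ge D+1$ by the property of $G^*$ applied to $A'$. Since $|S_t| \le t$ always, the condition $|S_t| \ge D+1$ forces $t \ge D+1$ automatically, so $\tilde G$'s outputs $z_t$ lie in $K \setminus S_t$ for every $t$ with $|S_t| \ge D+1$, giving uniform generation with feedback with $t^*(\mcC) = D+1$.
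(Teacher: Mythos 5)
Your starting point matches the paper's --- negate the $\gf$-dimension property at $D+1$ to extract a generator strategy $G^*$ with the effective-intersection non-emptiness guarantee, then use it to build a uniform generator with $t^* = D+1$ --- but you are more careful than the paper: you correctly notice that $G^*$ is only guaranteed to keep $E_r$ non-empty, not to actually output strings from $E_r$, and that rerouting its outputs changes the adversary's downstream actions and hence the transcript. Your plan to repair this via a simulated adversary $A'$, together with the key observation that $E_r(T)$ depends only on the $(x,y,a)$-portion of $T$, is exactly the right tool, and it makes rigorous a step the paper leaves implicit.

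However, the way you define $\tilde G$'s queries breaks the claimed induction. You set $y_t = G^*$ applied to the history recording $\tilde G$'s past outputs $\tilde z_s$; but in the transcript $T(A', G^*)$ that $A'$ produces against the real $G^*$, the $z$-slots hold $G^*$'s \emph{own} outputs $z_s'$, and $G^*$'s actual query at time $t \ge 2$ is $G^*(x_1, y_1, a_1, z_1', \ldots, x_t)$. Once $z_1' \ne \tilde z_1$, nothing forces this to equal your $y_2 = G^*(x_1, y_1, a_1, \tilde z_1, x_2)$, so the $y$-sequences of $T(A, \tilde G)$ and $T(A', G^*)$ can diverge from time $2$ onward, and the claim that the two transcripts agree on all $x$'s, $y$'s, $a$'s does not go through. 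The fix is small: have $\tilde G$ internally simulate the transcript that $G^*$ would see against the same $(x_s, a_s)$-sequence --- maintaining $G^*$'s own simulated outputs $z_s^{G^*}$ --- and set $y_t = G^*(x_1, y_1, a_1, z_1^{G^*}, \ldots, x_t)$. Then $G^*$'s actual outputs in $T(A', G^*)$ coincide with $z_s^{G^*}$ step by step, the $y$-sequences agree, and the induction closes; the rest of your argument is unchanged. (Equivalently, observe first that a deterministic generator can reconstruct its own past queries and outputs from the $(x,a)$-history, so one may assume without loss of generality that $G^*$'s actions depend only on that history, after which your definition of $\tilde G$ already coincides with this fix.)
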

\begin{proof}
    Suppose the $\gf$ dimension of $\mcC$ is $d < \infty$. This means that for every $d' > d$, the property in \Cref{def:gf-dimension} is \textit{not} satisfied for $d'$. In particular, substituting $d'=d+1$, we get that: there exists a generator strategy $G$, such that for every language $K \in \mcC$ and adversary strategy $A$ consistent with $K$, in the transcript $T=T(A,G)$, for every $r \ge d+1$, either $|S_r| < d+1$ or $E_r(T)\not=\emptyset$. But note that any adversary strategy consistent with $K$ must eventually satisfy $|S_{r}| \ge d+1$ for all large enough $r \ge d+1$---this follows from the requirement that the adversary must enumerate all the strings in $K$. So, fix the first such $r' \ge d+1$ where $|S_{r'}| = d+1$. Then, for all $r \ge r'$, this generator strategy satisfies that $E_r(T) \neq \emptyset$, which means that there exists a string in $K \setminus S_r$ that the generator can generate. Thus, the collection $\mcC$ can be uniformly generated with feedback by this generator with $t^*(\mcC)=d+1$.
\end{proof}

\begin{lemma}[$\gf$ Dimension Lower Bound]
    If the $\gf$ dimension of a collection $\mcC$ is infinite, it cannot be uniformly generated with feedback.
\end{lemma}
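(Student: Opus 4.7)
The plan is to prove the contrapositive by a standard indistinguishability argument: an adversary can impersonate two different target languages through the same transcript prefix, so the generator's single output at a given round cannot simultaneously be valid for both.

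Formally, I would assume for contradiction that $\mcC$ can be uniformly generated with feedback, witnessed by some generator strategy $G$ and constant $t^* = t^*(\mcC)$. Since the $\gf$ dimension of $\mcC$ is infinite, the defining property holds in particular for $d = t^*$. Applying it to $G$, I obtain a language $K \in \mcC$ and an adversary strategy $A$ consistent with $K$ so that in the transcript $T = T(A,G)$, there is a finite round $r \ge t^*$ with $|S_r| \ge t^*$ and $E_r(T) = \emptyset$. By the assumed uniform generation guarantee, the generator's output at this round must satisfy $z_r \in K \setminus S_r$, so in particular $z_r \notin S_r$. Since $E_r(T) = \emptyset$ means $\bigcap_{L \in \mcC_r(T)} L \subseteq S_r$, there must exist some $L^* \in \mcC_r(T)$ with $z_r \notin L^*$.

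The key step is now to build a new adversary strategy $A^*$ consistent with $L^*$ that agrees with $A$ on every transcript prefix that actually occurs up to round $r$ in $T$. This is legitimate because $L^* \in \mcC_r(T)$ means (i) every input $x_t$ with $t \le r$ already lies in $L^*$, and (ii) every query response $a_t$ with $t \le r$ is truthful for $L^*$. After round $r$, I extend $A^*$ arbitrarily so that it enumerates the remainder of $L^*$ and answers every subsequent membership query according to $L^*$; this extension exists and makes $A^*$ consistent with $L^*$ globally. Since $G$ is deterministic and sees identical prefixes in $T$ and $T(A^*, G)$ through round $r$, it outputs the same $z_r$ in both transcripts. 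But $z_r \notin L^*$, so $z_r \notin L^* \setminus S_r$ even though $|S_r| \ge t^*$, contradicting the uniform generation guarantee for $G$ against $(L^*, A^*)$.

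I expect no serious obstacle here: the argument is a clean two-languages indistinguishability trick enabled directly by the definition of the $\gf$ dimension. The only point requiring a little care is the legitimacy of the adversary strategy $A^*$—specifically, that consistency of $L^*$ with the transcript $T$ up to round $r$ (encoded by $L^* \in \mcC_r(T)$) lets $A^*$ reproduce both the inputs and the query answers of $A$ on that prefix, and that $A^*$ can subsequently enumerate $L^*$ and answer queries truthfully to become a globally consistent adversary for $L^*$. Once this is set up, the uniform-generation hypothesis is applied to $A^*$ at the very same round $r$ and immediately fails, completing the contradiction.
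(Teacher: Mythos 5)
Your proof is correct and follows essentially the same route as the paper's: invoke the infinite $\gf$ dimension to produce a transcript $T(A,G)$ with $E_r(T)=\emptyset$ at a round $r$ with $|S_r|\ge t^*$, pick $L^*\in\mcC_r(T)$ excluding $z_r$, then switch to a consistent adversary $A^*$ for $L^*$ that agrees with $A$ through round $r$ so that determinism of $G$ forces the same $z_r\notin L^*\setminus S_r$. The only cosmetic difference is that the paper first picks some $d\ge t^*$ witnessing the dimension property and then notes $r\ge d\ge t^*$, whereas you apply the (monotone) property directly at $d=t^*$; both are fine.
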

\begin{proof}
    Suppose a generator strategy $G'$ claims to uniformly generate from languages in $\mcC$ (with feedback) as soon as it sees $t^*=t^*(\mcC)$ distinct strings. We will show that there is a language $L \in \mcC$, and an adversary strategy $A'$ consistent with $L$, such that in the transcript $T(A', G')$, the string generated by $G'$ at some time step $r$ where $|S_r| \ge t^*$ does not belong to $L \setminus S_r$. %
    This would violate the uniform generation guarantee of $G'$.

    Towards this, note that because the $\gf$ dimension of $\mcC$ is infinite, we can find some $d \ge t^*$, such that the following property holds: for every generator strategy $G$, there exists a language $K \in \mcC$ and an adversary strategy $A$ consistent with $K$, such that in the transcript $T=T(A,G)$, there exists a finite $r \ge d$ where $|S_r| \ge d$ and $E_r(T) = \emptyset$. So, let us choose the language $K \in \mcC$ and adversary strategy $A$ consistent with $K$ for the particular generator $G'$ from above. Then, we know that for some $r \ge d \ge t^*$, it is the case that $|S_r| \ge d \ge t^*$, and $E_r(T(A, G'))=\emptyset$. In particular, consider the collection $\mcC_r(T(A, G'))$---this collection is non-empty because $K$ belongs to it. Let $z_r$ be the string generated by $G'$ at time step $r$. If $z_r \in S_r$, $G'$ is not a valid generator. So, suppose that $z_r \notin S_r$. Because $E_r(T(A, G'))=\emptyset$, this means that
    \begin{align*}
        \bigcap_{L \in \mcC_r(T(A, G'))}L = S_r.
    \end{align*}
    Thus, we also have that $z_r \notin \bigcap_{L \in \mcC_r(T(A, G'))}L$. In particular, this means that there exists some $L \in \mcC_r(T(A, G'))$ such that $z_r \notin L$. But now, consider the adversary strategy $A'$, which makes the same actions as $A$ up until time step $r$, but from time step $r+1$ onward, continues to arbitrarily complete the enumeration of $L$, responding to any queries according to membership in $L$. Note that $A'$ is consistent with $L$, since $L \in \mcC_r(T(A, G'))$, which means that all the strings enumerated as well as the answers to any membership queries given by $A$ up until $r$ are consistent with every language in $\mcC_r(T(A, G'))$. Furthermore, the transcript $T(A', G')$ is identical to $T(A, G')$ in the first $r$ rounds. However, %
    $z_r \notin L \setminus S_r$. 
    Thus, we have obtained a language $L$ and an adversarial strategy $A'$ consistent with $L$, such that at a time $r \ge t^*$ where $|S_r| \ge t^*$, in the transcript $T(A', G')$, the string $z_r$ generated by $G'$ does not belong to $L \setminus S_r$. This contradicts the uniform generation guarantee of $G'$.
\end{proof}

Recall that \Cref{eg:gf-but-no-gnf} demonstrated a collection where uniform generation without feedback was not possible, but uniform generation with feedback was. We conclude this section with an example of a language collection that cannot be uniformly generated with or without feedback, but can still be non-uniformly generated.

\begin{example}
    \label{eg:L-exclude-i}
    Let $\Sigma^* = \Z$. For any $i \in \Z$, let $L_i = \Z \setminus \{i\}$, and for any $i, j \in \Z$ such that $i < j$, let $L_{ij}=\Z \setminus \{i,j\}$. Consider the (countable) collection $\mcC=\bigcup_{i \in \Z}L_i \cup \bigcup_{i < j \in \Z}L_{ij}$.
We will argue that $\mcC$ cannot be uniformly generated, with or without feedback, but can be non-uniformly generated.

$\mcC$ cannot be uniformly generated without feedback because $\mcC$ has infinite closure dimension---this can be seen, for example, by observing that for any $d \ge 1$, the intersection of languages in $\mcC$ containing the set $\{1,2,\ldots,d\}$ is exactly $\{1,2,\ldots,d\}$. 

$\mcC$ cannot be uniformly generated even \textit{with} feedback, because $\mcC$ has infinite $\gf$ dimension. To see this, fix any $d \ge 1$, and generator strategy $G$. Consider the adversary strategy $A$ that operates as follows: let $x_t$ be the example input by the adversary, and $y_t$ be the example queried by the generator at time step $t$. Furthermore, let $S_t$ and $Q_t$ denote the sets of distinct elements in $x_1,\dots,x_t$ and $y_1,\dots,y_t$ respectively. For $t=1,\dots,d$, the input $x_t$ is decided as follows: if $y_{t-1} \in S_{t-1}$, then $x_t$ is a distinct number not contained in $S_{t-1}$. Else if $y_{t-1} \notin S_{t-1}$, then $x_{t}=y_t$. Next, for $t=1,\dots,d-1$, the answers $a_t$ to all queries are $\yes$. However, the answer $a_d$ to the last query is decided based on $y_d$: if $y_d \in S_d$, then $a_d=\yes$, else if $y_d \notin S_d$, $a_d = \no$. This specifies the adversary strategy up until $d$ rounds. Note that $|S_d|=d$ by construction. We must now choose a language $K \in \mcC$ consistent with this strategy, and then specify how the adversary strategy continues beyond round $d$. 
    
First, consider the case that the adversary answered the last query with a $\yes$. Observe in this case that $Q_d \subseteq S_d$, and recall that we answered all queries in $Q_d$ with a $\yes$. We choose $K$ to be any language in $\mcC$ that contains $S_d$, and the adversary $A$ continues to enumerate $K$ (and answers any queries according to $K$) beyond round $d$. Then, in the transcript $T=T(A,G)$ generated, $\mcC_d(T)$ comprises of all the languages in $\mcC$ that do not exclude any element in $S_d$. Furthermore, the intersection of all these languages is exactly $S_d$, because any element outside of $S_d$ is excluded by at least one language in $\mcC_d(T)$. This means that $E_d(T)=\emptyset$.

Next, consider the case that the adversary answered the last query with a $\no$. This means that the last query $y_d$ that the generator issued does not belong to $S_d$. Note however that $Q_{d-1} \subseteq S_d$, and $A$ answered all the queries in $Q_{d-1}$ with a $\yes$. Hence, we must choose a language in $\mcC$ that contains all of $S_d$, but excludes $y_d$. Choose the target language $K$ to be any $L_{ij} \in \mcC$ that contains $S_d$, and for which $i=y_d$. The adversary continues to enumerate $K$ (and answers any queries according to $K$) beyond round $d$. Then, in the transcript $T=T(A,G)$ generated, $\mcC_d(T)$ comprises of all the languages in $\mcC$ that contain $S_d$, but exclude $y_d$. In particular, $\mcC_d(T)$ comprises of all languages $L_{ij}$ that contain $S_d$, for which one of $i$ or $j$ is equal to $y_d$, and the other is any number in $\Z \setminus S_d$. This means that the intersection of all languages in $\mcC_d(T)$ is again exactly $S_d$, which means that $E_d(T)=\emptyset$.

Since the above argument holds for any $d \ge 1$, the $\gf$ dimension of $\mcC$ is infinite. 

Finally, we argue that $\mcC$ can be non-uniformly generated (without any feedback). This follows from \Cref{thm:non-uniform-ub}, and by the fact that $\mcC$ is countable. Even more directly, consider the generator that simply generates the sequence $0,-1,1,-2,2,-3,3,-4,4,\ldots$, while always skipping any elements that have shown up in the input. Observe that there is a finite time step $t(\mcC, K)$ (which is $O(|i|)$ if the true language $K$ is $L_i$, or $O(\max(|i|,|j|))$ if it is $L_{ij}$) beyond which the generator is sure to have ``skipped past'' the excluded elements in $K$, independent of its order of enumeration. Thus, such a generator non-uniformly generates from the collection.

\end{example}

\section*{Acknowledgements}
The authors are supported by Moses Charikar's Simons Investigator Award. Chirag is additionally also supported by Gregory Valiant's Simons Investigator Award.

\bibliographystyle{plainnat}
\bibliography{references}

\appendix
    \section{Comparison to \cite{kalavasis2024characterizations}}
    \label{sec:kmv-characterizations-comparison}
    
    To aid the reader, we map some of our results to the results in \cite{kalavasis2024characterizations}. 
    
    \paragraph{Exhaustive Generation:} Our result showing that Weak Angluin's Condition with Existence (\Cref{prop:exhaustive-characterization-necessary}) is necessary for exhaustive generation is comparable to the similar result in Lemma 2.11 in \cite{kalavasis2024characterizations}. Our result showing the sufficiency of Weak Angluin's Condition with Existence (\Cref{prop:exhaustive-characterization-sufficient-function}) for exhaustive generation is comparable to Lemma 2.9 in \cite{kalavasis2024characterizations}. Our result showing the sufficiency of Weak Angluin's Condition with Enumeration (\Cref{prop:exhaustive-characterization-sufficient-function}) for exhaustive generation with only membership queries is comparable to Lemma 2.10 in \cite{kalavasis2024characterizations}.
    
    \paragraph{Generation with (Exact) Breadth:} Our result showing that Angluin's Condition with Existence (\Cref{prop:generation-with-breadth-necessary-condition}) is necessary for generation with (exact) breadth is comparable to Lemma 2.1 in \cite{kalavasis2024characterizations}.

\end{document}